\providecommand{\fin}{\color{black}}
\newtheorem{problem}[theorem]{Problem}
\begin{document}

% Page heads
\markboth{C.~J.~Hillar and L.-H.~Lim}{Most Tensor Problems are NP-Hard}

% Title portion
\title{Most Tensor Problems are NP-Hard}
\author{CHRISTOPHER~J.~HILLAR
\affil{Mathematical Sciences Research Institute}
LEK-HENG~LIM
\affil{University of Chicago}}

\begin{abstract}
%We study the computational tractability of some core problems in numerical multilinear algebra. 
We prove that multilinear (tensor) analogues of many efficiently computable problems in numerical linear algebra are NP-hard. 
Our list here includes: determining the feasibility of a system of bilinear equations, deciding whether a $3$-tensor possesses a given  eigenvalue, singular value, or  spectral norm; approximating an eigenvalue, eigenvector, singular vector, or the spectral norm; and determining the rank or best rank-$1$ approximation of a $3$-tensor. 
%some of these problems have no polynomial time approximation schemes, and 
%at least one enumerative version is \#P-hard. We also 
Furthermore, we show that restricting these problems to symmetric tensors does not alleviate their NP-hardness.  We also explain how deciding nonnegative definiteness of a symmetric $4$-tensor is NP-hard and how computing the combinatorial hyperdeterminant of a $4$-tensor is NP-, \#P-, and VNP-hard.  We shall argue that our results provide another view of the boundary separating the computational tractability of linear/convex problems from the intractability of nonlinear/nonconvex ones.
\end{abstract}

\category{G.1.3}{Numerical Analysis}{Numerical Linear Algebra}

\terms{Tensors, Decidability, Complexity, Approximability}

\keywords{Numerical multilinear algebra, tensor rank, tensor eigenvalue, tensor singular value, tensor spectral norm,  system of multilinear equations, hyperdeterminants, symmetric tensors, nonnegative definite tensors,  bivariate matrix polynomials,  NP-hardness, \#P-hardness,  VNP-hardness, undecidability,  polynomial time approximation schemes}

\acmformat{Hillar, C.~J., and Lim, L.-H.  2012. Most tensor problems are NP-hard}

\begin{bottomstuff}
Hillar was partially supported by an NSA Young Investigators Grant and an NSF All-Institutes Postdoctoral Fellowship administered by the Mathematical Sciences Research Institute through its core grant DMS-0441170. Lim was partially supported by an NSF CAREER Award DMS-1057064, an NSF Collaborative Research Grant DMS 1209136, and an AFOSR Young Investigator Award FA9550-13-1-0133.

Author's addresses: C.~J.~Hillar, Mathematical Sciences Research Institute, Berkeley, CA 94720, \url{chillar@msri.org}; L.-H.~Lim, Computational and Applied Mathematics Initiative,
Department of Statistics, University of Chicago, Chicago, IL 60637, \url{lekheng@galton.uchicago.edu}.
\end{bottomstuff}
\maketitle

\section{Introduction}

Frequently a problem in science or engineering can be reduced to solving a linear (matrix) system of equations and inequalities.  Other times, solutions involve the extraction of certain quantities from matrices such as eigenvectors or singular values.   In computer vision, for instance, segmentations of a digital picture along object boundaries can be found by computing the top eigenvectors of a certain matrix produced from the image \cite{ShiMalik}. 
Another common problem formulation is to find low-rank matrix approximations that explain a given two-dimensional array of data, accomplished, as is now standard, by zeroing the smallest singular values in a singular value decomposition of the array \cite{SVD1,SVD2}.  In general, efficient and reliable routines computing answers to these and similar problems have been a workhorse for real-world applications of computation.

Recently, there has been a flurry of work on multilinear analogues to the basic problems of linear algebra. These ``tensor methods" have found applications in many fields, including approximation algorithms \cite{DKKV,BV}, computational biology \cite{Cart}, computer graphics \cite{VT3}, computer vision \cite{NTF,VT2}, data analysis \cite{CB}, graph theory \cite{F,FW}, neuroimaging \cite{SchultzS}, pattern recognition \cite{VT1}, phylogenetics \cite{AllmanR}, quantum computing \cite{miyake1}, scientific computing \cite{BM}, signal processing \cite{Como94:SP,Como04:ieeesp,KR1}, spectroscopy \cite{SBG}, and wireless communication \cite{SidiBG00:ieeesp}, among other areas. Thus, tensor generalizations to the standard algorithms of linear algebra have the potential to substantially enlarge the arsenal of core tools in numerical computation. 

The main results of this paper, however, support the view that tensor problems are almost invariably computationally hard.  Indeed, we shall prove that many naturally occurring problems for $3$-tensors are NP-hard; that is, solutions to the hardest problems in NP can be found by answering questions about 3-tensors.  A full list of the problems we study can be found in Table~\ref{tab:tensor_complexity} below.  Since we deal with mathematical questions over fields (such as the real numbers $\mathbb{R}$), algorithmic complexity is a somewhat subtle notion.  Our perspective here will be the Turing model of computation \cite{Turing} and the Cook--Karp--Levin model of complexity involving NP-hard \cite{Kn1,Kn2} and NP-complete problems \cite{Cook,Karp,Levin}, as opposed to other computational models \cite{Valiant,BSS,weihrauch2000}.  We describe our framework in Subsection \ref{comp_complexity}  along with a comparison to other models.

\renewcommand\arraystretch{1.5}
\begin{table}%
\tbl{Tractability of Tensor Problems\label{tab:tensor_complexity}}{%
\begin{tabular}{l|l}
\textbf{Problem} & \textbf{Complexity}  \\\hline 
Bivariate Matrix Functions over $\mathbb{R}$, $\mathbb{C}$ & Undecidable (Proposition~\ref{prop:biv}) \\  \hline
Bilinear System over $\mathbb{R}$, $\mathbb{C}$ & NP-hard (Theorems~\ref{QFHard},~\ref{zerosingvaluethm},~\ref{tri_quad_feas_thm}) \\  \hline
Eigenvalue over $\mathbb{R}$  & NP-hard (Theorem~\ref{TEvalueNP}) \\\hline 
Approximating Eigenvector over $\mathbb{R}$ & NP-hard (Theorem~\ref{approx_evect}) \\\hline
Symmetric Eigenvalue  over $\mathbb{R}$ & NP-hard (Theorem~\ref{symm_evalue_nphard_thm})\\  \hline  
Approximating Symmetric Eigenvalue over $\mathbb{R}$ & NP-hard (Theorem~\ref{cor:npzpp}) \\\hline 
Singular Value over $\mathbb{R}$, $\mathbb{C}$ & NP-hard (Theorem~\ref{sing_nphard_thm}) \\\hline 
Symmetric Singular Value over $\mathbb{R}$ & NP-hard (Theorem~\ref{thm:symmresults}) \\\hline 
Approximating Singular Vector over $\mathbb{R}$, $\mathbb{C}$  & NP-hard (Theorem~\ref{approx_singvect}) \\\hline 
Spectral Norm over $\mathbb R$ & NP-hard (Theorem~\ref{spec_norm_nphard_thm}) \\\hline 
Symmetric Spectral Norm  over $\mathbb R$ & NP-hard (Theorem~\ref{thm:symmresults}) \\\hline 
Approximating Spectral Norm  over $\mathbb R$ & NP-hard (Theorem~\ref{approx_spec_nphard_thm})  \\ \hline   
Nonnegative Definiteness & NP-hard  (Theorem~\ref{thm:nonneg}) \\ \hline
Best Rank-$1$ Approximation & NP-hard (Theorem~\ref{rank1approx_nphard_thm}) \\ \hline
Best Symmetric Rank-$1$ Approximation & NP-hard (Theorem~\ref{thm:symmresults}) \\ \hline
Rank over $\mathbb{R}$ or $\mathbb{C}$ & NP-hard (Theorem~\ref{rank_nphard_thm})  \\ \hline  
Enumerating Eigenvectors over $\mathbb{R}$ & \#P-hard (Corollary~\ref{sharp_p_3col}) \\\hline
Combinatorial Hyperdeterminant & NP-, \#P-, VNP-hard (Theorems~\ref{thm:bar} , \ref{thm:gur}, Corollary~\ref{cor:gur})  \\ \hline\hline
%Enumerating Singular Vectors over $\mathbb{R}$, $\mathbb{C}$ & \#P-complete (Theorem~\ref{sharp_p_sing}) \\\hline
%Enumerating Rank-Revealing Decompositions over $\mathbb{R}$, $\mathbb{C}$ & \#P-complete (Theorem~\ref{sharp_p_rank}) \\\hline\hline
Geometric  Hyperdeterminant  & Conjectures~\ref{hyp_conj},~\ref{hyp_approx_conjec} \\  \hline    
Symmetric Rank  & Conjecture~\ref{sym_rank_conj} \\  \hline
Bilinear Programming  & Conjecture~\ref{prob:BP} \\  \hline
Bilinear Least Squares  & Conjecture~\ref{prob:BLS} \\  \hline
\end{tabular}}
\begin{tabnote}%
\Note{Note:}{Except for positive definiteness and the combinatorial hyperdeterminant, which apply to $4$-tensors, all problems refer to the $3$-tensor case.}
\end{tabnote}
\end{table}

One way to interpret these findings is that $3$-tensor problems form a boundary separating classes of tractable linear/convex problems from intractable nonlinear/nonconvex ones. More specifically, linear algebra is concerned with (inverting) vector-valued functions that are locally of the form $f(\mathbf{x}) = \mathbf{b} + A\mathbf{x}$; while convex analysis deals with (minimizing) scalar-valued functions that are locally approximated by $f(\mathbf{x}) = c + \mathbf{b}^\top \mathbf{x} + \mathbf{x}^\top A\mathbf{x}$ with $A$ positive definite.  These functions involve tensors of order $0$, $1$, and $2$: $c \in \mathbb{R}$, $\mathbf{b} \in \mathbb{R}^n$, and $A \in \mathbb{R}^{n \times n}$.  However, as soon as we move on to bilinear vector-valued or trilinear real-valued functions, we invariably come upon $3$-tensors $\mathcal{A} \in \mathbb{R}^{n \times n \times n}$ and the NP-hardness associated with inferring properties of them.

The primary audience for this article are numerical analysts and computational algebraists, although we hope it will be of interest to users of tensor methods in various communities. Parts of our exposition contain standard material (e.g., complexity theory to computer scientists, hyperdeterminants to algebraic geometers, KKT conditions to optimization theorists, etc.), but to appeal to the widest possible audience at the intersection of computer science, linear and multilinear algebra, algebraic geometry, numerical analysis, and optimization, we have keep our discussion as self-contained as possible.  A side contribution is a useful framework for incorporating features of computation over $\mathbb R$ and $\mathbb C$ with classical tools and models of algorithmic complexity involving Turing machines that we think is unlike any existing treatments \cite{BSS,BCSS,HochbaumShanti,Vavasis}.

\subsection{Tensors}

We begin by first defining our basic mathematical objects.   Fix a field $\mathbb{F}$, which for us will be either the rationals $\mathbb{Q}$, the reals $\mathbb{R}$, or the complex numbers $\mathbb{C}$. Also, let $l$, $m$, and $n$ be positive integers.  For the purposes of this article, a $3$-\textit{tensor} $\mathcal{A}$ over $\mathbb{F}$ is an $l\times m\times n$ array of elements of $\mathbb{F}$:
\begin{equation}\label{eq:tensor}
\mathcal{A}=\llbracket a_{ijk}\rrbracket_{i,j,k=1}^{l,m,n}\in\mathbb{F}
^{l\times m\times n}.
\end{equation}
These objects are natural multilinear generalizations of matrices in the following way.

For any positive integer $d$, let $\mathbf{e}_1,\dots,\mathbf{e}_d$ denote the standard basis\footnote{Formally, $\mathbf{e}_i$ is the vector in $\mathbb{F}^d$ with a $1$ in the $i$th coordinate and zeroes everywhere else. In this article, vectors in $\mathbb{F}^n$ will always be column-vectors.} in the $\mathbb{F}$-vector space $\mathbb{F}^d$.  A bilinear function $f: \mathbb{F}^m \times \mathbb{F}^n \to \mathbb{F}$ can be encoded by a matrix $A =[a_{ij}]_{i,j=1}^{m,n} \in \mathbb{F}^{m \times n}$, in which the entry $a_{ij}$ records the value of $f(\mathbf{e}_i,\mathbf{e}_j) \in \mathbb{F}$.  By linearity in each coordinate, specifying $A$ determines the values of $f$ on all of $\mathbb{F}^m \times \mathbb{F}^n$; in fact, we have $f(\mathbf{u},\mathbf{v}) = \mathbf{u}^{\top}A\mathbf{v}$ for any vectors $\mathbf{u} \in \mathbb{F}^m$ and $\mathbf{v}  \in \mathbb{F}^n$.  Thus,  matrices both encode $2$-dimensional arrays of numbers and specify all bilinear functions.  Notice also that if $m=n$ and $A = A^{\top}$ is symmetric, then \[f(\mathbf{u},\mathbf{v}) = \mathbf{u}^{\top}A\mathbf{v} = (\mathbf{u}^{\top}A\mathbf{v})^{\top} = \mathbf{v}^{\top}A^{\top}\mathbf{u} = \mathbf{v}^{\top}A\mathbf{u}  = f(\mathbf{v},\mathbf{u}).\]  Thus, symmetric matrices are bilinear maps invariant under coordinate exchange.

These notions generalize: a $3$-tensor is a trilinear function $f: \mathbb{F}^l \times \mathbb{F}^m \times \mathbb{F}^n \to \mathbb{F}$ which has a coordinate representation given by a \textit{hypermatrix}\footnote{We will not use the term hypermatrix but will simply regard a tensor as synonymous with its coordinate representation. See \cite{hla} for more details.} $\mathcal{A}$ as in \eqref{eq:tensor}.  The subscripts and superscripts in \eqref{eq:tensor} will be dropped whenever the range of $i,j,k$ is obvious or unimportant.
Also, a $3$-tensor $\llbracket a_{ijk}\rrbracket_{i,j,k=1}^{n,n,n}\in\mathbb{F}
^{n\times n\times n}$ is \textit{symmetric} if
\begin{equation}
a_{ijk} = a_{ikj} = a_{jik} = a_{jki} = a_{kij} = a_{kji}. \label{eq:symm}
\end{equation}
These  are coordinate representations of trilinear maps $f: \mathbb{F}^n \times \mathbb{F}^n \times \mathbb{F}^n \to \mathbb{F}$ with
\[
f(\mathbf{u},\mathbf{v},\mathbf{w})=f(\mathbf{u},\mathbf{w},\mathbf{v})=f(\mathbf{v},\mathbf{u},\mathbf{w})=f(\mathbf{v},\mathbf{w},\mathbf{u})=f(\mathbf{w},\mathbf{u},\mathbf{v})=f(\mathbf{w},\mathbf{v},\mathbf{u}).
\]
%that is invariant under any permutation of the coordinates.
We focus here on $3$-tensors mainly for expositional purposes.  One exception is the problem of deciding positive definiteness of a tensor, a notion  nontrivial only in even orders.
%(we will be looking at symmetric $4$-tensors $\llbracket a_{ijkl}\rrbracket_{i,j,k,l=1}^{n,n,n,n}\in\mathbb{F}^{n\times n\times n\times n}$).

When $\mathbb{F} = \mathbb{C}$, one may argue that a generalization of the notion of Hermitian or self-adjoint matrices would be more appropriate than that of symmetric matrices. For $3$-tensors, such ``self-adjointness"  depends on a choice of a trilinear form, which might be natural in certain applications \cite{WG}. Our complexity results for symmetric tensors apply as long as the chosen notion reduces to \eqref{eq:symm} for $\mathcal{A} \in \mathbb{R}^{n\times n\times n}$.   % \subseteq \mathbb{C}^{n \times n\times n}$. %A consequence is that the NP-hardness results here would extend to other notions of self-adjointness for $3$-tensors.

\subsection{Tensor Eigenvalue}

We now explain in detail the tensor eigenvalue problem since it is the simplest multilinear generalization. We shall also use the problem  to illustrate many of the concepts that arise when studying other, more difficult, tensor problems.  The basic notions for eigenvalues of tensors were introduced independently in \cite{L2} and \cite{Qi}, with more developments appearing in \cite{NiQi,Qi2}.  Additional theory from the perspective of toric algebraic geometry and intersection theory was provided recently in \cite{CartSturm}.  We will describe the ideas more formally in Section~\ref{sec:EVP}, but for now it suffices to say that the usual eigenvalues and eigenvectors of a matrix $A\in\mathbb{R}^{n\times n}$ are the stationary values and  points of its Rayleigh quotient, and this  view generalizes to higher order tensors. This gives, for example, an \textit{eigenvector} of a tensor $\mathcal{A} = \llbracket a_{ijk}\rrbracket_{i,j,k=1}^{n,n,n}\in\mathbb{F}
^{n\times n\times n}$ as a nonzero vector $\mathbf{x}  = [x_1,\dots,x_n]^{\top} \in \mathbb{F}^n$ satisfying:
\begin{equation}\label{l2eig}
\sum_{i,j=1}^{n}a_{ijk}x_{i}x_{j}=\lambda x_{k},\quad k=1,\dots,n,
\end{equation}
for some $\lambda \in \mathbb{F}$, which is called an \textit{eigenvalue} of $\mathcal{A}$.  Notice that if $(\lambda,\mathbf{x})$ is an eigenpair, then so is $(t\lambda,t\mathbf{x})$ for any $t \neq 0$; thus, eigenpairs are more naturally defined projectively.

As in the matrix case, generic or ``random" tensors over $\mathbb{F} = \mathbb{C}$ have a finite number of eigenvalues and eigenvectors (up to this scaling equivalence), although their count is exponential in $n$.  Still, it is possible for a tensor to have an infinite number of non-equivalent eigenvalues, but in that case they  comprise a cofinite set of complex numbers.  Another important fact is that over the reals ($\mathbb{F} = \mathbb{R}$), every $3$-tensor has a real eigenpair.  These results and more can be found in \cite{CartSturm}.   The following problem is natural for applications.

\begin{problem}\label{TensorEvalue}
Given $\mathcal{A}\in\mathbb{F}^{n\times n\times n}$, find $(\lambda,\mathbf{x}) \in \mathbb{F} \times \mathbb{F}^n$ with $\mathbf{x} \neq \mathbf{0}$ satisfying \eqref{l2eig}.
\end{problem}

%\begin{problem}\label{TensorEvector}
%Given a tensor $\mathcal{A}\in\mathbb{F}^{n\times
%n\times n}$ and a known eigenvalue $\lambda\in\mathbb{F}$,
%find a corresponding eigenvector $\mathbf{x}\in\mathbb{F}^{n}$ in the sense of \eqref{l2eig}.
%\end{problem}

We first discuss the computability of this problem.  When the entries of the tensor $\mathcal{A}$ are real numbers, there is an effective procedure that will output a finite presentation of all real eigenpairs.  A good reference for such methods in real algebraic geometry is \cite{bochnak1998}, and an overview of recent intersections between mathematical logic and algebraic geometry, more generally, can be found in \cite{haskell2000}.  Over $\mathbb C$, this problem can be tackled directly by computing a Gr\"obner basis with Buchberger's algorithm \cite{buchberger1970} since an eigenpair is a solution to a system of polynomial equations over an algebraically closed field (e.g.\ \cite[Example 3.5]{CartSturm}).  Another approach is to work with Macaulay matrices of multivariate resultants \cite{NiQi}.   References for such techniques suitable for numerical analysts are \cite{CLO1,CLO2}.% and the undergraduate text \cite{CLO1}.

%Although computable over $\mathbb{R}$ and $\mathbb{C}$, Problem~\ref{TensorEvalue} over the rational numbers is undecidable. 
%\begin{theorem}\label{tensorQ_undec}
%Determining whether zero is an eigenvalue with rational eigenvector of a rational tensor is undecidable.
%\end{theorem}
%We give a proof of Theorem~\ref{tensorQ_undec} in Section~\ref{quadfeas}, but it will be a straightforward consequence of the seminal works \cite{Davis61,matij70,jones80,jones84} on Hilbert's $10$th Problem \cite{hilbert1902}.  In fact, we will show that three of the four problems in Table~\ref{tab:tensor_decidability} are undecidable.
%
%\begin{table}%
%\tbl{Decidability of Tensor Problems\label{tab:tensor_decidability}}{%
%\begin{tabular}{l|l} \textbf{Problem} & \textbf{Decidability}  \\\hline
%Bilinear System over $\mathbb{Q}$ & Undecidable (Theorems~\ref{QFQ_undec_thm},~\ref{TQFQundec}) \\  \hline
%Eigenvalue over $\mathbb{Q}$ & Undecidable (Theorem~\ref{tensorQ_undec}) \\\hline
%Singular Value over $\mathbb{Q}$ & Undecidable (Theorem~\ref{sing_value_Q_undec}) \\\hline  \hline
%Rank over $\mathbb{Q}$ & Conjecture~\ref{rankQ_conj} \\ \hline
%\end{tabular}}
%\begin{tabnote}%
%\Note{Note:}{All problems refer to the $3$-tensor case.}
%\end{tabnote}
%\end{table}

Even though solutions to tensor problems are computable, all known methods quickly become impractical as the tensors become larger (i.e., as $n$ grows).  In principle, this occurs because simply listing the output to Problem~\ref{TensorEvalue} is already prohibitive. It is natural, therefore, to ask for faster methods  checking whether a given $\lambda$ is an eigenvalue or approximating a single eigenpair.  We first analyze the following easier decision problem.

\begin{problem}[Tensor $\lambda$-eigenvalue]\label{TensorZeroEvalue}
Let $\mathbb{F} = \mathbb{R}$ or $\mathbb{C}$, and fix $\lambda \in \mathbb{Q}$.  Decide if $\lambda$ is an eigenvalue (with corresponding eigenvector in $\mathbb{F}^n$) of a tensor $\mathcal{A} \in \mathbb{Q}^{n \times n \times n}$.
\end{problem}

% When we speak simply of the \textit{tensor eigenvalue problem}, any number for $\lambda$ is .

Before explaining our results on Problem \ref{TensorZeroEvalue} and other tensor questions, we define the model of computational complexity that we shall utilize to study them.

\subsection{Computability and Complexity}\label{comp_complexity}
%do not have time to completely absorb formal definitions of a Turing machine or Karp reducibility but nevertheless 
We hope this article will be useful to casual users of computational complexity --- such as numerical analysts and optimization theorists --- who nevertheless desire to understand the tractability of their problems in light of modern complexity theory.  This section and the next provide a high-level overview for such an audience.  In addition, we also carve out a perspective for real computation within the Turing machine framework that we feel is easier to work with than those proposed in \cite{BSS,BCSS,HochbaumShanti,Vavasis}. For readers who have no particular interest in tensor problems, the remainder of our article may then be viewed as a series of instructive examples showing how one may deduce the tractability of a numerical computing problem using the rich collection of NP-complete combinatorial problems.

Computational complexity is usually specified on the following three levels. 

\begin{enumerate}[\upshape I.]
\item \textit{Model of Computation}: What are  inputs and outputs?  What is a computation?  For us, inputs will be rational numbers and outputs will be rational vectors or \textsc{yes}/\textsc{no} responses, and computations are performed on a \textbf{Turing Machine}. Alternatives for inputs include Turing computable numbers \cite{Turing,weihrauch2000} and real or complex numbers \cite{BSS,BCSS}. Alternatives for computation include the Pushdown Automaton \cite{Sipser}, the Blum--Shub--Smale Machine \cite{BSS,BCSS}, and the Quantum Turing Machine \cite{Deutsch}. Different models of computation can solve different tasks.

\item \textit{Model of Complexity}: What is the cost of a computation? In this article, we use \textbf{time complexity} measured in units of bit operations; i.e., the number of  \textsc{read}, \textsc{write}, \textsc{move}, and other tape-level instructions on bits. This is the same for the $\varepsilon$-accuracy complexity model\footnote{While the $\varepsilon$-accuracy complexity model is more realistic for numerical computations, it is not based on the IEEE floating-point standards \cite{kahan,overton}. On the other hand, a model that combines both the flexibility of the $\varepsilon$-accuracy complexity model and the reality of floating-point arithmetic would inevitably be enormously complicated \cite[Section~2.4]{Vavasis}.} \cite{HochbaumShanti}. In the Blum--Cucker--Shub--Smale (BCSS)  model, it is time complexity measured in units of arithmetic and branching operations involving inputs of real or complex numbers. In quantum computing, it is time complexity measured in units of unitary operations on qubits. There are yet other models of complexity that measure other types of computational costs. For example, complexity in the Valiant model is based on arithmetic circuit size.

\item \textit{Model of Reducibility}: Which problems do we consider equivalent in hardness? For us, it is the \textbf{Cook--Karp--Levin} (CKL) sense of reducibility \cite{Cook,Karp,Levin} and its corresponding problem classes: P, NP, NP-complete, NP-hard, etc. Reducibility in the BCSS model is essentially based on CKL. There is also reducibility in the Valiant sense, which applies to the aforementioned Valiant model and gives rise to the complexity classes $\mathit{VP}$ and $\mathit{VNP}$ \cite{Valiant,Burgisser}.
\end{enumerate}

Computability is a question to be answered in Level I, whereas difficulty is to be answered in Levels II and III. In Level II, we have restricted ourselves to time complexity since this is the most basic measure and it already reveals that tensor problems are hard.  In Level III, there is strictly speaking a subtle difference between the definition of reducibility by Cook \cite{Cook} and that by Karp and Levin \cite{Karp,Levin}. We define precisely our notion of reducibility in Section \ref{NPhardness}.

%The version that we use here will be the one due to Karp, which is based on the earlier definition of Cook and Levin.

Before describing our model more fully, we recall the well-known Blum--Cucker--Shub--Smale framework for studying complexity of real and complex computations \cite{BSS,BCSS}.  In this model, an input is a list of $n$ real or complex numbers, without regard to how they are represented.  In this case, algorithmic computation (essentially) corresponds to arithmetic and branching on equality using a finite number of states, and a measure of computational complexity is the number of these basic operations\footnote{To illustrate the difference between BCSS/CKL, consider the problem of deciding whether two integers $r$, $s$ multiply to give an integer $t$.  For BCSS, the time complexity is constant since one can compute $u := rs-t$ and check ``$u = 0?$" in constant time.  Under CKL, however, the problem has best-known time complexity of $N\log(N) 2^{O(\log^*{N})}$, where $N$ is the number of bits to specify $r$, $s$, and $t$ \cite{Furer,De}.} needed to solve a problem as a function of $n$.  The central message of our paper is that many problems in linear algebra that are efficiently solvable on a Turing machine become NP-hard in multilinear algebra.  Under the BCSS model, however, this distinction is not yet possible. For example, while it is well-known that the feasibility of a linear program is in $\mathit{P}$ under the traditional CKL notion of complexity \cite{Khachiyan}, the same problem studied within  BCSS is among the most daunting open problems in Mathematics (it is the 9th ``Smale Problem" \cite{smale19}). The BCSS model has nonetheless produced significant contributions to computational mathematics, especially to the theory of polynomial equation solving (e.g., see \cite{beltpardo} and the references therein).  

We now explain our model of computation.  All computations are assumed to be performed on a Turing machine \cite{Turing} with the standard notion of time complexity involving operations on bits.  Inputs will be rational numbers\footnote{The only exception is when we prove NP-hardness of symmetric tensor eigenvalue (Section \ref{symm_tensor_eig}), where we allow input eigenvalues $\lambda$ to be in the field $\mathbb F = \{a+b \sqrt{d}: a,b \in \mathbb Q\}$ for any fixed positive integer $d$. Note that such inputs may also be specified with a finite number of bits.} and specified by finite strings of bits.  Outputs will consist of rational numbers or \textsc{yes}/\textsc{no} responses.   A decision problem is said to be \textit{computable} (or \textit{decidable}) if there is a Turing machine that will output the correct answer (\textsc{yes}/\textsc{no}) for all allowable inputs in finitely many steps. It is said to be  \textit{uncomputable} (or \textit{undecidable}) otherwise; i.e., no Turing machine could always determine the correct answer in finitely many steps. Note that the definition of a problem includes a specification of allowed inputs. We refer the reader to \cite{Sipser} for a proper treatment and to \cite{Poo} for an extensive list of undecidable problems arising from many areas of modern mathematics.

Although quantities such as eigenvalues, spectral norms, etc., of a tensor will in general not be rational, we note our reductions have been carefully constructed such that they are rational (or at least finite bit-length) in the cases we study.

The next subsection describes our notion of reducibility for tensor decision problems such as Problem \ref{TensorZeroEvalue}  encountered above.

\subsection{NP-hardness}\label{NPhardness}

The following is the notion of NP-hardness that we shall use throughout this paper.  As described above, inputs will be rational numbers, and input size is measured in the number of bits required to specify the input. Briefly, we say that a problem $\mathscr{D}_1$ is \textit{polynomially reducible} to a problem $\mathscr{D}_2$ if the following holds:  any input to $\mathscr{D}_1$ can be transformed in polynomially many steps (in the input size) into a set of polynomially larger inputs to $\mathscr{D}_2$ problems such that the corresponding answers can be used to correctly deduce (again, in a polynomial number of steps) the answer to the original $\mathscr{D}_1$ question. Informally, $\mathscr{D}_1$  polynomially reduces to $\mathscr{D}_2$  if there is a way to solve $\mathscr{D}_1$  by a deterministic polynomial-time algorithm when that algorithm is allowed to compute answers to instances of problem $\mathscr{D}_2$ in unit time. Note that the relation is not symmetric --- $\mathscr{D}_1$ polynomially reduces to $\mathscr{D}_2$ does not imply that $\mathscr{D}_2$ also reduces to $\mathscr{D}_1$.

By the Cook--Levin Theorem, if one can polynomially reduce any particular NP-complete problem to a problem  $\mathscr{D}$, then all NP-complete problems are so reducible to  $\mathscr{D}$. We call a decision problem \textit{NP-hard} if one can polynomially reduce any NP-complete decision problem (such as whether a graph is $3$-colorable) to it \cite{Kn1,Kn2}. Thus, an NP-hard problem is at least as hard as any NP-complete problem and quite possibly harder.\footnote{For those unfamiliar with these notions, we feel obliged to point out that the set of NP-hard problems is different than the set of NP-complete ones. First of all, an NP-hard problem may not be a decision problem, and secondly, even if we are given an NP-hard decision problem, it might not be in the class  $\mathit{NP}$; i.e., one might not be able to certify a \textsc{yes} decision in a polynomial number of steps.  %, an example is the halting problem. 
NP-complete problems, by contrast, are always decision problems and in the class $\mathit{NP}$.}

Although tensor eigenvalue is computable for $\mathbb{F}= \mathbb{R}$, it is nonetheless NP-hard, as our next theorem explains. For two sample reductions, see Example~\ref{3OL_ex} below. 

\begin{theorem}\label{TEvalueNP}
Graph $3$-colorability is polynomially reducible to tensor $0$-eigenvalue over $\mathbb{R}$.  Thus, deciding tensor eigenvalue over $\mathbb{R}$ is NP-hard.
\end{theorem}

A basic open question is whether deciding tensor eigenvalue is also NP-complete.  In other words, if a nontrivial solution to \eqref{l2eig} exists for a fixed $\lambda$, is there a polynomial-time verifiable certificate of this fact?  A natural candidate for the certificate is the eigenvector itself, whose coordinates would be represented as certain zeroes of univariate polynomials with rational coefficients.  The relationship between the size of these coefficients and the size of the input, however, is subtle and beyond our scope.  In the case of linear equations, polynomial bounds on the number of digits necessary to represent a solution can already be found in \cite{Edmonds} (see \cite[Theorem~2.1]{Vavasis}); and for the sharpest results to date on certificates for homogeneous linear systems, see \cite{freitas2011upper}. In contrast, rudimentary bounds for the type of tensor problems considered in this article are, as far as we know, completely out of reach.

\begin{example}[Real tensor $0$-eigenvalue solves $3$-colorability]\label{3OL_ex}
Let $G = (V,E)$ be a simple, undirected graph with vertices $V= \{1,\dots, v\}
$ and edges $E$.  Recall that a \textit{proper} (\textit{vertex}) $3$-\textit{coloring} of $G$ is an assignment of one of three colors to each of its vertices such that adjacent vertices receive different colors.  We say that $G$ is \emph{$3$-colorable} if it has a proper $3$-coloring.  See Fig.~\ref{smlgraph} for an example of a $3$-colorable graph on four vertices.  Determining whether $G$ is $3$-colorable is a well-known NP-complete decision problem.

As we shall see in Section~\ref{quadfeas}, proper $3$-colorings of the left-hand side graph in Fig.~\ref{smlgraph} can be encoded as the nonzero real solutions to the following square set of $n = 35$ quadratic polynomials in $35$ real unknowns $a_i,b_i,c_i,d_i$ ($i = 1,\dots,4$), $u$, $w_i$ ($i = 1,\dots,18$):

{\footnotesize
\begin{equation}\label{grapheqs}
\begin{split}
& a_{1}c_{1}-b_{1}d_{1}-u^2,  \ b_{1}c_{1}+a_{1}d_1, \ c_{1}u-a_1^2+b_1^2,  \ d_{1}u-2a_1 b_1, \ a_{1}u-c_1^2+d_1^2, \ b_{1}u-2d_1 c_1, \\
& a_{2}c_{2}-b_{2}d_{2}-u^2,  \ b_{2}c_{2}+a_{2}d_2, \ c_{2}u-a_2^2+b_2^2,  \ d_{2}u-2a_2 b_2, \  a_{2}u-c_2^2+d_2^2, \ b_{2}u-2d_2 c_2, \\
& a_{3}c_{3}-b_{3}d_{3}-u^2,  \ b_{3}c_{3}+a_{3}d_3 , \ c_{3}u-a_3^2+b_3^2,  \ d_{3}u-2a_3 b_3, \ a_{3}u-c_3^2+d_3^2, \ b_{3}u-2d_3 c_3, \\
& a_{4}c_{4}-b_{4}d_{4}-u^2,  \ b_{4}c_{4}+a_{4}d_4, \ c_{4}u-a_4^2+b_4^2,  \ d_{4}u-2a_4 b_4, \ a_{4}u-c_4^2+d_4^2,  \ b_{4}u-2d_4 c_4, \\
& a_{1}^2-b_{1}^2 + a_1a_3 -b_1b_3 + a_3^2 -b_3^2,  \ a_{1}^2-b_{1}^2 + a_1a_4 -b_1b_4 + a_4^2 -b_4^2, \ a_{1}^2-b_{1}^2 + a_1a_2 -b_1b_2 + a_2^2 -b_2^2,\\
& a_{2}^2-b_{2}^2 + a_2a_3 -b_2b_3 + a_3^2 -b_3^2, \ a_{3}^2-b_{3}^2 + a_3a_4 -b_3b_4 + a_4^2 -b_4^2,  \ 2a_{1}b_1+a_{1}b_2+a_2b_1+2a_2b_2, \\
& 2a_{2}b_2+a_{2}b_3+a_3b_2+2a_3b_3, \ 2a_{1}b_1+a_{1}b_3+a_2b_1+2a_3b_3, \ 2a_{1}b_1+a_{1}b_4+a_4b_1+2a_4b_4,\\
& 2a_{3}b_3+a_{3}b_4+a_4b_3+2a_4b_4, \ w_1^2 + w_2^2 +  \dots+w_{17}^2 +  w_{18}^2.
\end{split}
\end{equation}}

\begin{figure}[!t]
\begin{center}
\includegraphics[width=5in]{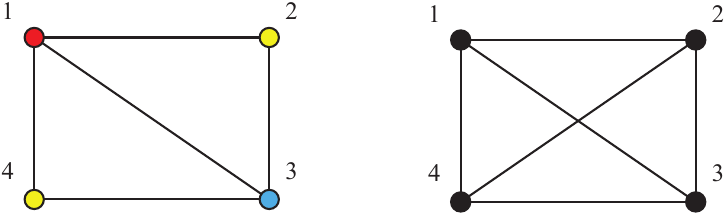}
\caption{Simple graphs with six proper $3$-colorings (graph at the left) or none (graph at the right).}
\label{smlgraph}
\end{center}
\end{figure}

Using symbolic algebra or numerical algebraic geometry software\footnote{For code used in this paper, see: \url{http://www.msri.org/people/members/chillar/code.html}.} (see the Appendix for a list), one can solve these equations to find six real solutions (without loss of generality, we may take $u=1$ and all $w_j = 0$), which correspond to the proper $3$-colorings of the graph $G$ as follows.  Fix one such solution and define $x_k := a_k + \mathrm{i} b_k \in \mathbb{C}$ for $k=1,\dots,4$ (we set $\mathrm{i} := \sqrt{-1}$).   By construction, these $x_k$ are one of the three cube roots of unity $\{1,\alpha,\alpha^2\}$ where $\alpha = \exp(2 \pi \mathrm{i}/3) = -\frac{1}{2} + \mathrm{i}  \frac{\sqrt{3}}{2}$ (see also  Fig.~\ref{approx_cube_roots}).

To determine a $3$-coloring from this solution, one ``colors" each vertex $i$ by the root of unity that equals $x_i$.  It can be checked that no two adjacent vertices share the same color in a coloring; thus, they are proper $3$-colorings.  For example, one solution is: 
\[ x_1 = -\frac{1}{2} - \mathrm{i} \frac{\sqrt{3}}{2}, \quad x_2 = 1, \quad x_3 =  -\frac{1}{2} + \mathrm{i} \frac{\sqrt{3}}{2}, \quad x_4 = 1.\]

Polynomials for the right-hand side graph in Fig.~\ref{smlgraph} are the same as \eqref{grapheqs} except for two additional ones encoding a new restriction for colorings, the extra edge $\{2,4\}$:
\[ a_{2}^2-b_{2}^2 + a_2a_4 -b_2b_4 + a_4^2 -b_4^2,  \quad 2a_{2}b_2+a_{2}b_4+a_4b_2+2a_4b_4.\]
One can verify with the same software that these extra equations force the system to have no nonzero real solutions, and thus no proper $3$-colorings.

Finally, note that since equivalence classes of eigenvectors correspond to proper $3$-colorings in this reduction, if we could count real (projective) eigenvectors with eigenvalue $\lambda = 0$, we would solve the enumeration problem for proper $3$-colorings of graphs (in particular, this proves Corollary~\ref{sharp_p_3col} below).
\qed
\end{example}

\subsection{Approximation Schemes}

Although the tensor eigenvalue decision problem is NP-hard and the eigenvector enumeration problem \#P-hard, it might be possible to approximate \textit{some} eigenpair ($\lambda,\mathbf{x}$) efficiently, which is important for some applications (e.g., \cite{SchultzS}). For those unfamiliar with these ideas, an \textit{approximation scheme} for a tensor problem (such as finding a tensor eigenvector) is an algorithm producing a (rational) approximate solution to within $\varepsilon$ of some solution. An approximation scheme is said to run in \textit{polynomial time} (PTAS) if its running-time is polynomial in the input size for any fixed $\varepsilon > 0$, and \textit{fully polynomial time} (FPTAS) if its running-time is polynomial in both the input size and $1/\varepsilon$.  There are other notions of approximation \cite{Hochbaum,Vazirani}, but we  limit our discussion to these. 

Fix $\lambda$, which we assume is an eigenvalue of a tensor $\mathcal{A}$.  Formally, we say it is \textit{NP-hard to approximate an eigenvector} of an eigenpair $(\lambda,\mathbf{x})$ to within $\varepsilon > 0$ if (unless $\mathit{P} = \mathit{NP}$) there is no polynomial-time algorithm that always produces a nonzero $\hat{\mathbf{x}} = [\hat{x}_1,\dots,\hat{x}_n]^{\top} \in \mathbb{F}^n$ that approximates some solution $\mathbf{0} \neq \mathbf{x} \in \mathbb{F}^n$ to system \eqref{l2eig} by satisfying for all $i$ and $j$:
%\footnote{We adopt the convention that when $x_j \neq 0$, an approximation must have $\hat{x}_j \neq 0$ as well.}
\begin{equation}\label{approx_def}
 \lvert \hat{x}_i/\hat{x}_j - x_i/x_j \rvert < \varepsilon, \quad \text{whenever $x_j \neq 0$.}
\end{equation}
This measure of approximation is natural in our context because of the scale invariance of eigenpairs, and it is closely related to standard relative error $\lVert\hat{\mathbf{x}}-\mathbf{x}\rVert_{\infty}/\lVert \mathbf{x}\rVert_{\infty}$ in numerical analysis. We shall prove the following inapproximability result in Section~\ref{sec:EVP}.

\begin{theorem}\label{approx_evect}
It is NP-hard to approximate tensor eigenvector over $\mathbb{R}$ to  $\varepsilon = \frac{3}{4}$.
\end{theorem}

\begin{corollary}
No PTAS  approximating tensor eigenvector exists unless $\mathit{P} = \mathit{NP}$.
\end{corollary}

\subsection{Tensor Singular Value, Hyperdeterminant, and Spectral Norm}

Our next result involves the singular value problem. We postpone definitions until Section~\ref{sec:SVP}, but state the main result here. 

% Our NP-hardness proof involves (??) a bivariate form of Problem~\ref{RealQuadFeasDef}, a result of Motzkin--Strauss \cite{MotzkinStraus} linking independence number to a quadratic optimization, and a result of Nesterov \cite{nesterov03}.

\begin{theorem}\label{sing_nphard_thm}
Let $\mathbb{F} = \mathbb{R}$ or $\mathbb{C}$, and fix $\sigma \in \mathbb{Q}$.  Deciding whether $\sigma$ is a singular value over $\mathbb{F}$ of a tensor is NP-hard.
\end{theorem}

There is also a notion of  \textit{hyperdeterminant}, which we discuss in more depth in Section~\ref{bilineareqs}  (see also Section~\ref{sec:combdet} for a different notion of hyperdeterminant). Like the determinant, this is a homogeneous polynomial (with integer coefficients) in the entries of a tensor that vanishes if and only if the tensor has a zero singular value.  The following problem is important for multilinear equation solving (e.g., Example~\ref{neq2Hyperdet}).

\begin{problem}\label{Hypdet}
Decide if the hyperdeterminant of a tensor is zero.
\end{problem}

We were unable to determine the hardness of Problem~\ref{Hypdet}, but conjecture that it is difficult.

\begin{conjecture}\label{hyp_conj}
It is NP-hard to decide the vanishing of the hyperdeterminant.
\end{conjecture}

We are, however, able to evaluate the complexity of computing the spectral norm (see Definition \ref{defn:spec_norm}), which is a special singular value of a tensor.
 
\begin{theorem}\label{spec_norm_nphard_thm}
Fix any nonzero $\sigma \in \mathbb{Q}$.  Deciding whether $\sigma$ is the spectral norm of a tensor is NP-hard.
\end{theorem}

Determining the spectral norm is an optimization (maximization) problem. Thus, while it is NP-hard to decide tensor spectral norm, there might be efficient ways to approximate it.  A famous example of approximating solutions to problems whose decision formulations are NP-hard is the classical result of \cite{goemans1995} which gives a polynomial-time algorithm to determine a cut size of a graph that is at least $.878$ times that of a maximum cut.   In fact, it has been shown, assuming the Unique Games Conjecture, that Goemans-Williamson's approximation factor is best possible \cite{KhotGW2007}.  We refer the reader to  \cite{alon2004,bachoc2008,bachoc2009,briet2010positive,briet2010grothendieck,HLZ} for some recent work in the field of approximation algorithms.

Formally, we say that it is \textit{NP-hard to approximate the spectral norm} of a tensor to within $\varepsilon > 0$ if (unless $\mathit{P}=\mathit{NP}$) there is no polynomial-time algorithm giving a guaranteed lower bound for the spectral norm that is at least a $(1-\varepsilon)$-factor of its true value. Note that $\varepsilon$ here might be a function of the input size.  A proof of the following can be found in Section~\ref{sec:SVP}.

\begin{theorem}\label{approx_spec_nphard_thm}
It is NP-hard to approximate the spectral norm of a tensor $\mathcal{A}$ to within \[\varepsilon = 1- \left(1+\frac{1}{N(N-1)}\right)^{-1/2} = \frac{1}{2N(N-1)} + O(N^{-4}),\] where $N$ is the input size of $\mathcal{A}$.
\end{theorem}

\begin{corollary}\label{FPTAS_spec_nphard_thm}
No FPTAS to approximate spectral norm exists unless $\mathit{P} = \mathit{NP}$.
\end{corollary}
\begin{proof}
Suppose there is a FPTAS for the tensor spectral norm problem and take $\varepsilon = 1/(4N^2)$ as the approximation error desired for a tensor of input size $N$.  Then, in time polynomial in $1/\varepsilon = 4N^2$ (and thus in $N$), it would be possible to approximate the spectral norm of a tensor with input size $N$ to within $1- \left(1+\frac{1}{N(N-1)}\right)^{-1/2}$ for all large $N$.  From Theorem~\ref{approx_spec_nphard_thm}, this is only possible if $\mathit{P} = \mathit{NP}$.
\end{proof}

\subsection{Tensor Rank}

The \textit{outer product} $\mathcal{A} = \mathbf{x} \otimes \mathbf{y} \otimes \mathbf{z}$ of  vectors $\mathbf{x}\in \mathbb{F}^l$, $\mathbf{y}\in \mathbb{F}^m$, and $\mathbf{z}\in \mathbb{F}^n$ is the tensor $\mathcal{A} = \llbracket a_{ijk}\rrbracket_{i,j,k=1}^{l,m,n}$ given by $a_{ijk} = x_iy_jz_k$.
A nonzero tensor that can be expressed as an outer product of vectors is called \textit{rank}-$1$.  
More generally, the \textit{rank} of a tensor \mbox{$\mathcal{A}=\llbracket a_{ijk}\rrbracket
\in\mathbb{F}^{l\times m\times n}$}, denoted $\operatorname{rank}(\mathcal{A})$, is the minimum $r$ for which $\mathcal{A}$ is a sum of $r$ rank-$1$ tensors \cite{Hi1,Hi2} with $\lambda_i \in \mathbb{F}$, $\mathbf{x}_{i}\in \mathbb{F}^l$,
$\mathbf{y}_{i}\in \mathbb{F}^m$, and $\mathbf{z}_{i}\in \mathbb{F}^n$:
\begin{equation}
\operatorname{rank}(\mathcal{A}):=\min\Bigl\{r : \mathcal{A}%
=\sum\nolimits_{i=1}^{r}\lambda_{i}\,\mathbf{x}_{i}%
\otimes\mathbf{y}_{i}\otimes\mathbf{z}_{i}\Bigr\}. \label{rank}%
\end{equation}
%A decomposition of the form
%\begin{equation}\label{eq:reveal}
%\mathcal{A} = \sum\nolimits_{i=1}^{\operatorname{rank}(\mathcal{A})}\lambda_{i}\,\mathbf{x}_{i}%
%\otimes\mathbf{y}_{i}\otimes\mathbf{z}_{i}
%\end{equation}
%will be called a \textit{rank-revealing decomposition} of $\mathcal{A}$.

For a symmetric tensor $\mathcal{S} \in \mathbb{F}^{n \times n \times n}$, we shall require that the vectors in the outer product be the same:
\begin{equation}
\operatorname{srank}(\mathcal{S}):=\min\Bigl\{r : \mathcal{S}%
=\sum\nolimits_{i=1}^{r}\lambda_{i}\,\mathbf{x}_{i}
\otimes\mathbf{x}_{i}\otimes\mathbf{x}_{i}\Bigr\}. \label{srank}
\end{equation}
The number in \eqref{srank} is called the \textit{symmetric rank} of $\mathcal{S}$. It is still not known whether a symmetric tensor's symmetric rank is always its rank (this is the Comon Conjecture \cite{Landsberg}; see also \cite{reznick2010}), although the best symmetric rank-$1$ approximation and the best rank-$1$ approximation coincide (see Section~\ref{sec:symmproblems}).  Note that these definitions of rank agree with matrix rank when applied to a $2$-tensor. 

Our next result says that approximating a tensor with a single rank-$1$ element is already hard. A consequence is that data analytic models under the headings of \textsc{parafac}, \textsc{candecomp}, and \textsc{tucker} --- originating from psychometrics but having newfound popularity in other areas of data analysis --- are all NP-hard to fit  in the simplest case.

\begin{theorem}\label{rank1approx_nphard_thm}
Rank-$1$ tensor approximation is NP-hard.
\end{theorem}

%[*** Add perhaps that Rank-$1$ tensor approx has no PTAS ***]

As will become clear, tensor rank as defined in \eqref{rank} implicitly depends on the choice of field. Suppose that $\mathbb{F} \subseteq \mathbb{E}$ is a subfield of a field $\mathbb{E}$.  If $\mathcal{A}\in \mathbb{F}^{l\times m\times n}$ is as in (\ref{rank}), but we allow $\lambda_i \in \mathbb{E}$, $\mathbf{x}_{i}\in \mathbb{E}^l$, $\mathbf{y}_{i}\in \mathbb{E}^m$, and $\mathbf{z}_{i}\in \mathbb{E}^n$, then the number computed in (\ref{rank}) is called the \textit{rank of $\mathcal{A}$ over $\mathbb E$}. We will write $\operatorname{rank}_{\mathbb{E}}(\mathcal{A})$ (a notation that we will use whenever the choice of field is important) for the rank of $\mathcal{A}$ over $\mathbb{E}$. In general, it is possible that
\[
\operatorname{rank}_{\mathbb{E}}(\mathcal{A}) < \operatorname{rank}_{\mathbb{F}}(\mathcal{A}).
\]
We discuss this in detail in Section~\ref{sec:Rank}, where we give a new result about the rank of tensors over changing fields (in contrast, the rank of a matrix does not change when the ground field is enlarged).  The proof uses symbolic and computational algebra in a fundamental way.

\begin{theorem}\label{rankQRthm}
There is a rational tensor $\mathcal{A} \in \mathbb{Q}^{2 \times 2 \times 2}$ with $\operatorname{rank}_{\mathbb{R}}(\mathcal{A}) < \operatorname{rank}_{\mathbb{Q}}(\mathcal{A})$.
\end{theorem}

H\aa stad has famously shown that tensor rank over $\mathbb{Q}$ is NP-hard \cite{Haa}. Since tensor rank over $\mathbb{Q}$ differs in general from tensor rank over $\mathbb{R}$, it is natural to ask if tensor rank might still be NP-hard over $\mathbb{R}$ and $\mathbb{C}$. In Section~\ref{sec:Rank}, we shall explain how the argument in \cite{Haa} also proves the following.

\begin{theorem}[H\aa stad]
Tensor rank is NP-hard over $\mathbb{R}$ and $\mathbb{C}$.
\end{theorem}

\subsection{Symmetric Tensors}

One may wonder if NP-hard problems for general nonsymmetric tensors might perhaps become tractable for symmetric ones. We show that restricting these problems to the class of symmetric tensors does not remove NP-hardness. As with their nonsymmetric counterparts, eigenvalue, singular value, spectral norm, and best rank-$1$ approximation problems for symmetric tensors all remain NP-hard. In particular, the NP-hardness of symmetric spectral norm in Theorem~\ref{thm:symmresults} answers an open problem in \cite{BV}.

\subsection{\#P-hardness and VNP-hardness}

As is evident from the title of our article and the list in Table~\ref{tab:tensor_complexity}, we have used NP-hardness as our primary measure of computational intractability. Valiant's notions of \textit{\#P-completeness} \cite{Valiant79} and \textit{VNP-completeness} \cite{Valiant} are nonetheless  relevant to tensor problems. We prove the following result about tensor eigenvalue over $\mathbb{R}$ (see Example~\ref{3OL_ex}).
\begin{corollary}\label{sharp_p_3col}
It is \#P-hard to count tensor eigenvectors over $\mathbb{R}$. 
\end{corollary}
Because of space constraints, we will not elaborate on  these notions except to say that \#P-completeness applies to enumeration problems associated with NP-complete decision problems while VNP-completeness applies to polynomial evaluation problems.  For example, deciding whether a graph is $3$-colorable is an NP-complete decision problem, but counting the number of proper $3$-colorings is a \#P-complete enumeration problem. 

The VNP complexity classes involve questions about the minimum number of arithmetic operations required to evaluate multivariate polynomials. An illuminating example \cite[Example~13.3.1.2]{Landsberg} is given by
\[
p_n(x,y) = x^n + nx^{n-1}y + \binom{n}{2}x^{n-2}y^2 +\dots +  \binom{n}{2}x^{2}y^{n-2}  + nxy^{n-1}+ y^n,
\]
which at first glance requires $n(n+1)$ multiplications and $n$ additions to evaluate.  However, from the binomial expansion, we have $p_n(x,y) =(x+y)^n$, and so the operation count can be reduced to $n-1$ multiplications and $1$ addition. In fact, this last count is not minimal, and a more careful study of such questions involves arithmetic circuit complexity.  We refer the reader to \cite{Burgisser} for a detailed exposition.

As in Section~\ref{NPhardness}, one may also analogously define notions of  \#P-hardness and VNP-hardness: A problem is said to be \textit{\#P-hard} (resp.\ \textit{VNP-hard}) if every \#P-complete enumeration problem (resp.\ VNP-complete polynomial evaluation problem)  may be polynomially reduced to it. Evidently, a \#P-hard (resp.\ VNP-hard) problem is at least as hard and quite possibly harder than a \#P-complete (resp.\ VNP-complete) problem.\fin

\subsection{Intractable Matrix Problems}

Not all matrix problems are tractable. For example, matrix $(p,q)$-norms when $1\le q \le p \le \infty$ \cite{Stein}, nonnegative rank of nonnegative matrices \cite{Vavasis1,kannan2012}, sparsest null vector \cite{ColePot}, and rank minimization with linear constraints \cite{Nata} are all known to be NP-hard; the matrix $p$-norm when $p \ne 1, 2, \infty$ is NP-hard to approximate \cite{HO}; and evaluating the permanent of a $\{0,1\}$-valued matrix is a well-known \#P-complete problem \cite{Valiant79}. Our intention is to highlight the sharp distinction between the computational intractability of  certain tensor problems and the tractability of their matrix specializations. As such we do not investigate tensor problems that are known to be hard for matrices. 

\subsection{Quantum Computers}

Another question sometimes posed to the authors is whether quantum computers might help with these problems. This is believed unlikely because of the seminal works \cite{BV2} and \cite{FR} (see also the survey \cite{Fort}). These authors have demonstrated that the complexity class of bounded error quantum polynomial time (BQP) is not expected to overlap with the complexity class of NP-hard problems. Since BQP encompasses the decision problems solvable by a quantum computer in polynomial time, the NP-hardness results in this article show that quantum computers are unlikely to be effective for tensor problems.

\subsection{Finite Fields}

We have restricted our discussion in this article to extension fields of $\mathbb{Q}$ as these are most relevant for the numerical computation arising in science and engineering. Corresponding results over finite fields are nonetheless also of interest in computer science; for instance, quadratic feasibility arises in cryptography \cite{CGMT} and tensor rank arises in boolean satisfiability problems \cite{Haa}.

%\setcounter{tocdepth}{1}
%\tableofcontents
%[*** we need to Make a Better outline ***]
%This article is organized as follows.  In Sections~\ref{quadfeas} and ~\ref{bilineareqs}, we investigate bilinear equation solving and explain some of the basic problems there.  In Sections~\ref{sec:EVP} and \ref{sec:SVP}, we work out the NP-hardness of tensor eigenvalue and singular value problems (respectively), and Sections~\ref{sec:rank1} and \ref{sec:Rank} are devoted to questions of tensor rank.  In Section~\ref{symm_tensor_eig}, we explain why  symmetric tensor eigenvalue is NP-hard.  
%We close in Section~\ref{sec:openprobs} with some open problems.

%%%%%%%%%
\section{Quadratic feasibility is NP-hard}\label{quadfeas}
%%%%%%%%%

Since it will be a basic tool for us in proving results about tensors (e.g., Theorem~\ref{TEvalueNP} for tensor eigenvalue), we examine the complexity of solving quadratic equations.

\begin{problem}\label{QuadFeasDef} 
Let $\mathbb{F}= \mathbb{Q}$, $\mathbb{R}$, or $\mathbb C$.  For $i = 1,\dots,m$, let $A_{i}\in \mathbb{Q}^{n \times n}$, 
$\mathbf{b}_i \in \mathbb{Q}^n$, and $c_i \in \mathbb{Q}$.  Also, let $\mathbf{x}=[x_{1},\dots,x_{n}]^{\top}$ be a vector of unknowns, and set $G_{i}(\mathbf{x})=\mathbf{x}^{\top}
A_{i}\mathbf{x} + \mathbf{b}_i^{\top}\mathbf{x} + c_i$.  Decide if the system $\{G_{i}(\mathbf{x})=0\}_{i=1}^{m} $ has
a solution $\mathbf{x}\in\mathbb{F}^{n}$.
\end{problem}

Another quadratic problem that is more natural in our setting is the following.

\begin{problem}[Quadratic Feasibility]\label{HomQuadFeasDef} 
Let $\mathbb{F}= \mathbb{Q}$, $\mathbb{R}$, or $\mathbb{C}$.  For $i = 1,\dots,m$, let $A_{i} \in \mathbb{Q}^{n \times n}$
and set $G_{i}(\mathbf{x})=\mathbf{x}^{\top}
A_{i}\mathbf{x}$.  Decide if the system of equations $\{G_{i}(\mathbf{x})=0\}_{i=1}^{m} $ has
a   nonzero solution $\mathbf{0} \neq \mathbf{x}\in\mathbb{F}^{n}$.
\end{problem}
\begin{remark}\label{real_hom_rmk}
It is elementary that the (polynomial) complexity of Problem~\ref{QuadFeasDef} is the same as that of Problem~\ref{HomQuadFeasDef} when $\mathbb{F} = \mathbb{Q}$ or $\mathbb{R}$.  To see this, homogenize each equation $G_{i} = 0$ in Problem \ref{QuadFeasDef} by introducing $z$ as a new unknown: $\mathbf{x}^{\top}A_{i}\mathbf{x} + \mathbf{b}_i^{\top}\mathbf{x}z + c_iz^2 = 0$.  Next, introduce the quadratic equation $x_1^2+\cdots + x_n^2- z^2 = 0$.  This new set of equations is easily seen to have a nonzero solution if and only if the original system has any solution at all.  The main trick used here is that $\mathbb R$ is a \textit{formally real field}: that is, we always have $\sum x_i^2 = 0 \ \Rightarrow \ x_i = 0$ for all $i$.
\end{remark}

Problem~\ref{HomQuadFeasDef} for $\mathbb{F}= \mathbb{R}$ was studied in \cite{Barv}.  There, it is shown that for fixed $n$, one can decide the real feasibility of $m \gg n$ such quadratic equations  $\{G_{i}(\mathbf{x})=0\}_{i=1}^{m}$ in $n$ unknowns in a number of arithmetic operations that is polynomial in $m$.  In contrast, we shall show that quadratic feasibility is NP-hard over $\mathbb{R}$ and $\mathbb{C}$. 

To give the reader a sense of the generality of nonlinear equation solving, we first explain, in very simplified form, the connection of quadratic systems to the Halting Problem established in the seminal works \cite{Davis61,matij70}. Collectively, these papers resolve (in the negative) Hilbert's $10$th Problem \cite{hilbert1902}: Is there a finite procedure to decide the solvability of general polynomial equations over the integers (the so-called \textit{Diophantine Problem} over $\mathbb Z$).  For an exposition of the ideas involved, see \cite{davis1976} or the introductory book \cite{matijbook}.

% jones80,jones84

The following fact in theoretical computer science is a basic consequence of these papers.  Fix a universal Turing machine.  There is a listing of all Turing machines $\mathscr{T}_x$ ($x = 1,2,\dots .$) and a finite set $S = S(x)$ of integral quadratic equations in the parameter $x$ and other unknowns with the following property: For each particular positive integer $x = 1,2,\dots ,$ the system $S(x)$ has a solution in positive integers if and only if Turing machine $\mathscr{T}_x$ halts with no input.  In particular, polynomial equation solving over the positive integers is undecidable.  %  With this background in place, we prove the following.

\begin{theorem}[Davis--Putnam--Robinson, Matijasevic]\label{QFQ_undec_thm}
Problem~\ref{QuadFeasDef} is undecidable over $\mathbb{Z}$.
\end{theorem}
\begin{proof}
Consider the above system $S$ of polynomials.  For each of the unknowns $y$ in $S$, we add an equation in four new variables $a,b,c,d$ encoding (by Lagrange's Four-Square Theorem) that $y$ should be a positive integer:  
\[y = a^2 + b^2 + c^2 + d^2 + 1.\] 
Let $S'$ denote this new system.  Polynomials $S'(x)$ have a common zero in integers if and only if $S(x)$ has a solution in positive integers.  Thus, if we could decide the solvability of quadratic equations over the integers, we would solve the Halting problem.
\end{proof}

\begin{remark}
Using \cite[pp.~552]{jones80}, one can construct an explicit set of quadratics whose solvability over $\mathbb Z$ encodes whether a given Turing machine halts.
\end{remark}

The decidability of Problem \ref{QuadFeasDef} with $\mathbb F = \mathbb Q$ is still unknown, as is the general Diophantine problem over $\mathbb Q$.  See \cite{poonen2003} for some  progress on this hard problem.

While a system of quadratic equations of the form in  Problem~\ref{QuadFeasDef} determined by coefficients $\mathcal{A} \in \mathbb{Z}^{m \times n \times n}$ is undecidable, we may decide whether a system of \textit{linear} equations $A\mathbf{x} = \mathbf{b}$ with $A\in \mathbb{Z}^{m \times n}$, $\mathbf{b} \in \mathbb Z^m$ has an integral solution $\mathbf{x}$ by computing the Smith normal form of the coefficient matrix $A$ (e.g., see \cite{yap2000fundamental}). We view this as another instance where the transition from matrices $A$ to tensors $\mathcal{A}$ has a drastic effect on computability.  Another example of a matrix problem that becomes undecidable when one states its analogue for $3$-tensors is given in Section~\ref{sec:biv}.

We next study quadratic feasibility when $\mathbb{F} = \mathbb{R}$ and $\mathbb{C}$ and show that it is NP-hard.  Variations of this basic result appear in \cite{bayer,lovasz,GrenetKoiranPortier}.

\begin{theorem}\label{QFHard}
Let $\mathbb{F} = \mathbb{R}$ or $\mathbb{C}$.  Graph $3$-colorability is polynomially reducible to quadratic feasibility over $\mathbb{F}$.  Thus, Problem~\ref{HomQuadFeasDef} over $\mathbb{F}$ is NP-hard.
\end{theorem}

The idea of turning colorability problems into questions about polynomials appears to originate
with Bayer's thesis although it has arisen in several other
places, including \cite{lovasz,deloera,deloera08}. For a recent application of polynomial algebra to deciding unique $3$-colorability, see \cite{HW}.

To prove Theorem~\ref{QFHard}, we shall reduce graph $3$-colorability to quadratic feasibility over $\mathbb{C}$. The result for $\mathbb{F} = \mathbb{R}$ then follows from the following fact.

\begin{lemma}
\label{complexrealquad} Let $\mathbb{F} = \mathbb{R}$ and $A_{i}$, $G_i$ be as in
Problem~\ref{HomQuadFeasDef}. Consider a new system $H_{j}(\mathbf{x}
)=\mathbf{x}^{\top}B_{j}\mathbf{x}$ of $2m$ equations in $2n$ unknowns
given by:
\[
B_{i}=
\begin{bmatrix}
A_{i} & 0\\
0 & -A_{i}
\end{bmatrix}
,\quad B_{m+i}=
\begin{bmatrix}
0 & A_{i}\\
A_{i} & 0
\end{bmatrix}
,\quad i=1,\dots,m.
\]
The equations $\{H_{j}(\mathbf{x})=0\}_{j=1}^{2m}$ have a nonzero real
solution $\mathbf{x}\in\mathbb{R}^{2n}$ if and only if the
equations $\{G_{i}(\mathbf{z})=0\}_{i=1}^{m}$ have a nonzero complex
solution $\mathbf{z}\in\mathbb{C}^{n}$.
\end{lemma}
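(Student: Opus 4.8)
The plan is to set up an explicit real-linear bijection between $\mathbb{C}^{n}$ and $\mathbb{R}^{2n}$ under which each complex form $G_{i}$ splits cleanly into the pair of real forms $H_{2i}$, $H_{2i+1}$. Write $\mathbf{z}=\mathbf{u}+\sqrt{-1}\,\mathbf{v}$ with $\mathbf{u},\mathbf{v}\in\mathbb{R}^{n}$ and associate to $\mathbf{z}$ the vector $\mathbf{x}=[\mathbf{u}^{\top},\mathbf{v}^{\top}]^{\top}\in\mathbb{R}^{2n}$. This map is a bijection, and it carries $\mathbf{z}=\mathbf{0}$ to $\mathbf{x}=\mathbf{0}$ and nothing else there, so it identifies the nontrivial points on the two sides.

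Next I would expand the complex form. Because $A_{i}$ is symmetric, $\mathbf{u}^{\top}A_{i}\mathbf{v}=\mathbf{v}^{\top}A_{i}\mathbf{u}$, so
\[
G_{i}(\mathbf{z})=(\mathbf{u}+\sqrt{-1}\,\mathbf{v})^{\top}A_{i}(\mathbf{u}+\sqrt{-1}\,\mathbf{v})=\bigl(\mathbf{u}^{\top}A_{i}\mathbf{u}-\mathbf{v}^{\top}A_{i}\mathbf{v}\bigr)+2\sqrt{-1}\,\mathbf{u}^{\top}A_{i}\mathbf{v}.
\]
Hence $G_{i}(\mathbf{z})=0$ in $\mathbb{C}$ exactly when the two real scalar equations $\mathbf{u}^{\top}A_{i}\mathbf{u}-\mathbf{v}^{\top}A_{i}\mathbf{v}=0$ and $\mathbf{u}^{\top}A_{i}\mathbf{v}=0$ both hold. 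On the other hand, a direct block computation with $\mathbf{x}=[\mathbf{u}^{\top},\mathbf{v}^{\top}]^{\top}$ gives $\mathbf{x}^{\top}B_{2i}\mathbf{x}=\mathbf{u}^{\top}A_{i}\mathbf{u}-\mathbf{v}^{\top}A_{i}\mathbf{v}$ and $\mathbf{x}^{\top}B_{2i+1}\mathbf{x}=\mathbf{u}^{\top}A_{i}\mathbf{v}+\mathbf{v}^{\top}A_{i}\mathbf{u}=2\mathbf{u}^{\top}A_{i}\mathbf{v}$. Comparing the two computations, $H_{2i}(\mathbf{x})=H_{2i+1}(\mathbf{x})=0$ if and only if $G_{i}(\mathbf{z})=0$, for each fixed $i$.

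Finally I would assemble the pieces: running the last equivalence over $i=1,\dots,n$ shows that $\mathbf{x}$ solves the full system $\{H_{j}(\mathbf{x})=0\}_{j=1}^{2n}$ precisely when the corresponding $\mathbf{z}$ solves $\{G_{i}(\mathbf{z})=0\}_{i=1}^{n}$, and the bijection of the first paragraph then transfers nontrivial solvability back and forth. There is no real obstacle here; the only points requiring care are invoking the symmetry of $A_{i}$ — which is exactly what makes the cross term collapse into the single quadratic form $\mathbf{x}^{\top}B_{2i+1}\mathbf{x}$ rather than a genuine bilinear expression — and bookkeeping the indexing of the $2n$ forms $H_{j}$.
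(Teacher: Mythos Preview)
Your argument is correct and is exactly the standard computation the paper has in mind: the authors write that the proof is ``immediate and therefore omitted,'' and the intended immediate proof is precisely your identification $\mathbf{z}=\mathbf{u}+\sqrt{-1}\,\mathbf{v}\leftrightarrow\mathbf{x}=[\mathbf{u}^{\top},\mathbf{v}^{\top}]^{\top}$ followed by reading off the real and imaginary parts of $\mathbf{z}^{\top}A_{i}\mathbf{z}$. Your use of the symmetry of $A_{i}$ to collapse the cross term is the only nontrivial point, and you handle it correctly.
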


\begin{proof}
By construction, a nonzero solution $\mathbf{z}=\mathbf{u}+ \mathrm{i} \mathbf{v}
 \in\mathbb{C}^{n}$ with $\mathbf{u},\mathbf{v} \in\mathbb{R}^{n}$ to equations 
\mbox{$\{G_{i}(\mathbf{z})=0\}_{i=1}^{m}$} corresponds to a nonzero real solution
$\mathbf{x}=[\mathbf{u}^{\top},\mathbf{v}^{\top}]^{\top}$ to $\{H_{j}(\mathbf{x})=0\}_{j=1}^{2m}$.
\end{proof}

% For an extension to sets of homogeneous bilinear equations, see Section~\ref{bilineareqs}.  

%The fact that this problem is NP-hard is already implicit in the work \cite{lovasz} using polynomials to encode stability sets in graphs.  We present a reduction from $3$-colorability as the methods might be of independent interest.  

%Also, we appeal to only one oracle call, whereas one would need a linear number of such calls using the stable set characterization of NP-completeness. 

%\begin{theorem}
%\label{3colorreduc} Graph $3$-coloring is polynomial reducible to
%Problem~\ref{RealQuadFeasDef}. Therefore, Problem~~\ref{RealQuadFeasDef} is NP-hard.
%\end{theorem}
%
%\begin{remark}
%It is well-known that inhomogeneous quadratic feasibility is NP-hard over
%$\mathbb{R}$ and NP-complete over finite fields \cite{BCSS}. While we suspect
%that Theorem~\ref{3colorreduc} is known in some form, we have been unable to
%locate an exact reference. For completeness (and to remain self-contained), we
%include a proof of Theorem~\ref{3colorreduc} below. It is also possible that
%the specific reduction techniques employed here might be of independent
%mathematical interest.
%\end{remark}

The trivial observation below gives flexibility in specifying quadratic feasibility problems over $\mathbb{R}$.
This is useful since the system defining an eigenpair is a square system.

\begin{lemma}\label{quadfeasmoreeqvars} 
Let $G_{i}(\mathbf{x})=\mathbf{x}^{\top}%
A_{i}\mathbf{x}$ for $i=1,\dots,m$ with $A_{i}\in \mathbb{R}^{n \times n}$. Consider a new system $H_{i}(\mathbf{x})= \mathbf{x}^{\top}B_{i}\mathbf{x}$ of $r \geq m+1$ equations in $s \geq n$ unknowns given by $s \times s$ matrices:
\[
B_{i}=
\begin{bmatrix}
A_{i} & 0\\
0 & 0
\end{bmatrix}
, \ i=1,\dots,m; \quad B_{j}=
\begin{bmatrix}
0 & 0\\
0 & 0
\end{bmatrix}
, \ j=m+1,\dots,r-1; \quad B_{r} =
\begin{bmatrix}
0 & 0\\
0 & I
\end{bmatrix}
;
\]
in which $I$ is the $(s-n) \times(s-n)$ identity matrix. Equations
$\{H_{i}(\mathbf{x})=0\}_{i=1}^{r}$ have a nonzero solution
$\mathbf{x}\in\mathbb{R}^{s}$ if and only if $\{G_{i}(\mathbf{x})=0\}_{i=1}^{m}$ have a nonzero solution
$\mathbf{x}\in\mathbb{R}^{n}$.
\end{lemma}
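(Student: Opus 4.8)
The plan is a direct block-decomposition (``padding'') argument: split the $s$ indeterminates into the first $n$ and the last $s-n$, and observe that the new system $\{H_j=0\}$ decouples into the old system $\{G_i=0\}$ together with a single constraint that forces the extra coordinates to vanish.

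Concretely, I would write $\mathbf{x}=(\mathbf{u},\mathbf{v})\in\mathbb{R}^{s}$ with $\mathbf{u}\in\mathbb{R}^{n}$ and $\mathbf{v}\in\mathbb{R}^{s-n}$. From the stated block forms of the $B_{j}$ one computes, for $i=1,\dots,m$, that $H_{i}(\mathbf{x})=\mathbf{u}^{\top}A_{i}\mathbf{u}=G_{i}(\mathbf{u})$; for $j=m+1,\dots,r-1$, that $H_{j}(\mathbf{x})\equiv 0$ identically; and for $j=r$, that $H_{r}(\mathbf{x})=\mathbf{v}^{\top}\mathbf{v}=\lVert\mathbf{v}\rVert_{2}^{2}$. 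Hence $\mathbf{x}=(\mathbf{u},\mathbf{v})$ satisfies $\{H_{j}(\mathbf{x})=0\}_{j=1}^{r}$ if and only if $\mathbf{v}=\mathbf{0}$ and $\mathbf{u}$ satisfies $\{G_{i}(\mathbf{u})=0\}_{i=1}^{m}$.

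Given this equivalence, both implications are immediate. For the reverse direction, if $\mathbf{0}\neq\mathbf{u}_{0}\in\mathbb{R}^{n}$ solves the $G_{i}$ system, then $(\mathbf{u}_{0},\mathbf{0})\in\mathbb{R}^{s}$ is a nonzero solution of the $H_{j}$ system. For the forward direction, a nonzero solution $\mathbf{x}_{0}=(\mathbf{u}_{0},\mathbf{v}_{0})$ of the $H_{j}$ system must have $\mathbf{v}_{0}=\mathbf{0}$, since $H_{r}(\mathbf{x}_{0})=\lVert\mathbf{v}_{0}\rVert_{2}^{2}=0$; therefore $\mathbf{u}_{0}\neq\mathbf{0}$ and $G_{i}(\mathbf{u}_{0})=H_{i}(\mathbf{x}_{0})=0$ for all $i$.

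There is no genuine obstacle here; the only step that is not purely formal — and the sole reason the $B_{r}$ block is included — is that over $\mathbb{R}$ the equation $\lVert\mathbf{v}\rVert_{2}^{2}=0$ forces $\mathbf{v}=\mathbf{0}$, which is precisely what prevents the padding coordinates from producing spurious nonzero solutions. (The analogous construction would fail over $\mathbb{C}$, where $\mathbf{v}^{\top}\mathbf{v}$ is an isotropic form, in contrast to Lemma~\ref{complexrealquad}.) Note also that the trivial equations $B_{m+1},\dots,B_{r-1}$ contribute nothing, so the result holds for every $r\geq m+1$ and every $s\geq n$; the hypothesis $r\geq m+1$ merely leaves room for the one equation $B_{r}$.
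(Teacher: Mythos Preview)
Your proof is correct and is precisely the immediate block-decomposition argument the authors have in mind; the paper in fact omits the proof, stating only that it is immediate. Your added remark about why the construction is specific to $\mathbb{R}$ (positivity of $\lVert\mathbf{v}\rVert_{2}^{2}$) is a nice clarification beyond what the paper provides.
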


The following set of polynomials $C_{G}$ allows us to relate feasibility of a polynomial
system to $3$-colorability of a graph $G$.  An instance of this encoding (after applying Lemmas~\ref{complexrealquad} and \ref{quadfeasmoreeqvars} appropriately) is Example~\ref{3OL_ex}.

\begin{definition}
The \textit{color encoding} of a graph $G=(V,E)$ with $v$ vertices is the set of $4v$
quadratic polynomials in $2v+1$ unknowns $x_1,\ldots,x_v,y_1,\ldots,y_v,z$:
\begin{equation}
\label{CGdef}C_{G}:=%
\begin{cases}
x_{i}y_{i}-z^{2},\quad y_{i}z-x_{i}^{2},\quad x_{i}z-y_{i}^{2}, &
i=1,\dots,v,\\
\sum_{j:\{i,j\} \in E} (x_{i}^{2} + x_{i} x_{j} + x_{j}^{2}), & i=1,\dots,v.
\end{cases}
\end{equation}
\end{definition}

\begin{lemma}
\label{colorencodinglemma} $C_{G}$ has a nonzero
complex solution if and only if  $G$ is $3$-colorable.
\end{lemma}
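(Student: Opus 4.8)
The plan is to prove the two implications separately: in one direction exhibit an explicit solution built from a proper coloring, and in the other extract a proper coloring from an arbitrary nonzero solution. The one algebraic fact that does all the work is that for cube roots of unity $a,b\in\{1,\omega,\omega^{2}\}$ one has $a^{2}+ab+b^{2}=0$ if and only if $a\neq b$: indeed $a^{2}+ab+b^{2}=(a^{3}-b^{3})/(a-b)=0$ when $a\neq b$ because $a^{3}=b^{3}=1$, whereas $a^{2}+ab+b^{2}=3a^{2}\neq 0$ when $a=b$.

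For the ``if'' direction, suppose $\nu\colon V\to\{1,\omega,\omega^{2}\}$ is a proper $3$-coloring. I would set $z=1$, $x_{i}=\nu(i)$, and $y_{i}=\nu(i)^{2}$ for each $i=1,\dots,n$. Using $\nu(i)^{3}=1$ the three ``vertex'' polynomials vanish: $x_{i}y_{i}-z^{2}=\nu(i)^{3}-1=0$, $y_{i}z-x_{i}^{2}=\nu(i)^{2}-\nu(i)^{2}=0$, and $x_{i}z-y_{i}^{2}=\nu(i)-\nu(i)^{4}=\nu(i)-\nu(i)=0$. For an edge $\{i,j\}\in E$, properness gives $\nu(i)\neq\nu(j)$, so the identity above yields $x_{i}^{2}+x_{i}x_{j}+x_{j}^{2}=0$. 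Since $z=1$, this is a nonzero common zero of $C_{G}$.

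For the ``only if'' direction, let $(\mathbf{x},\mathbf{y},z)$ be a common nonzero complex zero of $C_{G}$. First I would rule out $z=0$: then $y_{i}z-x_{i}^{2}=0$ and $x_{i}z-y_{i}^{2}=0$ force $x_{i}=y_{i}=0$ for all $i$, contradicting nonzeroness; hence $z\neq 0$. Next, $x_{i}y_{i}=z^{2}\neq 0$ forces $x_{i}\neq 0$ and $y_{i}\neq 0$ for every $i$. Substituting $y_{i}=x_{i}^{2}/z$ (from $y_{i}z=x_{i}^{2}$) into $x_{i}y_{i}=z^{2}$ gives $x_{i}^{3}=z^{3}$, so $\nu(i):=x_{i}/z$ is a well-defined cube root of unity; one then checks that the remaining equation $x_{i}z=y_{i}^{2}$ is automatically satisfied. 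Dividing the edge polynomial by $z^{2}\neq 0$ gives $\nu(i)^{2}+\nu(i)\nu(j)+\nu(j)^{2}=0$ for each $\{i,j\}\in E$, which by the displayed identity forces $\nu(i)\neq\nu(j)$. Thus $\nu$ is a proper $3$-coloring and $G$ is $3$-colorable.

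There is no serious obstacle here; the only points requiring slight care are handling the degenerate case $z=0$ and verifying that the three vertex polynomials, after normalizing by $z$, cut out exactly the set of cube roots of unity (and nothing larger). Everything else is routine substitution and the single identity about $a^{2}+ab+b^{2}$.
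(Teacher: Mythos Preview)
Your proof is correct and follows essentially the same approach as the paper: both directions use the identity $a^{2}+ab+b^{2}=(a^{3}-b^{3})/(a-b)$ for distinct cube roots of unity, with the ``if'' direction built from $z=1$, $x_{i}=\nu(i)$, $y_{i}=\nu(i)^{-1}$ and the ``only if'' direction obtained by first ruling out $z=0$ and then normalizing. Your version is slightly more explicit in checking the vertex equations and in working with ratios $x_{i}/z$ rather than setting $z=1$, but these are cosmetic differences.
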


\begin{proof}
Suppose that $G$ is $3$-colorable and let $[x_{1},\dots,x_{v}]^\top \in
\mathbb{C}^{n}$ be a proper $3$-coloring of $G$, encoded using cube roots of unity as in Example~\ref{3OL_ex}.
%(so that each $x_{i}$ is a cube troot of unity).
Set $z = 1$ and $y_{i} = 1/x_{i}$ for $i = 1,\dots,v$; we claim that these
numbers are a common zero of $C_{G}$. It is clear that the first $3v$
polynomials in \eqref{CGdef} evaluate to zero. Next consider any expression of
the form $p_i =  \sum_{j:\{i,j\} \in E} x_{i}^{2} + x_{i} x_{j} + x_{j}^{2}$. Since
we have a $3$-coloring, $x_{i} \neq x_{j}$ for $\{i,j\} \in E$; thus,
\[
0 = x_{i}^{3}-x_{j}^{3} = \frac{x_{i}^{3}-x_{j}^{3}}{x_{i} - x_{j}} =
x_{i}^{2} + x_{i} x_{j} + x_{j}^{2}.
\]
In particular, each $p_i$ evaluates to zero as desired.  

Conversely, suppose that the polynomials $C_{G}$ have a common
nontrivial solution,
\[
\mathbf{0} \neq [x_{1},\dots,x_{v},y_{1},\dots,y_{v},z]^\top \in\mathbb{C}^{2v+1}.
\]
If $z=0$, then all of the $x_{i}$ and $y_{i}$ must be zero as well. Thus  $z\neq0$, and since the equations are homogenous, we may assume that our
solution has $z=1$. It follows that $x_{i}^{3}=1$ for all $i$ so that
$[x_{1},\dots,x_{v}]^{\top}$ is a $3$-coloring of $G$. We are left with verifying
that it is proper. If $\{i,j\}\in E$ and $x_{i}=x_{j}$, then $x_{i}^{2}+x_{i}x_{j}+x_{j}^{2}=3x_{i}^{2}$; otherwise, 
if $\{i,j\} \in E$ and $x_{i} \neq x_{j}$, then $x_{i}^{2}+x_{i}x_{j}+x_{j}^{2} = 0$.  Thus, $p_i = 3r_ix_i^2$, where $r_i$ is the number of vertices $j$ 
adjacent to $i$ that have $x_{i}=x_{j}$.  It follows that $r =0$ so that $x_{i}\neq x_{j}$ for all $\{i,j\}\in E$, and thus $G$ has a proper $3$-coloring.
\end{proof}

We close with a proof that quadratic feasibility over $\mathbb{C}$ (and therefore, $\mathbb{R}$) is NP-hard.

\begin{proof}
[of Theorem~\ref{QFHard}]Given a graph $G$, construct the color encoding polynomials $C_{G}$. From Lemma~\ref{colorencodinglemma}, the homogeneous quadratics $C_{G}$ have a nonzero complex solution if and only if $G$ is $3$-colorable. Thus, solving Problem~\ref{HomQuadFeasDef} over $\mathbb{C}$ in polynomial time would allow us to do the same for graph $3$-colorability.
\end{proof}

% We end this section with a restriction of quadratic feasibility that is
% required for our proof that the eigenvalue problem for symmetric tensors is NP-hard.

% \begin{problem}
%[$\mathrm{{SQF}_{\mathbb{R}}}$]\label{quadfeas3} Let $G_{i}(\mathbf{x}%
%)=\mathbf{x}^{\top}S_{i}\mathbf{x}$ for $i=1,\dots,n$ be $n$ homogeneous, real
%quadratic forms such that
%\[S_{i}(j,k) = S_{i}(k,j) = S_{j}(i,k) = S_{j}(k,i) = S_{k}(i,j) = S_{k} (j,i).\]
%Determine if the system of equations $\{G_{i}(\mathbf{x})=0\}_{i=1}^{n}$ has a
%nontrivial real solution $\mathbf{0}\neq\mathbf{x}\in\mathbb{R}^{n}$.
%\end{problem}

%It turns out that this problem is as difficult as quadratic feasibility over $\mathbb{R}$.

%On the other hand, tensor eigenvectors are explicitly determined as solutions to certain polynomial systems of equations.  Therefore, for the purposes of this work, we say that it is NP-\textit{hard to approximate} an eigenvector $\mathbf{x} \in \mathbb{F}^n$ with $\lVert\mathbf{x}\rVert_{\infty} = 1$ to within $\varepsilon(n)$ if (unless $\mathit{P} = \mathit{NP}$) there is no polynomial-time algorithm that always produces an approximate solution $\hat{\mathbf{x}} \in \mathbb{F}$ to system \eqref{eq:Eigl2} satisfying:
%\[ \lVert\mathbf{x} - \hat{\mathbf{x}}\rVert_{\infty} < \varepsilon.\]
%

%%%%%%%%%
\section{Bilinear system is NP-hard}\label{bilineareqs}
%%%%%%%%%

We next consider some natural bilinear extensions to the quadratic
feasibility problems encountered earlier, generalizing Example~\ref{neq2Hyperdet} below. The main result is
Theorem~\ref{zerosingvaluethm}, which shows that the following 
feasibility problem over $\mathbb{R}$ or $\mathbb{C}$ is NP-hard. In Section~\ref{sec:SVP}, we use this  to show that certain singular value problems are also NP-hard.

\begin{problem}
[Tensor Bilinear Feasibility]\label{tripletensorfeas} Let $\mathbb{F} = \mathbb{R}$ or $\mathbb{C}$. Let $\mathcal{A} =
\llbracket  a_{ijk} \rrbracket \in\mathbb{Q}^{l \times m \times n}$, and set $A_{i}(j,k) = a_{ijk}$, $B_{j}(i,k) = a_{ijk}$, and
$C_{k}(i,j) = a_{ijk}$ to be all the slices of $\mathcal{A}$. Decide if the
following set of equations:
\begin{equation}
\label{tensorquadeq}
\begin{cases}
\mathbf{v}^{\top} A_{i} \mathbf{w} = 0, & i = 1,\dots,l;\\
\mathbf{u}^{\top} B_{j} \mathbf{w} = 0, & j = 1,\dots,m;\\
\mathbf{u}^{\top} C_{k} \mathbf{v} = 0, & k = 1,\dots,n;
\end{cases}
\end{equation}
has a solution $\mathbf{u}\in\mathbb{F}^{l}$, $\mathbf{v}\in\mathbb{F}^{m}$, $\mathbf{w}\in\mathbb{F}^{n}$, with all $\mathbf{u},\mathbf{v},\mathbf{w}$ nonzero.
\end{problem}

Multilinear systems of equations have been studied since the early 19th
century. For instance, the following result was known more than 150 years ago \cite{Cay}.

\begin{example}
[$2\times2\times2$ hyperdeterminant]\label{neq2Hyperdet} For $\mathcal{A}%
=\llbracket a_{ijk}\rrbracket\in\mathbb{C}^{2\times2\times2}$, define
\begin{multline*}
\operatorname*{Det}\nolimits_{2,2,2}(\mathcal{A}):=\frac{1}{4}\biggl[\det
\left(
\begin{bmatrix}
a_{000} & a_{010}\\
a_{001} & a_{011}%
\end{bmatrix}
+%
\begin{bmatrix}
a_{100} & a_{110}\\
a_{101} & a_{111}%
\end{bmatrix}
\right)  -\det\left(
\begin{bmatrix}
a_{000} & a_{010}\\
a_{001} & a_{011}%
\end{bmatrix}
-%
\begin{bmatrix}
a_{100} & a_{110}\\
a_{101} & a_{111}%
\end{bmatrix}
\right)  \biggr]^{2}\\
-4\det%
\begin{bmatrix}
a_{000} & a_{010}\\
a_{001} & a_{011}
\end{bmatrix}
\det
\begin{bmatrix}
a_{100} & a_{110}\\
a_{101} & a_{111}
\end{bmatrix}.
\end{multline*}
Given a matrix $A \in\mathbb{C}^{n \times n}$, the pair of linear equations $\mathbf{x}^{\top}
A =\mathbf{0}$, $A\mathbf{y} = \mathbf{0}$ has a nontrivial solution ($\mathbf{x}, \mathbf{y}$
both nonzero) if and only if $\det(A) = 0$. Cayley proved a multilinear version that parallels the
matrix case.  The following system of bilinear equations:{\small
\begin{align*}
a_{000}x_{0}y_{0}+a_{010}x_{0}y_{1}+a_{100}x_{1}y_{0}+a_{110}x_{1}y_{1}  &
=0, & a_{001}x_{0}y_{0}+a_{011}x_{0}y_{1}+a_{101}x_{1}y_{0}+a_{111}x_{1}y_{1}
&  =0,\\
a_{000}x_{0}z_{0}+a_{001}x_{0}z_{1}+a_{100}x_{1}z_{0}+a_{101}x_{1}z_{1}  &
=0, & a_{010}x_{0}z_{0}+a_{011}x_{0}z_{1}+a_{110}x_{1}z_{0}+a_{111}x_{1}z_{1}
&  =0,\\
a_{000}y_{0}z_{0}+a_{001}y_{0}z_{1}+a_{010}y_{1}z_{0}+a_{011}y_{1}z_{1}  &
=0, & a_{100}y_{0}z_{0}+a_{101}y_{0}z_{1}+a_{110}y_{1}z_{0}+a_{111}y_{1}z_{1}
&  =0,
\end{align*}}%
has a nontrivial solution ($\mathbf{x}, \mathbf{y}, \mathbf{z} \in \mathbb{C}^2$ all
nonzero) if and only if $\operatorname*{Det}_{2,2,2}(\mathcal{A})=0$. \qed
\end{example}

A remarkable result established in \cite{GKZ2,GKZ1} is that hyperdeterminants generalize to tensors of arbitrary orders, provided that certain dimension restrictions \eqref{eq:gkz} are satisfied. 
We do not formally define the hyperdeterminant\footnote{Roughly speaking, the hyperdeterminant is a polynomial that defines the set of all tangent hyperplanes to the set of rank-$1$ tensors in $\mathbb{C}^{l \times m \times n}$. Gelfand, Kapranov, and Zelevinsky showed  that this set is a hypersurface (i.e., defined by the vanishing of a single polynomial) if and only if the condition \eqref{eq:gkz} is satisfied. Also, to be mathematically precise, these sets lie in projective space.} here; however, it suffices to know that $\operatorname*{Det}_{l,m,n}$ is a homogeneous polynomial with integer coefficients  in the variables $x_{ijk}$ where $i = 1,\dots,l$, $j = 1,\dots, m$, and $k =1,\dots,n$.  Such a polynomial defines a function $\operatorname*{Det}_{l,m,n} : \mathbb{C}^{l \times m \times n} \to \mathbb{C}$ by evaluation at $\mathcal{A} = \llbracket  a_{ijk} \rrbracket  \in \mathbb{C}^{l \times m \times n}$, i.e., setting $x_{ijk} = a_{ijk}$.
 The following generalizes Example~\ref{neq2Hyperdet}.  %We state their result here for the special case of $3$-tensors.

\begin{theorem}[Gelfand--Kapranov--Zelevinsky]\label{thm:gkz}
Given a tensor $\mathcal{A} \in \mathbb{C}^{l \times m \times n}$, the hyperdeterminant $\operatorname*{Det}_{l,m,n}$ is defined if and only if $l,m,n$ satisfy:
\begin{equation}\label{eq:gkz}
l \le m+n - 1, \quad m \le l + n - 1, \quad n \le l + m - 1.
\end{equation}
In particular, hyperdeterminants exist when $l = m = n$.  Given any $\mathcal{A} = \llbracket  a_{ijk} \rrbracket \in\mathbb{C}^{l \times m \times n}$ with \eqref{eq:gkz} satisfied, the system
\begin{equation}\label{GKZbilinear}
\begin{split}
\sum\nolimits_{j,k=1}^{m,n}a_{ijk}v_{j}w_{k} &=  0\quad  i = 1,\dots, l; \\
\sum\nolimits_{i,k=1}^{l,n}a_{ijk}u_{i}w_{k} &=  0,\quad  j = 1,\dots,m; \\
\sum\nolimits_{i,j=1}^{l,m}a_{ijk}u_{i}v_{j}   &=  0,\quad  k = 1,\dots,n;
\end{split}
\end{equation}
has a nontrivial complex solution if and only if $\operatorname*{Det}\nolimits_{l,m,n}(\mathcal{A}) = 0$.
\end{theorem}
\begin{remark}
Condition \eqref{eq:gkz} is the $3$-tensor equivalent of ``$m \le n$ and $n \le m$" for the existence of determinants of matrices.
\end{remark}

We shall also examine the following closely related problem. Such systems of bilinear equations have  appeared in other contexts \cite{CohenTomasi97}.
\begin{problem}
[Triple Bilinear Feasibility]\label{triquadfeas} Let $\mathbb{F} = \mathbb{R}$ or $\mathbb{C}$. 
Let $A_{k}, B_{k}, C_{k}\in\mathbb{Q}^{n\times n}$ for $k=1,\dots,n$. Decide if the following set of equations:
\begin{equation}
\begin{cases}
\mathbf{v}^{\top}A_{i}\mathbf{w}=0, & i=1,\dots,n;\\
\mathbf{u}^{\top}B_{j}\mathbf{w}=0, & j=1,\dots,n;\\
\mathbf{u}^{\top}C_{k}\mathbf{v}=0, & k=1,\dots,n;
\end{cases}
\label{tquadeq}
\end{equation}
has a solution $\mathbf{u}, \mathbf{v}, \mathbf{w}\in\mathbb{F}^{n}$, with all $\mathbf{u},\mathbf{v},\mathbf{w}$ nonzero.
\end{problem}

One difference between Problem~\ref{triquadfeas} and Problem~\ref{tripletensorfeas} is that  coefficient matrices $A_i, B_j, C_k$ in \eqref{tquadeq} are allowed to be arbitrary rather than slices of a tensor $\mathcal{A}$ as in \eqref{tensorquadeq}. Furthermore, we always assume $l=m=n$ in Problem~\ref{triquadfeas} whereas Problem~\ref{tripletensorfeas} has no such requirement.

If one could show that Problem~\ref{triquadfeas} is NP-hard for $A_{i}, B_{j}, C_{k}$ arising from $\mathcal{A} \in \mathbb{C}^{n \times n \times n}$ or that Problem~\ref{tripletensorfeas}  is NP-hard on the subset of problems where $\mathcal{A} \in \mathbb{C}^{l \times m \times n}$ has $l,m,n$ satisfying \eqref{eq:gkz}, then deciding whether the bilinear system \eqref{GKZbilinear} has a nonzero solution would be NP-hard. It would follow that deciding whether the hyperdeterminant  is zero is also NP-hard. Unfortunately, our proofs do not achieve either of these. The hardness of the hyperdeterminant is therefore still open (Conjecture~\ref{hyp_conj}). 

Before proving Theorem~\ref{zerosingvaluethm}, we first verify that, as in the case of
quadratic feasibility, it is enough to show NP-hardness of the problem over $\mathbb{C}$.

\begin{lemma}
\label{tensorfeascomplex} Let $\mathcal{A}\in\mathbb{R}^{l\times m\times n}$.
There is a tensor $\mathcal{B}\in\mathbb{R}^{2l\times2m\times2n}$ such that
tensor bilinear feasibility over $\mathbb{R}$ for $\mathcal{B}$ is the same
as tensor bilinear feasibility over $\mathbb{C}$ for $\mathcal{A}$.
\end{lemma}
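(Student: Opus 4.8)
The plan is to mimic the complex-to-real passage for quadratic forms established in Lemma~\ref{complexrealquad}, but adapted to the trilinear setting of tensor quadratic feasibility. Recall that a solution to \eqref{tensorquadeq} over $\mathbb{C}$ consists of complex vectors $\mathbf{u} = \mathbf{u}_1 + i\mathbf{u}_2 \in \mathbb{C}^l$, $\mathbf{v} = \mathbf{v}_1 + i\mathbf{v}_2 \in \mathbb{C}^m$, $\mathbf{w} = \mathbf{w}_1 + i\mathbf{w}_2 \in \mathbb{C}^n$, and each scalar equation $\mathbf{v}^{\top} A_i \mathbf{w} = 0$ (note: \emph{no} conjugation, since these come from a tensor contraction) splits into a real part and an imaginary part, each of which is a real bilinear form in the doubled real variables $(\mathbf{v}_1,\mathbf{v}_2)$ and $(\mathbf{w}_1,\mathbf{w}_2)$. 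The key algebraic identity is the usual one: if $A$ is a real matrix, then $(\mathbf{v}_1 + i\mathbf{v}_2)^{\top} A (\mathbf{w}_1 + i\mathbf{w}_2) = (\mathbf{v}_1^{\top}A\mathbf{w}_1 - \mathbf{v}_2^{\top}A\mathbf{w}_2) + i(\mathbf{v}_1^{\top}A\mathbf{w}_2 + \mathbf{v}_2^{\top}A\mathbf{w}_1)$, so each of the two real parts is realized by a $2m \times 2n$ real matrix built in $2\times 2$ block fashion from $A$ (the block patterns $\left[\begin{smallmatrix} A & 0 \\ 0 & -A\end{smallmatrix}\right]$ and $\left[\begin{smallmatrix} 0 & A \\ A & 0\end{smallmatrix}\right]$, exactly as in Lemma~\ref{complexrealquad}).

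First I would define $\mathcal{B} \in \mathbb{R}^{2l\times 2m\times 2n}$ by specifying its slices. For each $i \in \{1,\dots,l\}$, the single equation $\mathbf{v}^{\top}A_i\mathbf{w}=0$ over $\mathbb{C}$ becomes two equations over $\mathbb{R}$, so the $2l$ slices $A'_{i'}$ of $\mathcal{B}$ (in the first mode) should be the $2l$ real matrices $\left[\begin{smallmatrix} A_i & 0 \\ 0 & -A_i\end{smallmatrix}\right]$ and $\left[\begin{smallmatrix} 0 & A_i \\ A_i & 0\end{smallmatrix}\right]$, $i=1,\dots,l$. The subtlety is that $\mathcal{B}$ is a \emph{single} tensor, so its three families of slices are not independent — they are all determined by the entries $b_{i'j'k'}$. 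So the cleaner route is to directly write down the entries of $\mathcal{B}$: index the doubled coordinate sets as $\{1,\dots,l\}\times\{1,2\}$, etc., and set $b_{(i,\epsilon_1)(j,\epsilon_2)(k,\epsilon_3)}$ to be $a_{ijk}$ times the appropriate sign/incidence factor coming from the real/imaginary decomposition, so that the first-mode slices are the $2l$ matrices above, \emph{and simultaneously} the second-mode and third-mode slices of this same $\mathcal{B}$ turn out to be the analogous $2m$ and $2n$ real block matrices that encode the real and imaginary parts of $\mathbf{u}^{\top}B_j\mathbf{w}=0$ and $\mathbf{u}^{\top}C_k\mathbf{v}=0$. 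This consistency is exactly the point where one must check that a single choice of block-sign pattern works for all three modes at once; I expect this to be the main obstacle, and it will be handled by a careful bookkeeping argument (essentially: $\mathcal{B}$ is $\mathcal{A}$ contracted against a fixed $2\times 2\times 2$ "complexification template" tensor via multilinear matrix multiplication \eqref{mmm}, and one verifies the template has the required slice structure in all three directions).

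Having defined $\mathcal{B}$, the verification splits into two directions. \textbf{($\Rightarrow$)} If $\mathcal{A}$ has a nonzero complex solution $(\mathbf{u},\mathbf{v},\mathbf{w})$, write $\mathbf{u}=\mathbf{u}_1+i\mathbf{u}_2$ etc., form the real vectors $\mathbf{u}'=(\mathbf{u}_1,\mathbf{u}_2)\in\mathbb{R}^{2l}$, $\mathbf{v}'=(\mathbf{v}_1,\mathbf{v}_2)\in\mathbb{R}^{2m}$, $\mathbf{w}'=(\mathbf{w}_1,\mathbf{w}_2)\in\mathbb{R}^{2n}$; these are nonzero because $\mathbf{u},\mathbf{v},\mathbf{w}$ are, and by the identity above they satisfy all $2l+2m+2n$ real equations of $\mathcal{B}$. \textbf{($\Leftarrow$)} Conversely, given a nonzero real solution $(\mathbf{u}',\mathbf{v}',\mathbf{w}')$ of $\mathcal{B}$, split each back into two halves and recombine into complex vectors $\mathbf{u},\mathbf{v},\mathbf{w}$; nonzeroness of each real half is preserved under this correspondence, and the paired real equations reassemble into $\mathbf{v}^{\top}A_i\mathbf{w}=0$, $\mathbf{u}^{\top}B_j\mathbf{w}=0$, $\mathbf{u}^{\top}C_k\mathbf{v}=0$ over $\mathbb{C}$. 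This establishes the equivalence of $\mathrm{TQF}_{\mathbb{R}}$ for $\mathcal{B}$ with $\mathrm{TQF}_{\mathbb{C}}$ for $\mathcal{A}$, and since $\mathcal{B}$ is computable from $\mathcal{A}$ in polynomial time, the complexity of the two problems coincides.
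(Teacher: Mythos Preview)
Your proposal is correct and matches the paper's approach: the paper defines $\mathcal{B}$ precisely by specifying the first-mode slices $\left[\begin{smallmatrix}A_i&0\\0&-A_i\end{smallmatrix}\right]$, $\left[\begin{smallmatrix}0&A_i\\A_i&0\end{smallmatrix}\right]$ that you wrote down and then declares the verification ``straightforward,'' whereas you go further and (rightly) flag that the second- and third-mode slices are then forced and must be checked to encode the remaining equations. When you carry out that bookkeeping you will find the induced blocks in the other two modes are $\left[\begin{smallmatrix}B_j&0\\0&B_j\end{smallmatrix}\right]$, $\left[\begin{smallmatrix}0&-B_j\\B_j&0\end{smallmatrix}\right]$ (and similarly for $C_k$), which encode the real and imaginary parts of $\bar{\mathbf{u}}^\top B_j\mathbf{w}=0$ and $\bar{\mathbf{u}}^\top C_k\mathbf{v}=0$ rather than the unconjugated versions---but this is harmless, since complex feasibility of \eqref{tensorquadeq} is invariant under conjugating any single vector.
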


\begin{proof}
Let $A_i = [a_{ijk}]_{j,k=1}^{m,n}$  for $i = 1,\dots,l$.  Consider the tensor $\mathcal{B} = \llbracket b_{ijk} \rrbracket \in
\mathbb{R}^{2l \times2m \times2n}$ given by setting its slices $B_{i}(j,k) =
b_{ijk}$ as follows:
\[
B_{i} =
\begin{bmatrix}
A_{i} & 0\\
0 & -A_{i}%
\end{bmatrix}
, \ B_{l+i} =
\begin{bmatrix}
0 & A_{i}\\
A_{i} & 0
\end{bmatrix}
, \quad i = 1,\dots,l.
\]
It is straightforward to check that nonzero real solutions to \eqref{tensorquadeq} for
the tensor $\mathcal{B}$ correspond in a one-to-one manner with nonzero complex
solutions to \eqref{tensorquadeq} for $\mathcal{A}$.
\end{proof}

We now come to the proof of the main theorem of this section.  For the argument, we shall need the following elementary fact of linear algebra (easily proved by induction on the number of equations):  a system of $m$ homogeneous linear equations in $n$ unknowns with $m < n$ has at least  one nonzero solution.

\begin{theorem}\label{zerosingvaluethm}
Let $\mathbb{F} = \mathbb{R}$ or $\mathbb{C}$.  Graph $3$-colorability is polynomially reducible to
Problem~\ref{tripletensorfeas} (tensor bilinear feasibility). Thus, Problem~\ref{tripletensorfeas} is NP-hard.
\end{theorem}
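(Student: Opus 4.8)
The plan is to compose the color-encoding reduction of Section~\ref{quadfeas} with a gadget that converts homogeneous quadratic feasibility into tensor (bilinear-slice) feasibility, and then to pass from $\mathbb{C}$ to $\mathbb{R}$ using Lemma~\ref{tensorfeascomplex}. Given a graph $G$ on $n_{0}$ vertices I would first form the color encoding $C_{G}$ of \eqref{CGdef}; by Lemma~\ref{colorencodinglemma}, together with Remark~\ref{reducenumeqremark} to keep the number of polynomials linear in $n_{0}$, this is a list $F_{1},\dots,F_{N}$ of homogeneous quadratic forms in $D:=2n_{0}+1$ indeterminates $\mathbf{x}$ such that $\{F_{t}(\mathbf{x})=0\}_{t=1}^{N}$ has a nonzero complex solution iff $G$ is $3$-colorable. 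Clearing denominators (harmless, since each equation is set to $0$), I write $F_{t}(\mathbf{x})=\mathbf{x}^{\top}S_{t}\mathbf{x}$ with each $S_{t}$ an integer symmetric $D\times D$ matrix.

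Next I would build the gadget: a tensor $\mathcal{A}\in\mathbb{C}^{l\times D\times D}$ with integer entries, where $l:=\binom{D}{2}+N+1$, by prescribing only its first-direction slices $A_{i}$ (so $a_{ijk}=(A_{i})_{jk}$, which then fixes the slices $B_{j}$ and $C_{k}$ as well): take the $\binom{D}{2}$ antisymmetric matrices $E_{ab}-E_{ba}$ for $1\le a<b\le D$, the $N$ symmetric matrices $S_{t}$, and one zero matrix, the last indexed by some $i_{0}$. I then claim that tensor quadratic feasibility over $\mathbb{C}$ for $\mathcal{A}$ holds iff $G$ is $3$-colorable. Unwinding \eqref{tensorquadeq}, the constraints $\mathbf{v}^{\top}A_{i}\mathbf{w}=0$ are precisely the equations $v_{a}w_{b}-v_{b}w_{a}=0$ (for $a<b$) together with $\mathbf{v}^{\top}S_{t}\mathbf{w}=0$; when $\mathbf{v},\mathbf{w}\neq\mathbf{0}$ the first family forces $\mathbf{v}\mathbf{w}^{\top}$ to be symmetric, hence $\mathbf{w}=\lambda\mathbf{v}$ with $\lambda\neq0$, and then $\mathbf{v}^{\top}S_{t}\mathbf{w}=\lambda F_{t}(\mathbf{v})$, so the second family becomes $F_{t}(\mathbf{v})=0$. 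Thus a feasible triple yields a nonzero common complex root of $C_{G}$, hence a $3$-coloring; conversely a $3$-coloring gives a nonzero complex root $\mathbf{x}$, and $\mathbf{v}=\mathbf{w}=\mathbf{x}$ kills every $A$-constraint.

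What remains is to check that the $B$- and $C$-constraints—which I do not get to choose—can be satisfied. Setting $M(\mathbf{u}):=\sum_{i=1}^{l}u_{i}A_{i}$, the relations $(B_{j})_{ik}=(C_{k})_{ij}=a_{ijk}=(A_{i})_{jk}$ give, after a one-line computation, $\mathbf{u}^{\top}B_{j}\mathbf{w}=(M(\mathbf{u})\mathbf{w})_{j}$ and $\mathbf{u}^{\top}C_{k}\mathbf{v}=(M(\mathbf{u})^{\top}\mathbf{v})_{k}$, so these constraints say exactly $M(\mathbf{u})\mathbf{w}=\mathbf{0}$ and $M(\mathbf{u})^{\top}\mathbf{v}=\mathbf{0}$. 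Taking $\mathbf{u}=\mathbf{e}_{i_{0}}\neq\mathbf{0}$ makes $M(\mathbf{u})=A_{i_{0}}=0$, so they hold automatically; hence the construction reduces $3$-colorability to tensor quadratic feasibility over $\mathbb{C}$ for $\mathcal{A}$. Finally, Lemma~\ref{tensorfeascomplex} supplies an integer tensor $\mathcal{B}\in\mathbb{R}^{2l\times2D\times2D}$ for which tensor quadratic feasibility over $\mathbb{R}$ holds iff it holds over $\mathbb{C}$ for $\mathcal{A}$. All sizes are polynomial in $n_{0}$, so this is a polynomial reduction from graph $3$-colorability to Problem~\ref{tripletensorfeas}, and NP hardness follows from that of $3$-colorability \cite{GJ}.

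The part that needs care, and that I expect to be the only genuine obstacle, is exactly this coupling among the three slice families: one cannot independently dictate the bilinear forms attached to the three directions of a single tensor, so the reduction must be organized so that all the combinatorial content sits in the $A$-constraints (which emulate the quadratic system via the forced proportionality $\mathbf{w}=\lambda\mathbf{v}$) while the $B$- and $C$-constraints are made vacuous. The dummy zero slice, together with the observation that the $B$/$C$ constraints see $\mathbf{u}$ only through $M(\mathbf{u})$, is what neutralizes them; the remaining ingredients—denominator clearing, the complexification bookkeeping of Lemma~\ref{tensorfeascomplex}, and the polynomial-size estimate—are routine.
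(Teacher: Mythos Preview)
Your proof is correct and follows essentially the same architecture as the paper's: encode the $2\times 2$ minor conditions as $A$-slices to force $\mathbf{v}$ and $\mathbf{w}$ proportional, stack the color-encoding quadratics $C_{G}$ on top, and then invoke Lemma~\ref{tensorfeascomplex} to pass from $\mathbb{C}$ to $\mathbb{R}$. The one place you diverge is in neutralizing the $B$- and $C$-constraints. The paper simply counts dimensions: once $\mathbf{v}=\mathbf{w}$ is fixed, the $B$- and $C$-constraints are $2(2k+1)=4k+2$ homogeneous linear conditions on $\mathbf{u}\in\mathbb{C}^{l}$ with $l=k(2k+5)>4k+2$, so a nonzero $\mathbf{u}$ exists. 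You instead append a single zero slice $A_{i_{0}}$ and observe that the $B$/$C$ constraints read $M(\mathbf{u})\mathbf{w}=\mathbf{0}$, $M(\mathbf{u})^{\top}\mathbf{v}=\mathbf{0}$ with $M(\mathbf{u})=\sum_{i}u_{i}A_{i}$, so $\mathbf{u}=\mathbf{e}_{i_{0}}$ kills them outright. Your device is a bit more explicit and conceptually tidy (it names the solution rather than asserting its existence), while the paper's version avoids the extra slice; both yield the same tensor up to one additional row, and neither changes the polynomial-size bound.
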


\begin{proof}
[of Theorem~\ref{zerosingvaluethm}]Given a graph $G = (V,E)$ with $v = \lvert V\rvert$, we shall form a tensor $\mathcal{A} = \mathcal{A}_G \in\mathbb{Z}^{l
\times m \times n}$ with $l = v(2v+5)$ and $m = n = (2v+1)$ having the
property that system \eqref{tensorquadeq} has a nonzero complex solution if
and only if $G$ has a proper $3$-coloring. 

Consider the following vectors of unknowns:
\[
\mathbf{v} = [x_{1},\dots,x_{v},y_{1}, \dots,y_{v},z]^{\top} \quad\text{and}\quad \mathbf{w} = [\hat{x}_{1},\dots,\hat{x}_{v},\hat{y}_{1},\dots, \hat
{y}_{v},\hat{z}]^{\top}.
\]
The $2 \times2$ minors of the
matrix formed by placing $\mathbf{v}$ and $\mathbf{w}$ side-by-side are
$v(2v+1)$ quadratics, $\mathbf{v}^{\top}A_{i} \mathbf{w}$, $i = 1,\dots,v(2v+1)$, for matrices
$A_{i} \in\mathbb{Z}^{(2v+1) \times(2v +1)}$ with entries in $\{-1,0,1\}$. By
construction, these polynomials have a common nontrivial zero $\mathbf{v},\mathbf{w}
$ if and only if there is $c \in\mathbb{C }$ such that $\mathbf{v}
= c \mathbf{w}$.  Next, we write down the $3v$ polynomials $\mathbf{v}^{\top}A_{i} \mathbf{w}$
for $i = v(2v+1)+1,\dots,v(2v+1)+3v$ whose vanishing (along with the equations
above) implies that the $x_{i}$ are cube roots of unity; see
\eqref{CGdef}. We also encode $v$ equations $\mathbf{v}^{\top}A_{i}
\mathbf{w}$ for $i = v(2v+4)+1,\dots,v(2v+4)+v$ whose vanishing implies that
$x_{i}$ and $x_{j}$ are different if $\{i,j\} \in E $. Finally, $\mathcal{A}_G =
\llbracket a_{ijk} \rrbracket \in\mathbb{Z}^{l \times m \times n}$ is defined by
$a_{ijk} = A_{i}(j,k).$

We verify that $\mathcal{A}$ has the claimed property. Suppose that there are
three nonzero complex vectors $\mathbf{u},\mathbf{v},\mathbf{w}$ which satisfy
tensor bilinear feasibility. Then from construction,
$\mathbf{v} = c \mathbf{w}$ for some $c \neq 0$, and also $\mathbf{v}$ encodes a proper
$3$-coloring of the graph $G$ by Lemma~\ref{colorencodinglemma}. Conversely, suppose that $G$ is
$3$-colorable with a coloring represented using a vector $[x_{1}
,\dots,x_{v}]^{\top} \in\mathbb{C}^{v}$ of cube roots of unity.
Then, the vectors $\mathbf{v} = \mathbf{w} = [x_{1},\dots,x_{v},x_{1}
^{-1},\dots,x_{v}^{-1},1]^{\top}$ satisfy the first set of equations in
\eqref{tensorquadeq}. The other sets of equations define a homogeneous linear system for
the vector $\mathbf{u}$ consisting of $4v+2$ equations in $l = v(2v+5) > 4v
+2$ unknowns. In particular, there is always a nonzero $\mathbf{u}$ solving
them, proving that tensor bilinear feasibility is true for
$\mathcal{A}$.
\end{proof}
Note that the $l,m,n$ in this construction do not satisfy \eqref{eq:gkz}; thus,  NP-hardness of the hyperdeterminant does not follow from Theorem~\ref{zerosingvaluethm}. We now prove the following.

\begin{theorem}
\label{tri_quad_feas_thm}Problem~\ref{triquadfeas}, triple bilinear feasibility, is NP-hard over $\mathbb{R}$.
\end{theorem}

\begin{proof}
Since the encoding in Theorem~\ref{QFHard} has more equations than unknowns, we may use Lemma~\ref{quadfeasmoreeqvars} to further transform this system into an equivalent one that is square (see Example~\ref{3OL_ex}).  Thus, if we could solve square quadratic feasibility ($m = n$ in Problem~\ref{HomQuadFeasDef}) over $\mathbb{R}$ in polynomial time, then we could do the same for graph $3$-colorability.  Using this observation, it is enough to prove that a given square, quadratic feasibility problem can be polynomially reduced to Problem~\ref{triquadfeas}. 

Therefore, suppose that $A_{i}$ are given $n \times n$ matrices for which we would
like to determine if $\mathbf{x}^{\top} A_{i} \mathbf{x} = 0$ ($i = 1,\dots
,n$) has a solution $0 \neq\mathbf{x} \in\mathbb{R}^{n}$. Let $E_{ij}$ denote
the matrix with a $1$ in the $(i,j)$ entry and $0$'s elsewhere. Consider a
system $S$ as in \eqref{tquadeq} in which 
\[
B_{1} = C_{1} = E_{11} \quad\text{and}\quad B_{i} = C_{i} = E_{1i} - E_{i1},\quad
\text{for } i=2,\dots,n.
\]
Consider also changing system $S$ by replacing $B_{1}$ and $C_{1}$ with matrices
consisting of all zeroes, and call this system $S^{\prime}$.

We shall construct a decision tree based on answers to feasibility
questions involving systems having form $S$ or $S'$. This will give us an
algorithm to determine whether the original quadratic problem is feasible.  
We make two claims about solutions to $S$, $S^{\prime}$.

\underline{Claim 1}: If $S$ has a solution, then $u_1 = v_1 = w_{1} = 0$.

First note that $u_1v_1 = 0$ since $S$ has a solution.  Suppose that $u_{1} = 0$ and $v_{1}
\neq0$. The form of the matrices $C_{i}$ forces $u_{2} = \dots= u_{n} =
0$. But then $\mathbf{u} = 0$, which contradicts $S$ having a
solution. A similar examination with $u_{1} \neq0$ and $v_{1} = 0$ proves that $u_1 = v_1 = 0$. 
It is now easy to see that we must also have $w_{1} = 0$.

\underline{Claim 2}: Suppose that $S$ has no solution. If $S^{\prime}$ has a
solution, then $\mathbf{v} = c \mathbf{u}$ and $\mathbf{w} = d \mathbf{u}$ for
some $0 \neq c,d \in\mathbb{R}$. Moreover, if $S^{\prime}$ has no solution,
then the original quadratic problem has no solution.

To verify Claim 2, suppose first that $S^{\prime}$ has a solution $\mathbf{u}$, 
$\mathbf{v}$, and $\mathbf{w}$, but $S$ does not. In that case we must have
$u_{1} \neq0$. Also, $v_{1} \neq0$ since otherwise the third set of equations
$\{u_{1} v_{i} - u_{i} v_{1} = 0\}_{i=2}^{n} $ would force $\mathbf{v} = 0$.
But then $\mathbf{v} = c \mathbf{u}$ for $c = \frac{v_{1}}{u_{1}}$ and
$\mathbf{w} = d \mathbf{u}$ for $d = \frac{w_{1}} {u_{1}}$ as desired. On the
other hand, suppose that both $S $ and $S^{\prime}$ have no solution. We claim
that $\mathbf{x}^{\top}A_{i} \mathbf{x} = 0$ ($i = 1,\dots,n$) has no solution
$\mathbf{x} \neq0$ either. Indeed, if it did, then setting $\mathbf{u} =
\mathbf{v} = \mathbf{w} = \mathbf{x}$, we would get a solution to $S^{\prime}$, a contradiction.

We are now prepared to give our method for solving quadratic feasibility using
at most $n+2$ queries to the restricted version ($B_{i} = C_{i}$ for all $i$)
of Problem~\ref{triquadfeas}.

First check if $S$ has a solution. If it does not, then ask if $S^{\prime}$
has a solution. If it does not, then output ``\textsc{no}". This
answer is correct by Claim 2. If $S$ has no solution but $S^{\prime}$ does,
then there is a solution with $\mathbf{v} = c \mathbf{u}$ and $\mathbf{w} = d
\mathbf{u}$, both $c$ and $d$ nonzero. But then $\mathbf{x}^{\top} A_{i}
\mathbf{x} = 0$ for $\mathbf{x} = \mathbf{u}$ and each $i$. Thus, we output
``\textsc{yes}".

If instead, $S$ has a solution, then the solution necessarily has
$(u_{1},v_{1},w_{1}) = (0,0,0)$. Consider now the $n-1$-dimensional system $T$
in which $A_{i}$ becomes the lower-right $(n-1) \times(n-1)$ block of $A_{i}$,
and $C_{i}$ and $D_{i}$ are again of the same form as the previous ones. This
is a smaller system with one less unknown. We now repeat the above
examination inductively with start system $T$ replacing $S$.

If we make it to the final stage of this process without outputting an answer,
then the original system $S$ has a solution with
\[
u_{1} = \dots= u_{n-1} = v_{1} = \dots= v_{n-1} = w_{1} = \dots= w_{n-1} =
0 \text{ and } u_{n},v_{n},w_{n} \text{ are all nonzero}.
\]
It follows that the $(n,n)$ entry of each $A_{i}$ ($i = 1\dots,n$) is zero, and thus 
there is a nonzero solution $\mathbf{x}$ to the the original
quadratic feasibility problem; so we output ``\textsc{yes}".

We have therefore verified the algorithm terminates with the correct answer
and it does so in polynomial time using an oracle that can solve Problem
\ref{triquadfeas}.
\end{proof}
Although in the above proof, we have $l = m =n$ and thus \eqref{eq:gkz} is satisfied, our choice of the coefficient matrices $A_k, B_k, C_k$ do not arise from slices of a single tensor $\mathcal{A}$. So again, NP-hardness of deciding the vanishing of the hyperdeterminant does not follow.

\section{Combinatorial hyperdeterminant is NP-, \#P-, and VNP-hard}\label{sec:combdet}

There is another notion of hyperdetermnant, which we shall call the \textit{combinatorial hyperdeterminant}\footnote{The hyperdeterminants discussed earlier in Section~\ref{bilineareqs} are then called \textit{geometric hyperdeterminants} for distinction \cite{hla}; these are denoted  $\operatorname{Det}$. The combinatorial determinants are also called \textit{Pascal determinants} by some authors.} and denote by the all lowercase $\det_n$. This quantity (only nonzero for even values of $d$) is defined for tensors $\mathcal{A}=[a_{i_{1}i_{2}\cdots i_{d}}] \in \mathbb{C}^{n \times n \times \dots \times n}$ by the formula:
\[
\det\nolimits_n(\mathcal{A})=\sum\nolimits_{\pi_{2},\dots,\pi_{d}\in \mathfrak{S}_{n}}\operatorname{sgn}
(\pi_{2}\cdots\pi_{d})\prod_{i=1}^{n}a_{i\pi_{2}(i)\cdots\pi_{d}(i)}.
\]
For $d=2$, this definition reduces to the usual expression for the determinant of an $n \times n$ matrix. For such hyperdeterminants, we have the following hardness results from \cite[Corollary~5.5.2]{Barv95} and \cite{Gurvits}.
\begin{theorem}[Barvinok]\label{thm:bar}Let $\mathcal{A} \in \mathbb{Z}^{n \times n \times n \times n}$. Deciding if $\det_n (\mathcal{A}) = 0$ is NP-hard.
\end{theorem}
Barvinok proved Theorem~\ref{thm:bar} by showing that  any directed graph $G$ may be encoded as a $4$-tensor $\mathcal{A}_G$ with integer entries in such a way that
the number of Hamiltonian paths between two vertices is  $\det_n (\mathcal{A}_G)$. The  \#P-hardness follows immediately since enuermating Hamiltonian paths is a well-known \#P-complete problem \cite{Valiant79}.

\begin{theorem}[Gurvits]\label{thm:gur}Let $\mathcal{A} \in \{0,1\}^{n \times n \times n \times n}$. Computing $\det_n (\mathcal{A})$ is \#P-hard.
\end{theorem}
Theorem~\ref{thm:gur} is proved by showing  that one may express the permament in terms of the combinatorial hyperdeterminant; the required \#P-hardness then follows from the  
\#P-completeness of the permanent \cite{Valiant}. Even though VNP-hardness was not discussed in \cite{Gurvits}, one may deduce the following result from the same argument and the VNP-completeness of \cite{Valiant79}.
\begin{corollary}\label{cor:gur}The homogeneous polynomial $\det_n$ is VNP-hard to compute.
\end{corollary}

%%%%%%%%%
\section{Tensor eigenvalue is NP-hard}\label{sec:EVP}
%%%%%%%%%

The eigenvalues and eigenvectors of a symmetric matrix $A\in\mathbb{R}
^{n\times n}$ are the stationary values and points of its Rayleigh
quotient $\mathbf{x}^{\top}A\mathbf{x/x}^{\top}\mathbf{x}$. Equivalently, one may consider the problem of maximizing the quadratic form $\mathbf{x}^{\top}A\mathbf{x}$ constrained to the unit $\ell^2$-sphere:
\begin{equation}\label{l2sphere}
\lVert\mathbf{x}\rVert_{2}^{2}=x_{1}^{2}+x_2^2 + \dots+x_{n}^{2}=1,
\end{equation}
which has the associated Lagrangian, $L(\mathbf{x},\lambda
)=\mathbf{x}^{\top}A\mathbf{x}-\lambda(\lVert\mathbf{x}\rVert_{2}^{2}-1)$. The first order condition, also known as the Karush--Kuhn--Tucker (KKT) condition, at a
stationary point $(\lambda,\mathbf{x})$ yields the familiar eigenvalue
equation $A\mathbf{x}=\lambda\mathbf{x}$, which is then used to define
eigenvalue/eigenvector pairs for any square matrices.

The above discussion extends to give a notion of eigenvalues and
eigenvectors for $3$-tensors. They are suitably constrained stationary values and points of
the cubic form:
\begin{equation}\label{eq:cubic}
\mathcal{A}(\mathbf{x},\mathbf{x},\mathbf{x}) :=\sum_{i,j,k=1}
^{n}a_{ijk}x_{i}x_{j}x_{k},
\end{equation}
associated with a tensor $\mathcal{A}\in\mathbb{R}^{n\times
n\times n}$. However, one now has several natural generalizations of the constraint.  One may
retain \eqref{l2sphere}. Alternatively, one may choose
\begin{equation}\label{l3sphere}
\lVert\mathbf{x}\rVert_{3}^{3}=\lvert x_{1}\rvert^{3}+\lvert x_{2}\rvert
^{3}+\dots+\lvert x_{n}\rvert^{3}=1
\end{equation}
or a unit sum-of-cubes,
\begin{equation}\label{cubesum}
x_{1}^{3}+x_{2}^{3}+\dots+x_{n}^{3}=1. 
\end{equation}
Each choice has an advantage: condition \eqref{l3sphere} defines a
compact set while condition \eqref{cubesum} defines an algebraic set, and both
result in scale-invariant eigenvectors.  Condition
\eqref{l2sphere} defines a set that is both compact and algebraic, but produces
eigenvectors that are not scale-invariant. These were proposed
independently in \cite{L2,Qi}.

By considering the stationarity conditions of the Lagrangian,
$L(\mathbf{x},\lambda)=\mathcal{A}(\mathbf{x},\mathbf{x},\mathbf{x})-\lambda
c(\mathbf{x})$, for $c(\mathbf{x})$ defined by the conditions in \eqref{l2sphere},
\eqref{l3sphere}, or \eqref{cubesum}, we obtain the following.

\begin{definition}\label{def:Eigl2}
Fix $\mathbb{F} = \mathbb{R}$ or $\mathbb{C}$. The number $\lambda\in\mathbb{F}$ is called an $\ell^{2}$-\textbf{eigenvalue} of the tensor $\mathcal{A}\in\mathbb{F}^{n\times n\times n}$ and $\mathbf{0} \neq \mathbf{x}\in\mathbb{F}^{n}$ its corresponding $\ell^{2}$-\textbf{eigenvector} if \eqref{l2eig} holds.  Similarly, $\lambda\in\mathbb{F}$ is an $\ell^3$-\textbf{eigenvalue} and $\mathbf{0} \neq \mathbf{x}\in\mathbb{F}^{n}$ its  $\ell^3$-\textbf{eigenvector} if
\begin{equation}
\sum_{i,j=1}^{n}a_{ijk}x_{i}x_{j}=\lambda x_{k}^{2},\quad
k=1,\dots,n. \label{l3eig}
\end{equation}
\end{definition}

Using the tools we have developed, we prove that real tensor eigenvalue is NP-hard.

\begin{proof}[of Theorem~\ref{TEvalueNP}]
The case $\lambda = 0$ of tensor $\lambda$-eigenvalue becomes square quadratic feasibility ($m=n$ in Problem~\ref{HomQuadFeasDef}) as discussed in the proof of Theorem~\ref{tri_quad_feas_thm}.  Thus, deciding if $\lambda = 0$ is an eigenvalue of a tensor is NP-hard over $\mathbb R$ by Theorem~\ref{TEvalueNP}.   A similar situation holds when we use \eqref{l3eig} to define $\ell^3$-eigenpairs.  
\end{proof}

%\begin{proof}
%We are left with considering when $\lambda = 1$ and we are deciding real tensor $\ell^2$-eigenvalue.  [*** I don't know how to finish this last case out!  any ideas?  otherwise, let's just leave it out ***]
%\end{proof}

\begin{figure}[!t]
\begin{center}
\includegraphics[width=3in]{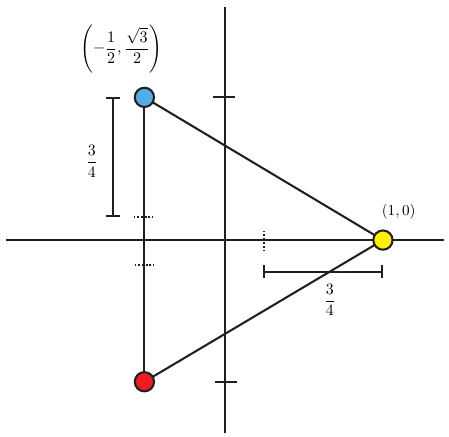}
\caption{\textbf{It is NP-hard to approximate a real eigenvector.} Each colored circle above in the complex plane represents a pair of real numbers which are coordinates of a cube root of unity.  If one could approximate an eigenvector of a rational tensor  to within $\varepsilon = \frac{3}{4}$ in each real coordinate, then one would be able to properly color the vertices of a $3$-colorable graph $G$ (see Example~\ref{3OL_ex}).}\label{approx_cube_roots}
\end{center}\end{figure}

We will see in Section~\ref{symm_tensor_eig} that the eigenvalue problem for \textit{symmetric} $3$-tensors is also NP-hard. We close this section with a proof  that it is even NP-hard to approximate an eigenvector of a tensor.

\begin{proof}[of Theorem~\ref{approx_evect}]
Suppose that one could approximate in polynomial time a tensor eigenvector with eigenvalue $\lambda = 0$ to within $\varepsilon = \frac{3}{4}$ as in \eqref{approx_def}.   By the discussion in Section~\ref{quadfeas}, given a graph $G$, we can form a square set of polynomial equations (see Example~\ref{3OL_ex}), having eigenvectors of a rational tensor $\mathcal{A}_G$ as solutions, encoding proper $3$-colorings of $G$.   Since such vectors represent cube roots of unity separated by a distance of at least $\frac{3}{2}$ in each real or imaginary part (see Fig.~\ref{approx_cube_roots}), finding an approximate real eigenvector to within $\varepsilon = \frac{3}{4}$ of an actual one in polynomial time would allow one to also decide graph $3$-colorability in polynomial time.
\end{proof}

%%%%%%%%%
\section{Tensor singular value and spectral norm are NP-hard}\label{sec:SVP}
%%%%%%%%%

It is easy to verify that the singular values and singular vectors of a matrix
$A\in\mathbb{R}^{m\times n}$ are the stationary values and stationary points
of the quotient $\mathbf{x}^{\top}A\mathbf{y}/\mathbf{\lVert\mathbf{x}\rVert
}_{2}\mathbf{\lVert\mathbf{y}\rVert}_{2}$. Indeed, the associated
Lagrangian is
\begin{equation}\label{eq:lagrange2}
L(\mathbf{x},\mathbf{y},\sigma)=\mathbf{x}^{\top}A\mathbf{y}-\sigma
(\lVert\mathbf{x}\rVert_{2}\lVert\mathbf{y}\rVert_{2}-1),
\end{equation}
and the first order condition yields, at a stationary point $(\mathbf{x}, \mathbf{y})$, the familiar singular value equations:
\[
A\mathbf{v}=\sigma\mathbf{u},\quad A^{\top}\mathbf{u}=\sigma\mathbf{v},
\]
where $\mathbf{u}=\mathbf{x}/\lVert\mathbf{x}\rVert_{2}$ and $\mathbf{v}
=\mathbf{y}/\lVert\mathbf{y}\rVert_{2}$.

This derivation has been extended to define singular values and singular vectors for higher-order tensors \cite{L2}. For
$\mathcal{A}\in\mathbb{R}^{l\times m\times n}$, we have the trilinear form\footnote{When $l=m=n$ and $\mathbf{x} = \mathbf{y} = \mathbf{z}$,
the trilinear form in \eqref{eq:trilin} becomes the cubic form in \eqref{eq:cubic}.}:
\begin{equation}\label{eq:trilin}
\mathcal{A}(\mathbf{x},\mathbf{y},\mathbf{z}) :=\sum_{i,j,k=1}
^{n}a_{ijk}x_{i}y_{j}z_{k},
\end{equation}
and consideration of its stationary values on a product of unit $\ell^p$-spheres leads to the Lagrangian,
\[
L(\mathbf{x},\mathbf{y},\mathbf{z},\sigma)=\mathcal{A}(\mathbf{x}
,\mathbf{y},\mathbf{z})-\sigma(\lVert\mathbf{x}\rVert_{p}\lVert\mathbf{y}
\rVert_{p}\lVert\mathbf{z}\rVert_{p}-1).
\]
The only ambiguity is choice of $p$. As for eigenvalues, natural choices are $p=2$ or $3$.
\begin{definition}\label{def:Sing}
Fix $\mathbb{F} = \mathbb{R}$ or $\mathbb{C}$.  Let $\sigma\in\mathbb{F}$, and suppose that $\mathbf{u} \in\mathbb{F}
^{l}$, $\mathbf{v}\in\mathbb{F}^{m}$, and $\mathbf{w}\in\mathbb{F}^{n}$ are all nonzero.
The number $\sigma \in \mathbb{F}$ is called an $\ell^2$-\textbf{singular value} and the nonzero $\mathbf{u}
,\mathbf{v},\mathbf{w}$ are called $\ell^2$-\textbf{singular vectors} of $\mathcal{A}$ if
\begin{equation}\label{l2sing}
\begin{split}
\sum\nolimits_{j,k=1}^{m,n}a_{ijk}v_{j}w_{k} &= \sigma u_{i},\quad  i = 1,\dots, l; \\
\sum\nolimits_{i,k=1}^{l,n}a_{ijk}u_{i}w_{k} &= \sigma v_{j},\quad  j = 1,\dots,m;\\
\sum\nolimits_{i,j=1}^{l,m}a_{ijk}u_{i}v_{j}   &= \sigma w_{k},\quad  k = 1,\dots,n.
\end{split}
\end{equation}
Similarly,  $\sigma$ is called an $\ell^3$-\textbf{singular value} and nonzero $\mathbf{u},\mathbf{v},\mathbf{w}$
$\ell^3$-\textbf{singular vectors} if 
\begin{equation}\label{l3sing}
\begin{split}
\sum\nolimits_{j,k=1}^{m,n}a_{ijk}v_{j}w_{k} &=  \sigma u_{i}^2,\quad  i = 1,\dots, l; \\
\sum\nolimits_{i,k=1}^{l,n}a_{ijk}u_{i}w_{k} &=  \sigma v_{j}^2,\quad  j = 1,\dots,m; \\
\sum\nolimits_{i,j=1}^{l,m}a_{ijk}u_{i}v_{j}   &=  \sigma w_{k}^2,\quad  k = 1,\dots,n. 
\end{split}
\end{equation}
\end{definition}
When $\sigma = 0$, definitions \eqref{l2sing} and \eqref{l3sing} agree and reduce to tensor bilinear feasibility (Problem~\ref{tripletensorfeas}). In particular, if condition \eqref{eq:gkz} holds, then $\operatorname{Det}_{l,m,n}(\mathcal{A}) = 0$ iff $0$ is an $\ell^2$-singular value of $\mathcal{A}$ iff $0$ is an $\ell^3$-singular value of $\mathcal{A}$ \cite{L2}. 

%Our first result about tensor singular value follows directly from the undecidability of tensor bilinear feasibility over the rationals (Theorem~\ref{TQFQundec}).
%
%\begin{theorem}\label{sing_value_Q_undec}
%Tensor singular value is undecidable over the rationals.
%\end{theorem}
%\begin{proof}
%The case $\sigma = 0$ is undecidable by Theorem~\ref{TQFQundec}.
%\end{proof}

The following is immediate from Theorem~\ref{zerosingvaluethm}, which was proved by a reduction from $3$-colorability.

\begin{theorem}\label{singvalue0thm} 
Let $\mathbb{F} = \mathbb{R}$ or $\mathbb{C}$.   Deciding whether $\sigma=0$ is an ($\ell^2$ or $\ell^3$) singular
value over $\mathbb{F}$ of a tensor is NP-hard.
\end{theorem}

%We also have the following \#P-complete result.
%
%\begin{theorem}\label{sharp_p_sing}
%It is \#P-complete to count tensor singular vectors over $\mathbb{R}$ or $\mathbb C$. 
%\end{theorem}

The tools we  developed in Section~\ref{bilineareqs} also directly apply to give an analogue of Theorem~\ref{approx_evect} for approximating singular vectors corresponding to singular value $\sigma =0$.  

\begin{theorem}\label{approx_singvect}
It is NP-hard to approximate a triple of tensor singular vectors over $\mathbb{R}$ to within $\varepsilon = \frac{3}{4}$ and over $\mathbb{C}$ to within $\varepsilon = \frac{\sqrt{3}}{2}$.
\end{theorem}

\begin{corollary}
Unless $\mathit{P} = \mathit{NP}$, there is no PTAS for approximating tensor singular vectors.
\end{corollary}

Note that verifying whether $0 \neq \sigma \in \mathbb{Q}$ is a singular value of  $\mathcal{A}$ is the same as checking whether $1$ is a singular value of $\mathcal{A}/\sigma$.  In this section, we reduce computing the max-clique number of a graph to a singular value problem for $\sigma = 1$, extending some ideas of \cite{nesterov03}, \cite{HLZ}.  In particular, we shall prove the following.

\begin{theorem}\label{singvalue1thm} 
Fix $0 \neq \sigma \in \mathbb{Q}$.  Deciding whether $\sigma$ is an $\ell^2$-singular
value over $\mathbb{R}$ of a tensor is NP-hard.
\end{theorem}

We next define the closely related concept of spectral norm of a tensor.

\begin{definition}\label{defn:spec_norm}
The \textbf{spectral norm} of a tensor $\mathcal{A}$ is
\[
\lVert\mathcal{A}\rVert_{2,2,2}:=\sup_{\mathbf{x},\mathbf{y},\mathbf{z}
\neq\mathbf{0}}\frac{\lvert\mathcal{A}(\mathbf{x},\mathbf{y},\mathbf{z}
)\rvert}{\lVert\mathbf{x}\rVert_{2}\lVert\mathbf{y}\rVert_{2}\lVert
\mathbf{z}\rVert_{2}}.
\]
\end{definition}

The spectral norm is either the maximum or minimum value of $\mathcal{A}(\mathbf{x}
,\mathbf{y},\mathbf{z})$ constrained to the set $\{(\mathbf{x},\mathbf{y}
,\mathbf{z}) :  \lVert\mathbf{x}\rVert_{2}=\lVert\mathbf{y}\rVert_{2}=
\lVert\mathbf{z}\rVert_{2}=1\}$, and thus is an $\ell^2$-singular value of
$\mathcal{A}$.  At the end of this section, we will show that the corresponding spectral norm questions are NP-hard (Theorems~\ref{spec_norm_nphard_thm} and \ref{approx_spec_nphard_thm}).

We now explain our setup for the proof of Theorem~\ref{singvalue1thm}.
Let $G = (V,E)$ be a simple graph on vertices $V = \{1,\dots,v\}$ with $e$
edges $E$, and let $\omega(G)$ be the \textit{clique number} of $G$ (that is,
the number of vertices in a largest clique). Given a graph $G$ and $l \in \mathbb{N}$, deciding whether $\omega(G) \geq l$ is one of the first decision problems known to be NP-complete \cite{Karp}. An important result
linking an optimization problem to $\omega (G)$ is the following
classical theorem \cite{MotzkinStraus}. It can be used
to give an elegant proof of Tur\'an's Graph Theorem, which bounds the number
of edges in a graph in terms of its clique number (e.g., see \cite{Aigner}).

\begin{theorem}[Motzkin--Straus]\label{Motzkin-Straus} Let $\Delta_{v} := \{(x_{1},\dots,x_{v})
\in\mathbb{R}_{\geq0}^{v}: \sum_{i =1}^{v} x_{i} = 1\}$ and let $G = (V,E)$ be
a graph on $v$ vertices with clique number $\omega(G)$. Then,
\[
1 - \frac{1}{\omega(G)} = 2 \cdot\max_{ \mathbf{x} \in\Delta_{v}} \sum\nolimits_{\{i,j\} \in E}
x_{i} x_{j}.
\]
\end{theorem}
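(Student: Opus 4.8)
The plan is to prove the two inequalities separately, establishing that the maximum of $f(\mathbf{x}) := \sum_{\{i,j\}\in E} x_i x_j$ over $\Delta_n$ equals $\tfrac12\bigl(1-\tfrac{1}{\omega(G)}\bigr)$. The easy direction is the lower bound: let $K\subseteq V$ be a clique with $|K|=\omega:=\omega(G)$, and set $x_i = 1/\omega$ for $i\in K$ and $x_i=0$ otherwise. This $\mathbf{x}$ lies in $\Delta_n$, and since every pair in $K$ is an edge,
\[
f(\mathbf{x}) = \binom{\omega}{2}\cdot\frac{1}{\omega^2} = \frac{\omega(\omega-1)}{2\omega^2} = \frac12\Bigl(1-\frac{1}{\omega}\Bigr),
\]
so $\max_{\mathbf{x}\in\Delta_n} f(\mathbf{x}) \geq \tfrac12(1-1/\omega)$.

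For the upper bound, I would first argue that the maximum is attained (continuity of $f$, compactness of $\Delta_n$) at some $\mathbf{x}^\ast$ whose support $S = \{i : x_i^\ast > 0\}$ is \emph{as small as possible} among all maximizers. The key claim is then that $S$ induces a clique in $G$. Suppose not: pick $i,j\in S$ with $\{i,j\}\notin E$. Consider moving along the segment $x_i \mapsto x_i^\ast + t$, $x_j \mapsto x_j^\ast - t$, leaving all other coordinates fixed; this stays in $\Delta_n$ for $t$ in a neighbourhood of $0$ within $[-x_i^\ast, x_j^\ast]$. Because $i$ and $j$ are non-adjacent, the term $x_ix_j$ does not appear in $f$, so $f$ is an \emph{affine} (degree-one) function of $t$ along this segment, namely $f = c + t\bigl(\sum_{k\sim i} x_k^\ast - \sum_{k\sim j} x_k^\ast\bigr)$. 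Hence we may push $t$ to one of the endpoints without decreasing $f$, landing on another maximizer with strictly smaller support — contradicting minimality. Therefore $S$ is a clique, so $k := |S| \leq \omega$.

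Restricting to $S$, every pair in $S$ is an edge, so
\[
f(\mathbf{x}^\ast) = \sum_{\{i,j\}\subseteq S} x_i^\ast x_j^\ast = \frac12\Bigl(\bigl(\textstyle\sum_{i\in S} x_i^\ast\bigr)^2 - \sum_{i\in S}(x_i^\ast)^2\Bigr) = \frac12\Bigl(1 - \sum_{i\in S}(x_i^\ast)^2\Bigr),
\]
and by Cauchy--Schwarz $\sum_{i\in S}(x_i^\ast)^2 \geq \tfrac{1}{k}\bigl(\sum_{i\in S} x_i^\ast\bigr)^2 = \tfrac1k \geq \tfrac{1}{\omega}$. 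Thus $f(\mathbf{x}^\ast)\leq \tfrac12(1-1/\omega)$, which combined with the lower bound gives equality and proves the theorem (multiplying through by $2$). I expect the main obstacle to be the support-reduction step: one must carefully verify that such a maximizer with support-minimal support exists and that the endpoint push genuinely produces a maximizer of strictly smaller support, handling the degenerate cases where one of the endpoints of the feasible $t$-interval is already $0$; everything else is a short computation.
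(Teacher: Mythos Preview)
Your proof is correct and is essentially the classical Motzkin--Straus argument. Note, however, that the paper does not give its own proof of this theorem: it is stated as a known result with citations to the original paper and to Aigner, and then used as a black box. So there is no ``paper's proof'' to compare against.

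One small remark on the obstacle you flag: since $i,j\in S$ forces $x_i^\ast>0$ and $x_j^\ast>0$, the point $t=0$ is strictly interior to $[-x_i^\ast,\,x_j^\ast]$, so there is no degenerate endpoint case to worry about. An affine function on a nondegenerate interval attains its maximum at an endpoint; combined with $f(0)=\max_{\Delta_n} f$, this forces the endpoint value to equal $f(0)$, producing a maximizer with strictly smaller support. The existence of a support-minimal maximizer is immediate since the support size takes values in $\{0,1,\dots,n\}$ and the maximizer set is nonempty by compactness.
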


Let $A_{G}$ be the adjacency matrix of the graph $G$ and set $\omega = \omega(G)$. For each positive
integer $l$, define $Q_l := A_{G} + \frac{1}{l} J$, in which $J$ is the all-ones matrix. Also, let
\[
M_l := \max_{ \mathbf{x} \in\Delta_{v}} \mathbf{x}^{\top}Q_l \mathbf{x} = 1 + \frac{\omega- l}{l \omega},
\]
where the second equality follows from Theorem~\ref{Motzkin-Straus}.  We have $M_{\omega} = 1$ and also 
\begin{equation}\label{mlfacts}
M_l > 1 \ \text{ if } \ l < \omega; \quad M_l < 1 \ \text{ if } \ l > \omega.
\end{equation}

For $k=1,\dots,e$, let $E_{k}=\frac{1}{2}E_{i_{k}j_{k}}+\frac{1}{2}
E_{j_{k}i_{k}}$ in which $\{i_{k},j_{k}\}$ is the $k$th edge of $G$. Here, the
$v\times v$ matrix $E_{ij}$ has a 1 in the $(i,j)$-th spot and zeroes
elsewhere. For each positive integer $l$, consider the following
optimization problem (having rational input):
\[
N_l:=\max_{\lVert\mathbf{u}\rVert_{2}=1}\left\{  \sum_{i=1}^l\left(  \mathbf{u}^{\top}\frac
{1}{l}I \mathbf{u}\right)  ^{2}+2 \sum_{k=1}^{e}(\mathbf{u}^{\top}E_{k} \mathbf{u})^{2}\right\}.
\]

\begin{lemma}\label{lem:M=N}
For any graph $G$, we have $M_l = N_l$.
\end{lemma}

\begin{proof}
By construction,  $N_l = \frac{1}{l} + 2 \cdot\max_{\lVert\mathbf{u}\rVert_2 =1} \sum_{\{i,j\} \in E}
u_{i}^{2} u_{j}^{2},$ which is easily seen to equal $M_l$.
\end{proof}

We next state a beautiful result of Banach \cite{B,PST} that will be very useful for us here as well as in Section~\ref{sec:symmproblems}. The result essentially says that the spectral norm of a symmetric tensor may be expressed symmetrically.
\begin{theorem}[Banach]\label{thm:banach1}
Let $\mathcal{S}\in\mathbb{R}^{n\times n \times n}$ be a symmetric $3$-tensor. Then
\begin{equation}
\lVert\mathcal{S}\rVert_{2,2,2} = \sup_{\mathbf{x},\mathbf{y},\mathbf{z}
\neq\mathbf{0}}\frac{\lvert\mathcal{S}(\mathbf{x},\mathbf{y},\mathbf{z}
)\rvert}{\lVert\mathbf{x}\rVert_{2}\lVert\mathbf{y}\rVert_{2}\lVert
\mathbf{z}\rVert_{2}}=\sup_{\mathbf{x}\neq\mathbf{0}}\frac{\lvert
\mathcal{S}(\mathbf{x},\mathbf{x},\mathbf{x})\rvert}{\lVert\mathbf{x}
\rVert_{2}^{3}}.
\label{eq:banach3}
\end{equation}
Let $\mathcal{S}\in\mathbb{R}^{n\times n \times n \times n}$ be a symmetric $4$-tensor. Then
\begin{equation}
\lVert\mathcal{S}\rVert_{2,2,2,2}=\sup_{\mathbf{w}, \mathbf{x},\mathbf{y},\mathbf{z}
\neq\mathbf{0}}\frac{\lvert\mathcal{S}(\mathbf{w}, \mathbf{x},\mathbf{y},\mathbf{z}
)\rvert}{ \lVert \mathbf{w} \rVert_2 \lVert\mathbf{x}\rVert_{2}\lVert\mathbf{y}\rVert_{2}\lVert
\mathbf{z}\rVert_{2}}=\sup_{\mathbf{x}\neq\mathbf{0}}\frac{\lvert
\mathcal{S}(\mathbf{x},\mathbf{x},\mathbf{x},\mathbf{x})\rvert}{\lVert\mathbf{x}
\rVert_{2}^{4}}.
\label{eq:banach4}
\end{equation}
\end{theorem}
While we have restricted ourselves to orders $3$ and $4$ for  simplicity, Banach's result holds for  arbitrary order. Furthermore, $\mathbb{C}$ may replace $\mathbb{R}$ without affecting its validity.

The following interesting fact, which is embedded in the proof of \cite[Proposition 2]{HLZ}, may be easily deduced from Theorem~\ref{thm:banach1}.
\begin{proposition}[He--Li--Zhang] \label{prop:HLZ}
Let $A_{1},\dots,A_{m}\in\mathbb{R}^{n\times n}$ be symmetric.
Then,
\begin{equation}
\max_{\lVert\mathbf{u}\rVert_{2}=\lVert\mathbf{v}\rVert_{2}=1}\sum_{k=1}%
^{m}(\mathbf{u}^{\top}A_{k}\mathbf{v})^{2}=\max_{\lVert\mathbf{v}\rVert_{2}%
=1}\sum_{k=1}^{m}(\mathbf{v}^{\top}A_{k}\mathbf{v})^{2}.\label{eqhlz0}%
\end{equation}
\end{proposition}

\begin{proof}
Define
\[
f(\mathbf{u},\mathbf{v},\mathbf{w},\mathbf{x}) := \sum_{k=1}^{m}(\mathbf{u}^{\top}A_{k}\mathbf{v})(\mathbf{w}^{\top}A_{k}\mathbf{x}).
\]
Clearly, we must have
\[
\max_{\lVert \mathbf{v} \rVert_2 =1}  f(\mathbf{v},\mathbf{v},\mathbf{v},\mathbf{v}) \le
\max_{\lVert \mathbf{u} \rVert_2 =\lVert \mathbf{v} \rVert_2 =1}  f(\mathbf{u},\mathbf{v},\mathbf{u},\mathbf{v}) \le
\max_{\lVert \mathbf{u} \rVert_2 =\lVert \mathbf{v} \rVert_2=\lVert \mathbf{w} \rVert_2=\lVert \mathbf{x} \rVert_2 =1}  f(\mathbf{u},\mathbf{v},\mathbf{w},\mathbf{x}).
\]
Note that we may write $f(\mathbf{u},\mathbf{v},\mathbf{w},\mathbf{x}) = \mathcal{S}(\mathbf{u}, \mathbf{v},\mathbf{w},\mathbf{x})$ for some symmetric $4$-tensor \mbox{$\mathcal{S} \in \mathbb{R}^{n \times n \times n \times n}$.} Since the first and last terms in the inequality above are equal by \eqref{eq:banach4} in Banach's theorem, we obtain \eqref{eqhlz0}. \fin
\end{proof}

\begin{lemma}\label{mainTlem}
The maximization problem
\begin{equation}\label{tensoroptform}
T_l:=\max_{\lVert\mathbf{u}\rVert_2=\lVert\mathbf{v}\rVert_{2}=\lVert\mathbf{w}\rVert_{2}=1} \left\{  \sum_{i=1}^l\left(  \mathbf{u}^{\top}\frac
{1}{l}I \mathbf{v}\right)  w_{i}+\sum_{k=1}^{e}(\mathbf{u}^{\top}E_{k}\mathbf{v})w_{l + k}+\sum
_{k=1}^{e}(\mathbf{u}^{\top}E_{k}\mathbf{v})w_{m+l + k} \right\}
\end{equation}
has optimum value  $T_l = M_l^{1/2}$. Thus,
\[
T_l = 1 \quad\text{iff}\quad l = \omega; \quad T_l > 1 \quad\text{iff}\quad l < \omega; \quad \text{and}\quad T_l < 1 \quad\text{iff}\quad l > \omega,
\]
\end{lemma}

\begin{proof}
Fixing $\mathbf{a}=[a_{1},\dots,a_{s}]^{\top} \in \mathbb{R}^s$, the Cauchy-Schwarz inequality implies that a sum
 $\sum_{i=1}^{s}a_{i}w_{i}$ with $\lVert\mathbf{w}\rVert_{2}=1$ achieves a maximum value of $\lVert\mathbf{a}\rVert_{2}$ (with $w_i = a_i/\lVert\mathbf{a}\rVert_{2}$ if $\lVert\mathbf{a}\rVert_{2} \neq 0$). 
Thus,
\begin{align*}
T_l &=    \max_{\lVert\mathbf{u}\rVert_{2}=\lVert\mathbf{v}\rVert_{2}=\lVert\mathbf{w}\rVert_{2}=1}\sum_{i=1}^l\left(  \mathbf{u}^{\top
}\frac{1}{l}I\mathbf{v}\right)  w_{i}+\sum_{k=1}^{e}(\mathbf{u}^{\top}E_{k}\mathbf{v})w_{l + k}%
+\sum_{k=1}^{e}(\mathbf{u}^{\top}E_{k}\mathbf{v})w_{e+l + k}\\
&=  \max_{\lVert\mathbf{u}\rVert_{2}=\lVert\mathbf{v}\rVert_{2}=1}\sqrt{\sum_{i=1}^l\left( \mathbf{u}^{\top}\frac
{1}{l}I\mathbf{v}\right)  ^{2}+2\sum_{k=1}^{e}(\mathbf{u}^{\top}E_{k}\mathbf{v})^{2}}\\
&=  M_l^{1/2},
\end{align*}
where the last equality follows from Lemma~\ref{lem:M=N} and Proposition~\ref{prop:HLZ}.
\end{proof}

We can now prove Theorem~\ref{singvalue1thm}, and Theorems~\ref{spec_norm_nphard_thm} and \ref{approx_spec_nphard_thm} from the introduction.
\begin{proof}
[of Theorem~\ref{singvalue1thm}]We cast \eqref{tensoroptform} in the
form of a tensor singular value problem. Set $\mathcal{A}_l$ to be the
three dimensional tensor with $a_{ijk}$ equal to the coefficient of the term
$u_{i} v_{j} w_{k}$ in the multilinear form \eqref{tensoroptform}. Then 
$T_l$ is just the maximum $\ell^2$-singular value of $\mathcal{A}_l$.
We now show that if we could decide whether $\sigma = 1$ is an $\ell^2$-singular value
of $\mathcal{A}_l$, then we would solve the max-clique problem. %(a well-known NP-complete problem). 

Given a graph $G$, construct the tensor $\mathcal{A}_l$ for each integer $l \in\{1,\dots,v\}$.
The largest value of $l$ for which $1$ is a singular value of  $\mathcal{A}_l$ is $\omega = \omega(G)$.
To see this, notice that if $l$ is larger than $\omega$, the maximum singular value of $\mathcal{A}_l$ is smaller
than $1$ by Lemma~\ref{mainTlem}. Therefore, $\sigma= 1$ can not be a singular
value of $\mathcal{A}_l$ in these cases. However, $\sigma=1$ is a
singular value of the tensor $\mathcal{A}_{\omega}$.
\end{proof}

\begin{proof}[of Theorem~\ref{spec_norm_nphard_thm}]
In the reduction above used to prove Theorem~\ref{singvalue1thm}, it suffices to decide which tensor $\mathcal{A}_l$ has spectral norm equal to $1$.
\end{proof}

\begin{proof}[of Theorem~\ref{approx_spec_nphard_thm}]
Suppose that we could approximate spectral norm to within a factor of $1-\varepsilon = (1+1/N(N-1))^{-1/2}$, where $N$ is tensor input size.  Consider the tensors $\mathcal{A}_l$ as in the proof of Theorems~\ref{singvalue1thm} and \ref{spec_norm_nphard_thm} above, which have input size $N$ that is at least the number of vertices $v$ of the graph $G$.  For each $l$, we are guaranteed an approximation for the spectral norm of  $\mathcal{A}_{l}$ of at least 
\begin{equation}\label{apprx_spec_eqn}
(1-\varepsilon)\cdot M_{l}^{1/2}  > \left(1+\frac{1}{v(v-1)}\right)^{-1/2}\left(1+\frac{\omega-l}{l\omega}\right)^{1/2}.
\end{equation}
It is easy to verify that  (\ref{apprx_spec_eqn}) implies that any spectral norm approximation of $\mathcal{A}_{l}$ is greater than $1$ whenever $l \leq \omega-1$.  
In particular, as $\mathcal{A}_{\omega}$ has spectral norm exactly $1$, we can determine $\omega$ by finding the largest $l = 1,\ldots, v$ for which a spectral norm approximation of $\mathcal{A}_l$ is $1$ or less.
\end{proof}

%%%%%%%%%
\section{Best rank-$1$  tensor approximation is NP-hard}\label{sec:rank1}
%%%%%%%%%

We shall need to define the \textit{Frobenius norm} and \textit{inner product} for this and later sections. Let $\mathcal{A} = \llbracket a_{ijk} \rrbracket_{i,j,k=1}^{l,m,n}$ and $\mathcal{B} = \llbracket b_{ijk} \rrbracket_{i,j,k=1}^{l,m,n}  \in \mathbb{R}^{n \times n \times n}$. Then we define:
\[
\lVert \mathcal{A} \rVert_F^2 := \sum_{i,j,k=1}^{l,m,n} \lvert a_{ijk} \rvert^2, \qquad \langle \mathcal{A}, \mathcal{B} \rangle := \sum_{i,j,k=1}^{l,m,n} a_{ijk} b_{ijk}.
\]
Clearly $\lVert \mathcal{A} \rVert_F^2 = \langle \mathcal{A}, \mathcal{A} \rangle$ and  $\langle \mathcal{A}, \mathbf{x} \otimes \mathbf{y} \otimes \mathbf{z} \rangle = \mathcal{A}( \mathbf{x}, \mathbf{y},\mathbf{z})$, where $\mathcal{A}( \mathbf{x}, \mathbf{y},\mathbf{z})$ is as in \eqref{eq:trilin}.\fin

As we explain next, the best rank-$r$ approximation problem for a tensor is well-defined only when $r = 1$.  The general  problem can be expressed as solving:   %the best rank-$r$ approximation problem for tensors (in Frobenius norm $\lVert \cdot\rVert_F$),
\[
\min_{\mathbf{x}_{i},\mathbf{y}_{i},\mathbf{z}_{i}}\lVert\mathcal{A}
-\lambda_{1}\mathbf{x}_{1}\otimes\mathbf{y}_{1}\otimes\mathbf{z}_{1}
-\dots-\lambda_{r}\mathbf{x}_{r}\otimes\mathbf{y}_{r}\otimes\mathbf{z}
_{r}\rVert_{F}.
\]
Unfortunately, a solution to this optimization problem does not necessarily exist; in fact, the set $\{\mathcal{A}\in\mathbb{R}^{l\times m\times n} : 
\operatorname{rank}_{\mathbb{R}}(\mathcal{A})\leq r\}$ is not closed, in general, when
$r>1$. The following simple example is based on an exercise in \cite{Kn}.

\begin{example}
Let $\mathbf{x}_{i},\mathbf{y}_{i}\in\mathbb{R}^{m}$, $i=1,2,3$. Let
\[
\mathcal{A}=\mathbf{x}_{1}\otimes\mathbf{x}_{2}\otimes\mathbf{y}_{3}+\mathbf{x}
_{1}\otimes\mathbf{y}_{2}\otimes\mathbf{x}_{3}+\mathbf{y}_{1}\otimes
\mathbf{x}_{2}\otimes\mathbf{x}_{3},
\]
and for $n\in\mathbb{N}$, let
\[
\mathcal{A}_{n}=\mathbf{x}_{1}\otimes\mathbf{x}_{2}\otimes(\mathbf{y}_{3}
-n\mathbf{x}_{3})+\left(  \mathbf{x}_{1}+\frac{1}{n}\mathbf{y}_{1}\right)
\otimes\left(  \mathbf{x}_{2}+\frac{1}{n}\mathbf{y}_{2}\right)  \otimes
n\mathbf{x}_{3}.
\]
One can show that $\operatorname{rank}_{\mathbb{R}}(\mathcal{A})=3$ if and only if the pair $\mathbf{x}_{i},\mathbf{y}_{i}$ are linearly independent, $i=1,2,3$. Since $\operatorname{rank}_{\mathbb{F}}(\mathcal{A}_{n})\leq2$ and
\[
\lim_{n\rightarrow\infty} \mathcal{A}_{n}=\mathcal{A},
\]
the rank-$3$ tensor $\mathcal{A}$ has no best rank-$2$ approximation. \qed
\end{example}

The phenomenon of a tensor failing to have a best rank-$r$ approximation is
 widespread, occurring over a  range of
dimensions, orders, and ranks, regardless of the norm (or
Br\`{e}gman divergence) used. These counterexamples occur with positive
probability and sometimes with certainty (in $\mathbb{R}^{2\times2\times
2}$, no tensor of rank-$3$ has a best rank-$2$ approximation). We refer the
reader to \cite{dSL} for further details.

On the other hand, the set of rank-$1$ tensors (together with zero) is  closed.
In fact, it is the Segre variety in classical algebraic
geometry \cite{Landsberg}. Consider the problem of finding the best rank-$1$ approximation to a tensor $\mathcal{A}$:
\begin{equation}\label{eq:bestrank1}
\min_{\mathbf{x},\mathbf{y},\mathbf{z}}\lVert\mathcal{A}-\mathbf{x}%
\otimes\mathbf{y}\otimes\mathbf{z}\rVert_{F}.
\end{equation}
By introducing an additional parameter $\sigma\geq0$, we may rewrite the
rank-$1$ term in the form $\mathbf{x}
\otimes\mathbf{y}\otimes\mathbf{z}=\sigma \mathbf{u}\otimes\mathbf{v}\otimes\mathbf{w}$ where $\lVert\mathbf{u}\rVert_{2}=$ $\lVert
\mathbf{v}\rVert_{2}=\lVert\mathbf{w}\rVert_{2}=1$.  Then,
\begin{align*}
\lVert\mathcal{A}-\sigma\mathbf{u}\otimes\mathbf{v}\otimes\mathbf{w}\rVert
_{F}^{2}  &  =\lVert\mathcal{A}\rVert_{F}^{2}-2\sigma\langle\mathcal{A}%
,\mathbf{u}\otimes\mathbf{v}\otimes\mathbf{w}\rangle+\sigma^{2}\lVert
\mathbf{u}\otimes\mathbf{v}\otimes\mathbf{w}\rVert_{F}^{2}\\
&  =\lVert\mathcal{A}\rVert_{F}^{2}-2\sigma\langle\mathcal{A},\mathbf{u}%
\otimes\mathbf{v}\otimes\mathbf{w}\rangle+\sigma^{2}.
\end{align*}
This expression is minimized when
\[
\sigma=\max_{\lVert\mathbf{u}\rVert_{2}=\lVert\mathbf{v}\rVert_{2}%
=\lVert\mathbf{w}\rVert_{2}=1}\langle\mathcal{A},\mathbf{u}\otimes
\mathbf{v}\otimes\mathbf{w}\rangle = \lVert \mathcal{A} \rVert_{2,2,2}
\]
since $\langle\mathcal{A},\mathbf{u}\otimes\mathbf{v}\otimes\mathbf{w}\rangle
=\mathcal{A}(\mathbf{u},\mathbf{v},\mathbf{w})$.  If $(\mathbf{x},\mathbf{y},\mathbf{z})$ is a solution to the optimization problem \eqref{eq:bestrank1}, then $\sigma$ may be computed as
\[
\sigma = \lVert\sigma \mathbf{u}\otimes\mathbf{v}\otimes\mathbf{w} \rVert_F = \lVert \mathbf{x}
\otimes\mathbf{y}\otimes\mathbf{z} \rVert_F =\lVert \mathbf{x}\rVert_2 \lVert\mathbf{y}\rVert_2 \lVert \mathbf{z} \rVert_2.
\]
We conclude that determining the best rank-$1$ approximation is also NP-hard, which is Theorem~\ref{rank1approx_nphard_thm} from the introduction.  We will see in Section~\ref{sec:symmproblems} that restricting to symmetric $3$-tensors does not make 
the best rank-$1$ approximation problem easier.

%%%%%%%%%
\section{Tensor rank is NP-hard}\label{sec:Rank}
%%%%%%%%%

It was shown in \cite{Haa} that any 3SAT boolean formula\footnote{Recall that this a boolean formula in $n$ variables and $m$ clauses where each clause contains exactly three variables, e.g.\ $(x_1 \vee \bar{x}_2 \vee \bar{x}_3)\wedge (x_1 \vee x_2 \vee x_4)$.} can be encoded as a $3$-tensor $\mathcal{A}$ over a finite field or $\mathbb{Q}$ and that the satisfiability of the formula is equivalent to checking whether $\operatorname{rank}
(\mathcal{A}) \le r$ for some $r$ that depends on the number of variables and clauses (the tensor rank being taken over the respective field). In particular, tensor rank is NP-hard over $\mathbb{Q}$ and NP-complete over finite fields.

Since the majority of recent applications of tensor methods are over
$\mathbb{R}$ and $\mathbb{C}$, a natural question is whether tensor rank is
also NP-hard over these fields. In other words, is it NP-hard to decide whether $\operatorname{rank}_{\mathbb{R}}(\mathcal{A})\leq
r$ or if $\operatorname{rank}_{\mathbb{C}}(\mathcal{A})\leq r$ for a given tensor $\mathcal{A}$ with rational entries and a given 
$r\in\mathbb{N}$? 

One difficulty with the notion of tensor rank is that it depends on the base field.  For instance, there are real tensors 
with rank over $\mathbb{C}$ strictly less than their rank over $\mathbb{R}$ \cite{dSL}.  
We will show here that the same can happen for tensors with rational entries. In particular,
H\aa stad's result for tensor rank over $\mathbb{Q}$ does not directly apply to $\mathbb{R}$ and $\mathbb{C}$. Nevertheless, H\aa stad's proof shows, as we explain below in Theorem~\ref{rank_nphard_thm}, that tensor rank remains NP-hard over both $\mathbb{R}$ and $\mathbb{C}$.

\begin{proof}[of Theorem~\ref{rankQRthm}]
We explicitly construct a rational tensor $\mathcal{A}$ with $\operatorname{rank}_{\mathbb{R}}(\mathcal{A}) < \operatorname{rank}_{\mathbb{Q}}(\mathcal{A})$. Let $\mathbf{x} = [1,0]^{\top}$ and $\mathbf{y} = [0,1]^{\top}$.  First observe that
\[ \overline{\mathbf{z}}\otimes\mathbf{z}\otimes\overline{\mathbf{z}}
+\mathbf{z}\otimes\overline{\mathbf{z}}\otimes\mathbf{z}=2\mathbf{x}\otimes\mathbf{x}\otimes\mathbf{x}-4\mathbf{y}\otimes\mathbf{y}\otimes\mathbf{x}+4\mathbf{y}\otimes \mathbf{x}\otimes\mathbf{y}-4\mathbf{x}\otimes\mathbf{y}\otimes\mathbf{y} \in \mathbb{Q}^{2 \times 2 \times 2},\]
where $\mathbf{z} =\mathbf{x}+\sqrt{2}\mathbf{y}$ and  $\overline{\mathbf{z}}=\mathbf{x}-\sqrt{2}\mathbf{y}$. Let $\mathcal{A}$ be this tensor; thus, $\operatorname{rank}
\nolimits_{\mathbb{R}}(\mathcal{A})\leq2$. We claim that $\operatorname{rank}
\nolimits_{\mathbb{Q}}(\mathcal{A})>2$. Suppose not and that there exist $\mathbf{u}
_{i}=[a_{i},b_{i}]^{\top}$, $\mathbf{v}_{i}=[c_{i},d_{i}]^{\top}\in
\mathbb{Q}^{2}$, $i=1,2,3$, with
\begin{equation}\label{eq:rank2}
\mathcal{A}=\mathbf{u}_{1}\otimes\mathbf{u}_{2}\otimes\mathbf{u}_{3}+\mathbf{v}
_{1}\otimes\mathbf{v}_{2}\otimes\mathbf{v}_{3}.
\end{equation}
Identity \eqref{eq:rank2} gives eight equations found in \eqref{rationaltensoreqs}.  Thus, by Lemma~\ref{rankdroplemma}, $\operatorname{rank}
\nolimits_{\mathbb{Q}}(\mathcal{A})>2$.
\end{proof}

\begin{lemma}\label{rankdroplemma}
The system of $8$ equations in $12$ unknowns:
\begin{equation}\label{rationaltensoreqs}
\begin{split}
a_{1}a_{2}a_{3}+c_{1}c_{2}c_{3}=\  &  2,\ a_{1}a_{3}b_{2}+c_{1}c_{3}d_{2}=0, \ a_{2}a_{3}b_{1}+c_{2}c_{3}d_{1}=  0, \\
a_{3}b_{1}b_{2}+c_{3}d_{1}d_{2}= \ & -4, \ a_{1}a_{2}b_{3}+c_{1}c_{2}d_{3}= 0,\ a_{1}b_{2}b_{3}+c_{1}d_{2}d_{3}=-4, \\
a_{2}b_{1}b_{3}+c_{2}d_{3}d_{1}=\  &  4,\ b_{1}b_{2}b_{3}+d_{1}d_{2}d_{3}=0
\end{split}
\end{equation}
has no solution in rational numbers $a_{1},a_{2},a_{3}$, $b_{1},b_{2},b_{3}$, $c_{1},c_{2},c_{3}$, and $d_{1},d_{2},d_{3}$.  
\end{lemma}
\begin{proof}
One may verify in exact symbolic arithmetic (see the Appendix) that the following two equations are polynomial consequences of \eqref{rationaltensoreqs}:
\[2c_{2}^{2}-d_{2}^{2} = 0 \quad \text{and} \quad c_{1}d_{2}d_{3}-2 = 0.\]  Since no rational number when squared equals $2$, the first equation implies that any rational solution to \eqref{rationaltensoreqs} must have $c_{2}=d_{2}=0$, an impossibility by the second.  Thus, no rational solutions to \eqref{rationaltensoreqs} exist.
\end{proof}

We now provide an addendum to H\aa stad's result.

\begin{theorem}\label{rank_nphard_thm} %[H\aa stad] LH: We already made it sufficiently clear that it's based entirely on his proof.
Tensor rank is NP-hard over fields $\mathbb{F} \supseteq \mathbb{Q}$; in particular, over $\mathbb{R}$, $\mathbb{C}$.
\end{theorem}

\begin{proof}
\cite{Haa} contains a recipe for encoding any given 3SAT Boolean formula in $n$ variables and $m$ clauses 
as a tensor $\mathcal{A} \in \mathbb{Q}^{(n+2m+2)\times3n\times(3n+m)}$ with the property that the 3SAT formula is satisfiable if and only if $\operatorname{rank}_{\mathbb{F}}(\mathcal{A})\leq4n+2m$.
The recipe defines $(n+2m+2)\times3n$ matrices for $i=1,\dots,n$, $j=1,\dots,m$:
\begin{itemize}\renewcommand{\labelitemi}{$\bullet$}
\item $V_{i}$: $1$ in $(1,2i-1)$ and $(2,2i)$, $0$ elsewhere;

\item $S_{i}$: $1$ in $(1,2n+i)$, $0$ elsewhere;

\item $M_{i}$: $1$ in $(1,2i-1)$, $(2+i,2i)$, and $(2+i,2n+i)$, $0$
elsewhere;

\item $C_{j}$: depends on $j$th clause (more involved) and has entries $0, \pm 1$;
\end{itemize}
and the $3$-tensor
\[
\mathcal{A}=[V_{1},\dots,V_{n},S_{1},\dots,S_{n},M_{1},\dots,M_{n},C_{1},\dots,C_{m}] \in  \mathbb{Q}^{(n+2m+2)\times3n\times(3n+m)}.
\]
Observe that the matrices $V_{i},S_{i},M_{i},C_{j}$ are defined with  $-1, 0, 1$ and that the argument in \cite{Haa} uses only  
the field axioms. In particular, it holds for any $\mathbb{F} \supseteq \mathbb{Q}$.
\end{proof}

\section{Symmetric tensor eigenvalue is NP-hard}\label{symm_tensor_eig}
%%%%%%%%%

It is natural to ask if the eigenvalue problem remains NP-hard if the general
tensor in \eqref{l2eig} or \eqref{l3eig} is replaced by a symmetric one.

\begin{problem}
[Symmetric eigenvalue]\label{prob:STEV} 
Given a symmetric tensor $\mathcal{S}\in\mathbb{Q}^{n\times n\times n}$ and $d \in \mathbb Q$, decide if $\lambda \in \mathbb{Q}(\sqrt{d})$ is an eigenvalue with \eqref{l2eig} or \eqref{l3eig}  for some $\mathbf{0} \neq \mathbf{x}\in\mathbb{R}^{n}$.
\end{problem}

As will become clear later, inputs $\lambda$ in this problem may take values in $\mathbb{Q}(\sqrt{d}) = \{a+b\sqrt{d}: a,b \in \mathbb Q\}$ for any particular $d \in \mathbb Q$.  This is not a problem since such numbers can be represented by rationals $(a,b,d)$ and arithmetic in $\mathbb{Q}(\sqrt{d})$  is rational arithmetic.

Let $G=(V,E)$ be a simple graph with vertices $V=\{1,\dots,v\}$ and edges $E$. 
A subset of vertices $S\subseteq V$ is said to be \textit{stable} (or \textit{independent}) if
$\{i,j\}\notin E$ for all $i,j\in S$, and the \textit{stability number}
$\alpha(G)$ is defined to be the size of a largest stable set. This quantity is closely related to the clique number that we encountered in Section~\ref{sec:SVP}; namely, $\alpha(G) = \omega(\overline{G})$, where $\overline{G}$ is the dual graph of $G$.
Nesterov has used the Motzkin--Straus Theorem to prove an
analogue for the stability number \cite{nesterov03,deKlerk}\footnote{We caution the reader that the equivalent of \eqref{eqNest} in \cite{nesterov03} is missing a factor of $1/\sqrt{2}$; the mistake was reproduced in \cite{deKlerk}.}.

\begin{theorem}[Nesterov]\label{thm:nesterov}
Let $G=(V,E)$ on $v$ vertices have stability number
$\alpha(G)$. Let $n=v + \frac{v(v-1)}{2}$ and $\mathbb{S}^{n-1}=\{(\mathbf{x},\mathbf{y})\in\mathbb{R}
^{v}\times \mathbb{R}^{v(v-1)/2}:\lVert\mathbf{x}\rVert_{2}^{2}+\lVert\mathbf{y}\rVert_{2}^{2}=1\}$.  Then,
\begin{equation}
\sqrt{1-\frac{1}{\alpha(G)}}=3\sqrt{\frac{3}{2}}\cdot\max_{(\mathbf{x},\mathbf{y}
)\in\mathbb{S}^{n-1}}\sum\nolimits_{i<j,\,\{i,j\}\notin E}x_{i}x_{j}y_{ij}.
\label{eqNest}%
\end{equation}
\end{theorem}

We will deduce the NP-hardness of symmetric tensor eigenvalue from the observation that every homogeneous cubic polynomial corresponds to a symmetric $3$-tensor whose
maximum eigenvalue is the maximum on the right-hand side of \eqref{eqNest}.
%Recall that any homogeneous cubic polynomial $F$ in $d$ variables
%$\mathbf{z}=(z_{1},\dots,z_{d})$ is usually expressed as%
%\[
%F(\mathbf{z})=\sum_{i,j,k=1}^{d}a_{ijk}z_{1}^{p_{1}(i,j,k)}\dots
%z_{d}^{p_{d}(i,j,k)},
%\]
%where for any $(i,j,k)$ and $l\in\{1,\dots,d\}$, the number $p_{l}(i,j,k)$ counts the
%times index $l$ appears in $(i,j,k)$. Note that $p_{1}%
%(i,j,k)+\dots+p_{d}(i,j,k)=3$. While the coefficient tensor $\llbracket
%a_{ijk}\rrbracket\in\mathbb{R}^{d\times d\times d}$ may not be symmetric, we
%may always rewrite $F$ as%
%\[
%F(\mathbf{z})=\sum_{i,j,k=1}^{n}s_{ijk}z_{1}^{p_{1}(i,j,k)}\dots
%z_{d}^{p_{d}(i,j,k)}=\mathcal{S}(\mathbf{z},\mathbf{z},\mathbf{z}),
%\]
%where $s_{ijk}=(a_{ijk}+a_{ikj}+a_{jik}+a_{jki}+a_{kij}+a_{kji})/6$ and
%$\mathcal{S}=\llbracket s_{ijk}\rrbracket\in\mathbb{R}^{d\times d\times d}$ is
%a symmetric tensor.

% Let $G=(V,E)$ be a simple graph on vertices $V=\{1,\dots,m\}$ and $n = m + m(m-1)/2$. 
For any $1\le i < j < k \le v$, let
\[
s_{ijk} =
\begin{cases}
1 & 1 \le i < j \le v,\; k = v+\varphi(i, j),\; \{ i,j\} \not\in E,\\
0 & \text{otherwise},
\end{cases}
\]
where $\varphi(i,j) = (i-1)v -i(i-1)/2 + j-i$ is a lexicographical enumeration of the $v(v-1)/2$ pairs $i < j$.  For the other cases $i < k < j, \dots, k < j < i$, we set 
\[
s_{ijk} = s_{ikj} = s_{jik} = s_{jki} = s_{kij} = s_{kji}.
\]
Also, whenever two or more indices are equal, we put $s_{ijk} =0$. This defines a symmetric tensor $\mathcal{S} = \llbracket s_{ijk} \rrbracket \in \mathbb{R}^{n \times n \times n}$ with the property that
\[
\mathcal{S}(\mathbf{z},\mathbf{z},\mathbf{z})=6  \sum\nolimits_{i<j,\,\{i,j\}\notin E}x_{i}x_{j}y_{ij},
\]
where $\mathbf{z}=(\mathbf{x},\mathbf{y})\in \mathbb{R}^{v} \times \mathbb{R}^{v(v-1)/2} = \mathbb{R}^n$.

Since $\lambda=\max_{\lVert
\mathbf{z}\rVert_{2}=1}\mathcal{S}(\mathbf{z},\mathbf{z},\mathbf{z})$ is
necessarily a stationary value of $\mathcal{S}(\mathbf{z},\mathbf{z}
,\mathbf{z})$ constrained to $\lVert\mathbf{z}\rVert_{2}=1$, it is  an
$\ell^{2}$-eigenvalue of $\mathcal{S}$.  Moreover, Nesterov's Theorem implies
\[
\lambda = 2 \sqrt{\frac{2}{3}\left(1-\frac{1}{\alpha(G)}\right)}.
\]
Given a graph $G$ and $l \in \mathbb{N}$, deciding whether $\alpha(G) = l$ is equivalent to deciding whether $\omega(\overline{G}) = l$. Hence the former is an NP-complete problem given that the latter is an NP-complete problem \cite{Karp}, and we are led to the following.

\begin{theorem}\label{symm_evalue_nphard_thm}
Symmetric tensor eigenvalue over $\mathbb{R}$ is NP-hard.
\end{theorem}

\begin{proof}
For $l=v,\dots,1$, we check whether $\lambda_l = 2 \sqrt{\frac{2}{3}\left(1-\frac{1}{l}\right)}$
is an $\ell^{2}$-eigenvalue of $\mathcal{S}$. Since
$\alpha(G)\in\{1,\dots,v\}$, at most $v$ answers to Problem~\ref{prob:STEV}
with inputs $\lambda_{v},\dots,\lambda_{1}$ (taken in decreasing order of
magnitude so that the first eigenvalue identified would be the maximum) would reveal its value. Hence,  Problem~\ref{prob:STEV} is NP-hard.
\end{proof}

\begin{remark}
Here we have implicitly used the assumption that inputs to the symmetric tensor eigenvalue decision problem are allowed to be quadratic irrationalities of the form $2\sqrt{\frac{2}{3}\left(1-\frac{1}{l}\right)}$ for each integer $l \in \{1,\ldots,v\}$.
\end{remark}

It is also known that $\alpha(G)$ is NP-hard to approximate\footnote{H\aa stad's original result required $\mathit{NP} \ne \mathit{ZPP}$, but Zuckerman weakened this to $\mathit{P} \ne \mathit{NP}$.} \cite{Haa1,zuckerman2006linear} (see also the survey \cite{deKlerk}).
\begin{theorem}[H\aa stad, Zuckerman]\label{hastad_clique_thm}
It is impossible to approximate $\alpha(G)$ in polynomial time to within a factor of $v^{1-\varepsilon}$ for any $\varepsilon>0$, unless $\mathit{P}=\mathit{NP}$.
\end{theorem}

%Here $\mathit{ZPP}$ denotes the complexity class of decision problems solvable in \textit{zero-error probabilistic polynomial} time --- in the sense that there exists a probabilistic Turing machine that returns the (exact, correct) answer but whose running time is polynomial in expectation for every input. 

Theorem~\ref{hastad_clique_thm} implies the following inapproximability result for symmetric
tensors.

\begin{corollary}\label{cor:npzpp}
Unless $\mathit{P}=\mathit{NP}$, there is no FPTAS for approximating the largest $\ell^{2}
$-eigenvalue of a real symmetric tensor.
\end{corollary}

\section{Symmetric singular value, spectral norm, and rank-$1$ approximation are
NP-hard}\label{sec:symmproblems}

We will deduce from Theorem~\ref{symm_evalue_nphard_thm} and
Corollary~\ref{cor:npzpp} a series of hardness results for symmetric tensors
parallel to earlier ones for the nonsymmetric case. 

We first state Theorem~\ref{thm:banach1} in an alternative form; namely, that the
best rank-$1$ approximation of a symmetric tensor and its best symmetric-rank-$1$ approximation may be chosen to
be the same. Again, while we restrict ourselves to symmetric $3$-tensors, this result holds for symmetric tensors of arbitrary order. \fin
\begin{theorem}
[Banach]\label{thm:banach}Let $\mathcal{S}\in\mathbb{R}^{n\times n\times
n}$ be a symmetric $3$-tensor. Then,
\begin{equation}
\min_{\sigma\geq0,\;\lVert\mathbf{u}\rVert_{2}=\lVert\mathbf{v}\rVert
_{2}=\lVert\mathbf{w}\rVert_{2}=1}\lVert\mathcal{S}-\sigma\mathbf{u}%
\otimes\mathbf{v}\otimes\mathbf{w}\rVert_{F}=\min_{\lambda\geq0,\;\lVert
\mathbf{v}\rVert_{2}=1}\lVert\mathcal{S}-\lambda\mathbf{v}\otimes
\mathbf{v}\otimes\mathbf{v}\rVert_{F}.\label{eq:banach}%
\end{equation}
Furthermore, the optimal $\sigma$ and $\lambda$ may be chosen to be
equal. % and the solution for the \textsc{lhs} may be chosen so that $\mathbf{u}=\mathbf{v}=\mathbf{w}$.
\end{theorem}
\begin{proof}
This result follows from carrying our discussion relating spectral norm, largest singular value, and best
rank-$1$ approximation (for nonsymmetric tensors) in Section~\ref{sec:rank1}
over to the case of symmetric tensors.  This gives
\[
\lambda=\max_{\lVert\mathbf{v}\rVert_{2}=1}\langle\mathcal{S},\mathbf{v}
\otimes\mathbf{v}\otimes\mathbf{v}\rangle=\lVert\mathcal{S}\rVert_{2,2,2},
\]
where $\lambda$ is the optimal solution for the right-hand side of \eqref{eq:banach}  and the last equality is by Theorem~\ref{thm:banach1}.
\end{proof}
 Theorems~\ref{thm:banach1} and \ref{thm:banach}, together with Theorem~\ref{symm_evalue_nphard_thm} and Corollary~\ref{cor:npzpp}, prove:

\begin{theorem}
\label{thm:symmresults}The following problems are all NP-hard over $\mathbb{F} = \mathbb{R}$:
\begin{enumerate}[\upshape (i)]
\item Deciding the largest $\ell^{2}$-singular value or eigenvalue of a symmetric $3$-tensor.
\item Deciding the spectral norm of a symmetric $3$-tensor.
\item Determining the best symmetric rank-$1$ approximation of a symmetric $3$-tensor.
\end{enumerate}
Furthermore, unless $\mathit{P}=\mathit{NP}$, there are no FPTAS for these problems.
%Furthermore, unless $\mathit{NP}=\mathit{ZPP}$, there are no FPTAS for these problems.
\end{theorem}

\begin{proof}
Let $\mathcal{S}\in\mathbb{Q}^{n\times n\times n}$ be a symmetric $3$-tensor.
By Theorem~\ref{thm:banach}, the optimal $\sigma$ in the best rank-$1$
approximation of $\mathcal{S}$ (left-hand side of \eqref{eq:banach}) equals the
optimal $\lambda$ in the best symmetric rank-$1$ approximation of
$\mathcal{S}$ (right-hand side of \eqref{eq:banach}). Since the optimal $\sigma$ is
also the largest $\ell^{2}$-singular value of $\mathcal{S}$ and the optimal
$\lambda$ is the largest $\ell^{2}$-eigenvalue of $\mathcal{S}$, these also coincide.
Note that the optimal $\sigma$ is also equal to the spectral norm
$\lVert\mathcal{S}\rVert_{2,2,2}$. The NP-hardness and non-existence of FPTAS
of problems (i)--(iii) now follow from Theorem~\ref{symm_evalue_nphard_thm} and Corollary~\ref{cor:npzpp}.
\end{proof}

Theorem~\ref{thm:symmresults} answers a question in \cite{BV} about the computational complexity of spectral norm for symmetric tensors. 

\section{Tensor nonnegative definiteness is NP-hard}

There are two senses in which a symmetric $4$-tensor $\mathcal{S}\in\mathbb{R}^{n\times n\times n\times n}$ can be nonnegative definite.
We shall reserve the term \textit{nonnegative definite} to describe $\mathcal{S}$ for which $\mathcal{S}
(\mathbf{x},\mathbf{x},\mathbf{x},\mathbf{x})$ is a nonnegative polynomial;
i.e.,
\begin{equation}
\mathcal{S}(\mathbf{x},\mathbf{x},\mathbf{x},\mathbf{x})=\sum
\nolimits_{i,j,k,l=1}^{n}s_{ijkl}x_{i}x_{j}x_{k}x_{l}\geq0, \quad\text{for all $\mathbf{x}\in\mathbb{R}^{n}$.}\label{eq:psd}
\end{equation}
On the other hand, we say that $\mathcal{S}$ is
\textit{Gramian} if it can be decomposed as a positive combination of rank-$1$ terms:
\begin{equation}\label{eq:gram1}
\mathcal{S} = \sum_{i=1}^r \lambda_i \mathbf{v}_i\otimes\mathbf{v}_i\otimes\mathbf{v}_i\otimes\mathbf{v}_i, \quad \lambda_i > 0, \; \lVert \mathbf{v}_i \rVert_2 = 1;
\end{equation}
or equivalently, if $\mathcal{S}(\mathbf{x},\mathbf{x},\mathbf{x},\mathbf{x})$
can be written as a sum of fourth powers of linear forms:
\begin{equation}
\mathcal{S}(\mathbf{x},\mathbf{x},\mathbf{x},\mathbf{x})=\sum_{i=1}
^{r}(\mathbf{w}_{i}^{\top}\mathbf{x})^{4}. \label{eq:gram2}
\end{equation}
The correspondence between \eqref{eq:gram1} and \eqref{eq:gram2} is to set
$\mathbf{w}_i = \lambda_i^{1/4} \mathbf{v}_i$.
Note that for a symmetric matrix $S\in\mathbb{R}^{n\times n}$, the condition
$\mathbf{x}^{\top}S\mathbf{x}\geq0$ for all $\mathbf{x}\in\mathbb{R}^{n}$ and
the condition $S=B^{\top}B$ for some matrix $B$ (i.e.,~$S$ is a Gram matrix)
are equivalent characterizations of the positive semidefiniteness of $S$. For
tensors of even order $d>2$, condition \eqref{eq:gram1} is strictly stronger than
\eqref{eq:psd}, but both are valid generalizations of the notion of nonnegative definiteness. In fact, the cone of nonnegative definite tensors as defined by \eqref{eq:psd} and the cone of Gramian tensors as defined by \eqref{eq:gram1} are dual \cite{Rez}.

We consider tensors of order $4$  because any symmetric $3$-tensor $\mathcal{S}
\in\mathbb{R}^{n\times n\times n}$ is \textit{indefinite} since 
$\mathcal{S}(\mathbf{x},\mathbf{x},\mathbf{x})$ can take both positive
and negative values (as $\mathcal{S}(-\mathbf{x},-\mathbf{x},-\mathbf{x})=-\mathcal{S}(\mathbf{x}
,\mathbf{x},\mathbf{x})$). Order-$3$ symmetric tensors are also trivially Gramian since $-\lambda \mathbf{v} \otimes \mathbf{v} \otimes \mathbf{v} =\lambda (-\mathbf{v}) \otimes (- \mathbf{v}) \otimes (-\mathbf{v})$, and thus $\lambda$ may always be chosen to be positive.

We  deduce the NP-hardness of both notions of nonnegative definiteness  from \cite{MK} and \cite{DG}. Let $A\in\mathbb{R}^{n \times n}$ be symmetric. The matrix $A$ is said to be \textit{copositive} if $A(\mathbf{y},\mathbf{y})=\mathbf{y}^{\top}A\mathbf{y}\geq0$ for all $\mathbf{y}\geq\mathbf{0}$, and $A$ is said to be \textit{completely positive} if $A = BB^\top$ for some $B \in \mathbb{R}^{n \times r}$ with $B \ge 0$ (i.e., all entries nonnegative). The set of copositive matrices in $\mathbb{R}^{n \times n}$ is easily seen to be a cone and it is dual to the set of completely positive matrices, which is also a cone. Duality here means that $\operatorname{tr}(A_1 A_2) \ge 0$ for any copositive $A_1$ and completely positive $A_2$.

\begin{theorem}[Murty--Kabadi, Dickinson--Gijben]Deciding copositivity and complete positivity are both NP-hard.
\end{theorem}

Let $A=[a_{ij}]\in\mathbb{R}^{n \times n}$ be a symmetric matrix. Consider the symmetric
$4$-tensor $\mathcal{S}=\llbracket s_{ijkl} \rrbracket\in\mathbb{R}^{n\times n \times n \times n}$ defined by
\[
s_{ijkl}=%
\begin{cases}
a_{ij} & \text{if }i=k\text{ and }j=l\text{,}\\
0 & \text{otherwise.}%
\end{cases}
\]
Note that $\mathcal{S}$ is symmetric since $a_{ij} = a_{ji}$. Now $\mathcal{S}$ is nonnegative definite if and only if
\[
\mathcal{S}(\mathbf{x},\mathbf{x},\mathbf{x},\mathbf{x})=\sum_{i,j,k,l=1}^{n}s_{ijkl}x_{i}x_{j}x_{k}x_{l}=\sum_{i,j=1}%
^{n}a_{ij}x_{i}^{2}x_{j}^{2}\geq0,
\]
for all $\mathbf{x}\in\mathbb{R}^{n}$, which is in turn true if and only if $\mathbf{y}^{\top}A\mathbf{y}\geq0$
for all $\mathbf{y}\geq\mathbf{0}$ ($y_{i}=x_{i}^{2}$), i.e., $A$ is copositive. On the other hand, the tensor $\mathcal{S}$ is Grammian if and only if
\[
s_{ijkl} = \sum_{p=1}^r w_{ip} w_{jp} w_{kp} w_{lp}, \quad i,j,k,l=1,\dots,n,
\]
for some $r \in \mathbb{N}$, which is to say that
\[
a_{ij} = \sum_{p=1}^r w_{ip}^2 w_{jp}^2 , \quad i,j=1,\dots,n,
\]
or $A = B B^\top$ where $B = [w_{ip}^2] \in \mathbb{R}^{n \times r}$ has nonnegative entries; i.e., $A$ is completely positive. Hence we have deduced the following.
\begin{theorem}\label{thm:nonneg}
Deciding whether a symmetric $4$-tensor is nonnegative definite is NP-hard. Deciding whether a symmetric $4$-tensor is Grammian is also NP-hard.
\end{theorem}
The first statement in Theorem \ref{thm:nonneg} has appeared before in various contexts, most notably as the problem of deciding the nonnegativity of a quartic; see, for example, \cite{AOPT}. It follows from the second statement and  \eqref{eq:gram2} that deciding whether a quartic polynomial is a sum of fourth powers of linear forms is NP-hard.

The reader may wonder about a third common characterization of nonnegative definiteness: A symmetric matrix is nonnegative definite if and only if all its eigenvalues are nonnegative. It turns out that for tensors this does not yield a different characterization of nonnegative definiteness. The exact same equivalence is true for symmetric tensors with our definition of eigenvalues in Section~\ref{sec:EVP} \cite[Theorem 5]{Qi}:
\begin{theorem}[Qi]
The following are equivalent for a symmetric $\mathcal{S}\in\mathbb{R}^{n\times n\times n\times n}$: 

\begin{enumerate}[\upshape (i)]
\item $\mathcal{S}$ is nonnegative definite.

\item All $\ell^{2}$-eigenvalues of $\mathcal{S}$ are nonnegative.

\item All $\ell^{4}$-eigenvalues of $\mathcal{S}$ are nonnegative.
\end{enumerate}
\end{theorem}

The result derives from the fact that $\ell^{2}$- and $\ell^{4}$-eigenvalues are 
Lagrange multipliers. With this observation, the following is an immediate corollary of
Theorem~\ref{thm:nonneg}.

\begin{corollary}
Determining the signature, i.e., the signs of the real eigenvalues,
of symmetric $4$-tensors is NP-hard.
\end{corollary}

\section{Bivariate matrix functions are undecidable}\label{sec:biv}

While we have focused almost exclusively on \textit{complexity} in this article, we would like to add a word more about \textit{computability} in this antepenultimate section.

There has been much interest in computing various functions of a matrix \cite{higham2}. The best-known example is probably the matrix exponential $\exp : \mathbb{C}^{n\times n} \to \mathbb{C}^{n\times n}$, which is important in many applications. Recenty, there have been attempts to generalize such studies to functions of two matrices, notably \cite{Kressner}, whose approach we shall adopt. For bivariate polynomials, $f(x,y) = \sum_{i,j=0}^d a_{ij}x^i y^j \in \mathbb{C}[x,y]$, and a pair of \textit{commuting} matrices $A_1, A_2 \in \mathbb{C}^{n \times n}$, we define $f(A_1, A_2)$ as the matrix function $f(A_1,A_2) : \mathbb{C}^{n \times n} \to \mathbb{C}^{n \times n} $ given by:
\begin{equation}\label{eq:biv1}
f(A_1,A_2) (X) := \sum_{i,j=0}^d a_{ij} A_1^i X A_2^j, \ \ \ X \in \mathbb{C}^{n \times n}.
\end{equation}
Note that a pair of matrices may be regarded as a $3$-tensor $\mathcal{A} = [A_1, A_2] \in \mathbb{C}^{n \times n \times 2}$, where  $A_1, A_2$ are the two ``slices" of $\mathcal{A}$. 

If, however, we do not assume that $A_1, A_2$ be a commuting pair (these are rare, the set of commuting pairs has measure zero in $ \mathbb{C}^{n \times n \times 2}$), then \eqref{eq:biv1} is inadequate and we need to include all possible noncommutative monomials. Consider the simplest case in which $f(A_1,A_2)$ consists of a single monic monomial and $X = I$, the identity matrix, but we  no longer assume that $A_1, A_2$ necessarily commute. Then,
\begin{equation}\label{eq:biv2}
f(A_1,A_2) (I) = A_1^{m_1} A_2^{n_1} A_1^{m_2} A_2^{n_2} \cdots A_1^{m_r} A_2^{n_r},
\end{equation}
where $m_1,\dots, m_r$, $n_1,\dots, n_r$, and $r$ are nonnegative integers. If $A_1$ and $A_2$ commute, and if we write $m = m_1 + \dots + m_r$ and $n = n_1 + \dots + n_r$, then \eqref{eq:biv2} reduces to \eqref{eq:biv1} with $a_{mn} =1$ and all other $a_{ij} = 0$; i.e., $f(A_1, A_2)(I) = A_1^m A_2^n$.
%In general \eqref{eq:biv2} cannot be further simplified. 
Consider the following seemingly innocuous problem concerning \eqref{eq:biv2}.
\begin{problem}[Bivariate matrix monomials]\label{prob:biv}
Given $\mathcal{A} = [A_1,A_2] \in \mathbb{C}^{n \times n \times 2}$, is there a bivariate monic monomial function $f$ such that $f(A_1,A_2)(I) = 0$? 
\end{problem}
This is in fact the matrix mortality problem for two matrices \cite{HHH}, and as a consequence, we have the following.
\begin{proposition}[Halava--Harju--Hirvensalo]\label{prop:biv}
Problem~\ref{prob:biv} is undecidable when $n > 20$.
\end{proposition}
This fact is to be contrasted with its univariate equivalent: Given $A \in \mathbb{C}^{n \times n} $, a monic monomial $f$ exists with  $f(A) = 0$ if and only if $A$ is nilpotent.

%%%%%%%%%
\section{Open problems}\label{sec:openprobs}
%%%%%%%%%

We have tried to be thorough in our list of tensor problems, but there are some that we have not studied. We state a few of them here as open problems.  The first involve the hyperdeterminant.  Let $\mathbb{Q}[\mathrm{i}]:=\{a + b\mathrm{i} \in \mathbb{C} : a,b \in \mathbb{Q} \} $ be the field of \textit{Gaussian rationals}. 
\begin{conjecture}\label{hyp_approx_conjec}
Let $l,m,n \in \mathbb{N}$ satisfy  GKZ condition \eqref{eq:gkz}:
\[
l \le m+n - 1, \quad m \le l + n - 1, \quad n \le l + m - 1,
\]
and let $\operatorname{Det}_{l,m,n}$ be the $l \times m \times n$ hyperdeterminant. 
\begin{enumerate}[\upshape (i)]
\item\label{item:det0} Deciding $\operatorname{Det}_{l,m,n}(\mathcal{A}) =0$ is an NP-hard decision problem for   $\mathcal{A} \in \mathbb{Q}[\mathrm{i}]^{l \times m \times n}$.
\item\label{item:cond} It is NP-hard to decide or approximate the value for inputs $\mathcal{A} \in \mathbb{Q}[\mathrm{i}]^{l \times m \times n}$ of:
\begin{equation}\label{eq:cond}
\min_{\operatorname{Det}_{l,m,n}(\mathcal{X}) =0} \lVert \mathcal{A} - \mathcal{X}\rVert_{2,2,2}.
\end{equation}
\item\label{item:mag} Evaluating the magnitude of $\operatorname{Det}_{l,m,n}(\mathcal{A})$ is \#P-hard for inputs $\mathcal{A} \in \{ 0,1\}^{l \times m \times n}$.
\item\label{item:circuit} The homogeneous polynomial $\operatorname{Det}_{l,m,n}$ is VNP-hard to compute.
\item\label{item:cube} All statements above remain true in the special case $l = m =n$.
\end{enumerate}
\end{conjecture}
%Here we regard $n$ as the input size of the problem, note that $l$ and $m$ are constrained by the condition \eqref{eq:gkz}; in parrticular, $l$ and $m$ must both be $\Theta(n)$ asymptotically. 

We remark that resolutions to these conjectures are likely to have implications for applications. For instance, in quantum computing, the magnitude of the hyperdeterminant in \eqref{item:mag} is the \textit{concurrence}, a measure of the amount of entanglement in a quantum system \cite{HillWoo}, and the hyperdeterminant in question would be one satisfying \eqref{item:cube}. The  decision problem in (\ref{item:det0}) is also key to deciding whether a system of multilinear equations has a nontrivial solution, as we have seen from Section~\ref{bilineareqs}.

The optimization problem \eqref{eq:cond} in (\ref{item:cond}) defines a notion of \textit{condition number} for $3$-tensors. Note that for a  non-singular  matrix $A \in \mathbb{C}^{n \times n}$, the corresponding problem for (\ref{item:cond}) has solution given by  its inverse $X = A^{-1}$ \cite[Theorem~6.5]{higham1}:
\[
\min_{\det(X) = 0} \lVert A - X \rVert_{2,2} =\lVert A^{-1} \rVert_{2,2}^{-1}.
\]
%The norm of the input divided by the optimum value gives the condition number:
%\[
%\frac{\lVert A\rVert_F}{\lVert A^{\dagger}\rVert_F^{-1}} = \kappa_F(A).
%\]
In this case, the optimum value normalized by the  spectral norm of the input gives the reciprocal of the condition number: 
\[
\frac{\lVert A^{-1}\rVert_{2,2}^{-1}}{\lVert A\rVert_{2,2}} = \kappa (A)^{-1}.
\]
Thus, for a nonzero $\mathcal{A} \in \mathbb{C}^{l \times m \times n}$,  we expect the spectral norm $\lVert \mathcal{A} \rVert_{2,2,2}$  divided by the optimum value of  \eqref{item:cond} to yield an analogue of  condition number for the tensor. Conjecture~\ref{hyp_approx_conjec} is then that the  condition number of a tensor is NP-hard to compute.

%For a nonzero $\mathcal{A} \in \mathbb{C}^{l \times m \times n}$, the optimum value of \eqref{item:mag} divided by $\lVert \mathcal{A} \rVert_F$ then yields the analogue of (the reciprocal of) condition number for the tensor. Conjecture~\ref{hyp_approx_conjec} then says that the condition number of a tensor is NP-hard to compute.

%[*** Lek, here we could use a better transition.  ***]
One reason for our belief in the intractability of problems involving the hyperdeterminant is that checking whether the general multivariate resultant vanishes for a system of $n$ polynomials in $n$ variables is known to be NP-hard over any field \cite{GrenetKoiranPortier}. Theorem~\ref{zerosingvaluethm}  strengthens this result  by saying that these polynomials may be chosen to be bilinear forms. Conjecture~\ref{hyp_approx_conjec}(i) further specializes by stating that these  forms \eqref{GKZbilinear} can be associated with a $3$-tensor satisfying GKZ condition \eqref{eq:gkz}. 

Another motivation for our conjectures is that the hyperdeterminant is a complex object; for instance, the $2\times 2 \times 2 \times 2$-hyperdeterminant has  2.9 million monomials \cite{huggins2008}. Of course, this does not  force the intractability of the problems above.  For instance,  the determinant and permanent of an $n \times n$ matrix have $n!$ terms, but one is efficiently computable while the other is \#P-complete \cite{Valiant79}.

In Section~\ref{sec:Rank}, we explained that tensor rank is NP-hard over any extension field $\mathbb{F}$ of $\mathbb{Q}$,
but we did not investigate the corresponding questions for the symmetric rank of a symmetric tensor (Definition~\ref{srank}). We conjecture the following.
\begin{conjecture}\label{sym_rank_conj}
Let $\mathbb{F}$ be an extension field of $\mathbb{Q}$. Let $\mathcal{S} \in \mathbb{Q}^{n \times n \times n}$  be a symmetric $3$-tensor   and $r \in \mathbb{N}$. Deciding if  $\operatorname{srank}_{\mathbb F} (\mathcal{S}) \le r$  is NP-hard.
\end{conjecture}

While tensor rank is NP-hard over $\mathbb{Q}$, we suspect that it is also undecidable.

\begin{conjecture}\label{rankQ_conj}
Tensor and symmetric tensor rank over $\mathbb{Q}$ are undecidable.
\end{conjecture}
%
%We also conjecture a Gaussian rationals analogue to Theorem~\ref{rankQRthm}.
%
%\begin{conjecture}\label{rankQi_conj}
%There is a tensor $\mathcal{B} \in \mathbb{Q}[\mathrm{i}]^{2 \times 2 \times 2}$ with $\operatorname{rank}_{\mathbb{C}}(\mathcal{B}) < \operatorname{rank}_{\mathbb{Q}[\mathrm{i}]}(\mathcal{B})$. 
%\end{conjecture}
%
%We will show that the same $\mathcal{A}$ has $\operatorname{rank}\nolimits_{\mathbb{C}}(\mathcal{A}) \le \operatorname{rank}\nolimits_{\mathbb{Q}[\mathrm{i}]}(\mathcal{A})$.
%Clearly $\operatorname{rank}\nolimits_{\mathbb{C}}(\mathcal{A})\leq
%\operatorname{rank}\nolimits_{\mathbb{R}}(\mathcal{A})\leq2$. We claim that
%$\operatorname{rank}\nolimits_{\mathbb{Q}[\mathrm{i}]}(\mathcal{A})>2$.
%Suppose not and that there exist $\mathbf{u}_{j}=[a_{j}+\mathrm{i}\alpha
%_{j},b_{j}+\mathrm{i}\beta_{j}]^{\top}$, $\mathbf{v}_{j}=[c_{j}+\mathrm{i}%
%\gamma_{j},d_{j}+\mathrm{i}\delta_{j}]^{\top}\in\mathbb{Q}[\mathrm{i}]^{2}$,
%$j=1,2,3$, satisfying \eqref{eq:rank2}. [*** Chris: Could your program do 16 variables and 24 equations? ***]

We have shown that deciding the existence of an exact solution to a system
of bilinear equations \eqref{tquadeq} is NP-hard. There are two closely related
problems: (i)\ when the equalities in \eqref{tquadeq} are replaced by
inequalities and (ii) when we seek an approximate least-squares solution to
\eqref{tquadeq}.
These lead to multilinear variants of linear programming and linear least
squares. We state them formally here.

\begin{conjecture}[Bilinear programming feasibility]\label{prob:BP}
Let $A_{k},B_{k},C_{k}\in\mathbb{Q}^{n\times
n}$ and $\alpha_{k},\beta_{k},\gamma_{k}\in\mathbb{Q}$ for each $k=1,\dots,n$.
It is NP-hard to decide if  inequalities:
\begin{equation}%
\begin{cases}
\mathbf{y}^{\top}A_{i}\mathbf{z}\leq\alpha_{k}, & i=1,\dots,n;\\
\mathbf{x}^{\top}B_{j}\mathbf{z}\leq\beta_{k}, & j=1,\dots,n;\\
\mathbf{x}^{\top}C_{k}\mathbf{y}\leq\gamma_{k}, & k=1,\dots,n;
\end{cases}
\label{eq:BP}%
\end{equation}
define a nonempty subset of $\mathbb{R}^{n}$ (resp. $\mathbb{C}^{n}$).
\end{conjecture}

\begin{conjecture}[Bilinear least squares]\label{prob:BLS}
Given $3n$ coefficient matrices
$A_{k},B_{k},C_{k}\in\mathbb{Q}^{n\times n}$ and $\alpha_{k},\beta_{k}
,\gamma_{k}\in\mathbb{Q}$, $k=1,\dots,n$, the bilinear least squares problem: 
\begin{equation}
\min_{\mathbf{x},\mathbf{y},\mathbf{z}\in\mathbb{R}^{n}}\sum\nolimits_{k=1}
^{n}(\mathbf{x}^{\top}A_{k}\mathbf{y}-\alpha_{k})^{2}+(\mathbf{y}^{\top}
B_{k}\mathbf{z}-\beta_{k})^{2}+(\mathbf{z}^{\top}C_{k}\mathbf{x}-\gamma
_{k})^{2}\label{eq:BLS}
\end{equation}
is NP-hard to approximate. 
\end{conjecture}

%[*** WE NEED to explain what do we mean by NP-hard optimization problem...***] LH: added

Unlike the situation of \eqref{tquadeq}, where $\mathbf{x} = \mathbf{y} = \mathbf{z} = \mathbf{0}$ is considered a trivial solution, we can no longer  disregard an all-zero solution in \eqref{eq:BP} or \eqref{eq:BLS}. Consequently, the problem of deciding whether a homogeneous system of bilinear equations \eqref{tquadeq} has a \textit{nonzero} solution is not a special case of \eqref{eq:BP} or \eqref{eq:BLS}. 

\section{Conclusion}

Although this paper argues that most tensor problems are NP-hard, we should not be discouraged in our search for solutions to them.  For instance, while computations with Gr\"{o}bner bases are doubly exponential in the worst case \cite[pp.~400]{yap2000fundamental}, they nonetheless proved useful for Theorem~\ref{rankQRthm}. 
%(and likely will for a proof of Conjecture~\ref{rankQiconj})
It is also important to note that NP-hardness is an asymptotic property; e.g., it applies to scenarios where tensor size $n$ goes to infinity.  Nonetheless, in many applications, $n$ is usually fixed and often  small; e.g., $n = 2: \lvert 0\rangle, \rvert 1\rangle$ (qubits, \cite{miyake1}), $n = 3: x,y,z$ (spatial coordinates, \cite{SchultzS}), $n = 4: A,C,G,T$ (DNA nucleobases, \cite{AllmanR}), etc. For example, while Theorem~\ref{thm:nonneg} gives an NP-hardness result for general $n$, the case $n = 3$ has a tractable convex formulation \cite{LS}.

Bernd Sturmfels once made the remark to us that ``All interesting problems are NP-hard.'' In light of this, we would like to view our article as evidence that most tensor problems are interesting.

% Appendix
\appendix
\section*{APPENDIX}
\setcounter{section}{1}

We give  here the complete details for the proof of Lemma~\ref{rankdroplemma}, which was key to proving Theorem~\ref{rankQRthm}. We used the symbolic computing software\footnote{One can use commercially available Maple, \url{http://www.maplesoft.com/products/maple}, or Mathematica, \url{http://www.wolfram.com/mathematica}; free \textsc{singular}, \url{http://www.singular.uni-kl.de}, Macaulay 2, \url{http://www.math.uiuc.edu/Macaulay2}, or Sage, \url{http://www.sagemath.org}. For numerical packages, see Bertini, \url{http://www.nd.edu/~sommese/bertini} and PHCpack, \url{http://homepages.math.uic.edu/~jan/download.html}.} \textsc{singular}, and in particular the function \textbf{lift}  to find the  polynomials $H_1,\dots,H_8$ and $G_1,\dots,G_8$ below.  Define three sets of polynomials:{\small
\begin{gather*}
\begin{aligned}
F_{1}  & :=a_{1}a_{2}a_{3}+c_{1}c_{2}c_{3}-2,\ F_{2}  :=a_{1}a_{3}b_{2}+c_{1}c_{3}d_{2},
\ F_{3}  :=a_{2}a_{3}b_{1}+c_{2}c_{3}d_{1},\\
F_{4}  & :=a_{3}b_{1}b_{2}+c_{3}d_{1}d_{2}+4,\ F_{5} :=a_{1}a_{2}b_{3}+c_{1}c_{2}d_{3},\ 
F_{6} :=a_{1}b_{2}b_{3}+c_{1}d_{2}d_{3}+4,\\
F_{7}  & :=a_{2}b_{1}b_{3}+c_{2}d_{1}d_{3}-4, \ F_{8}  :=b_{1}b_{2}b_{3}+d_{1}d_{2}d_{3}.
\end{aligned}\\
\begin{aligned}
G_{1}  & := -\tfrac{1}{8}b_{1}b_{2}b_{3}c_{2}d_{2}+\tfrac{1}{8}a_{2}b_{1}b_{3}d_{2}^{2}, \ G_{2}  := \tfrac{1}{8}b_{1}b_{2}b_{3}c_{2}^2-\tfrac{1}{8}a_{2}b_{1}b_{3}c_2 d_{2}, \ G_{3} :=-\frac{1}{2}c_2d_2, \ G_{4} :=\frac{1}{2}c_2^2,\\
G_{5}  & :=\frac{1}{8}a_3 b_1 b_2 c_2 d_2-\frac{1}{8}a_2 a_3 b_1 d_2^2, \  G_{6}  :=-\frac{1}{8}a_3b_1b_2c_2^2+\frac{1}{8}a_2a_3 b_1c_2 d_2, \ G_{7}   :=-\tfrac{1}{2}c_{1}^{2}, \\ G_{8} & :=\tfrac{1}{8}a_{2}a_{3}b_{1}c_{1}^{2}-\tfrac{1}{8}a_{1}a_{2}a_{3}c_{1}d_{1}.
\end{aligned}\\
\begin{aligned}
H_{1}  & :=0, \ H_{2}  :=-\tfrac{1}{32}b_{1}b_{2}b_{3}c_{2}d_{1}d_{3}+\tfrac{1}{32}a_{2}b_{1}b_{3}d_{1}d_{2}d_{3}, 
H_{3}  :=\tfrac{1}{32}b_{1}b_{2}b_{3}c_{1}d_{2}d_{3}-\tfrac{1}{32}a_{1}b_{2}b_{3}d_{1}d_{2}d_{3}, \\
H_{4} &:=\tfrac{1}{32}a_{1}b_{2}b_{3}c_{2}d_{1}d_{3}-\tfrac{1}{32}a_{2}b_{1}b_{3}c_{1}d_{2}d_{3}, \ H_{5}  :=-\tfrac{1}{8}b_{1}b_{2}b_{3}, \ H_{6}  :=\tfrac{1}{2}, H_{7}  :=\tfrac{1}{8}a_{1}b_{2}b_{3}-\tfrac{1}{8}c_{1}d_{2}d_{3}, \\  H_{8} & :=\tfrac{1}{8}c_{1}c_{2}d_{3}.
\end{aligned}
\end{gather*}}%
Both $g=2c_{2}^{2}-d_{2}^{2}$ and $h=c_{1}d_{2}d_{3}-2$ are polynomial combinations of $F_{1},\dots,F_{8}$:
\begin{equation}\label{gh_eqn}
g = \sum_{k=1}^{8}F_{k}G_{k} \quad \text{and} \quad h = \sum_{k=1}^{8}F_{k}H_{k}.
\end{equation}
Thus, if a rational point makes $F_1,\dots,F_8$ all zero, then both $g$ and $h$ must also vanish on it.  We remark that expressions such as \eqref{gh_eqn} are far from unique.

%\appendixhead{ZHOU}

% Acknowledgments
\begin{acks}
We thank Jan Draisma, Dorit Hochbaum, Jiawang Nie,  Pablo Parrilo, Sasha Razborov, Steve Smale, Bernd Sturmfels, Steve Vavasis, and Shuzhong Zhang for helpful comments that enhanced the quality of this work.  We also would like to thank the anonymous referees for comments that also improved this paper.
\end{acks}

% Bibliography
\bibliographystyle{acmsmall}

% History dates
%\received{April 2012}{March 2009}{June 2009}
\end{document}